\theoremstyle{plain}
\newtheorem{theorem}{Theorem}
\newtheorem{proposition}[theorem]{Proposition}
\newtheorem{lemma}[theorem]{Lemma}
\newtheorem{corollary}[theorem]{Corollary}
\theoremstyle{definition}
\newtheorem{example}[theorem]{Example}
\newtheorem{remark}[theorem]{Remark}
\newtheorem*{convention}{Convention}
\newcommand{\onto}{\twoheadrightarrow}
\newcommand{\into}{\hookrightarrow}
\renewcommand{\theta}{\vartheta}
\renewcommand{\phi}{\varphi}
\newcommand{\emp}{\varepsilon}
\newcommand{\FA}{\widehat{F}_{\mathbf{A}}}
\newcommand{\FAf}{F_{\mathbf{A}}}
\newcommand{\TA}{T_A}
\newcommand{\TAf}{T^\mathrm{fin}_A}
\newcommand{\LT}{\mathrm{LT}}
\newcommand{\AL}{\Lambda(A)}
\newcommand{\BL}{\Lambda(B)}
\newcommand{\Cont}{\mathcal{C}}
\newcommand{\Reg}{\mathrm{Reg}}
\newcommand{\RT}{\mathrm{RT}}
\newcommand{\CT}{\mathrm{CT}}
\newcommand{\J}{\mathcal{J}}
\newcommand{\R}{\mathcal{R}}
\renewcommand{\L}{\mathcal{L}}
\newcommand{\Stab}{\mathrm{Stab}}
\newcommand{\sem}[1]{[\![#1]\!]}
\DeclareMathOperator{\St}{\mathrm{Spec}}
\DeclareMathOperator{\K}{\mathcal{K}}
\newcommand{\be}[1]{({<}#1)} 
\newcommand{\af}[1]{({>}#1)} 
\newcommand{\op}{\mathrm{op}}
\renewcommand{\hat}{\widehat}
\newcommand{\x}{\overline{x}}
\numberwithin{theorem}{section}
\newcommand{\labeldist}{.5}
\newcommand\drawwordn[3]{
\begin{scope}[shift={#1}]
    \draw [thick] (0,0) -- (#2,0);
    \draw [thick] (0,-.25) -- (0,.25);
    \draw [thick] (#2,-.25) -- (#2,.25);
    \node [above=\labeldist] at (.5*#2,0) {#3};
\end{scope}
}
\newcommand\drawwordnr[3]{
\begin{scope}[shift={#1}]
    \draw [thick] (0,0) -- (#2,0);
    \draw [thick] (0,-.25) -- (0,.25);
    \draw [thick] (#2,-.25) -- (#2,.25);
    \node [right=\labeldist] at (#2,0) {#3};
\end{scope}
}
\newcommand\drawwordo[4]{
\begin{scope}[shift={#1}]
    \draw [thick,#4] (0,0) -- (#2,0);
    \draw [thick,#4] (0,-.25) -- (0,.25);
    \draw [thick,#4] (#2,-.25) -- (#2,.25);
    \node [above=\labeldist,#4] at (.5*#2,0) {#3};
\end{scope}
}
\newcommand{\drawdecor}[4]{
  \draw [decorate,line width=1pt,#4] #1 -- node[below=0.1] {#3} #2;
}
\newcommand\drawsubtri[4]{
 \draw[fill=black] #1 circle (2pt);
  \node[above=\labeldist] at #1 {#4};
  \draw [thick,dotted] #1 -- #2;
  \draw [thick,dotted] #1 -- #3;
}
\title{Pro-aperiodic monoids and model theory}
\author{Samuel J. v. Gool and Benjamin Steinberg}
\begin{document}

\begin{abstract}
We apply Stone duality and model theory to study the structure theory of free pro-aperiodic monoids. Stone duality implies that elements of the free pro-aperiodic monoid may be viewed as elementary equivalence classes of pseudofinite words. Model theory provides us with saturated words in each such class, i.e., words in which all possible factorizations are realized. We give several applications of this new approach, including a solution to the word problem for $\omega$-terms that avoids using McCammond's normal forms, as well as new proofs and extensions of other structural results concerning free pro-aperiodic monoids. 
\end{abstract}

\maketitle

\section*{Introduction}
The pseudovariety of aperiodic monoids has long played a fundamental role in finite semigroup theory and automata theory.  The famous Sch\"utz\-en\-ber\-ger theorem~\cite{Sch1965} proved that the aperiodic monoids recognize precisely the star-free languages; this class was later shown by McNaughton and Papert~\cite{McNPap1971} to be the class of first-order definable languages; also see Straubing's book \cite{Str1994}.  From the point of view of semigroup decomposition theory, aperiodic monoids are important because they form one of the basic building blocks, along with the pseudovariety of finite groups, of all finite semigroups according to the Krohn-Rhodes decomposition theorem~\cite{KR65}.  Thus aperiodic monoids play a prime role in two of the oldest open problems in automata theory: the dot-depth problem \cite{Brz1976,PlaZei2014,Pin2016} and the Krohn-Rhodes complexity problem~\cite{KR68}.

The importance of profinite monoids in automata theory and finite semigroup theory was first highlighted, starting in the late eighties, by Almeida; see his influential book~\cite{Almeida:book}, and the more recent monograph~\cite{RS2009}, for more background.  Much of the early work focused on viewing elements of relatively free profinite monoids as limits of finite words, but early on Almeida made the important observation that the Boolean algebra of clopen subsets of the free pro-$\mathbf V$ monoid can be identified with the Boolean algebra of $\mathbf V$-recognizable languages~\cite{Almeida:book}; in other words, the free pro-$\mathbf V$ monoid is the Stone dual of the Boolean algebra of $\mathbf V$-recognizable languages.  In recent years, a number of authors have made explicit use of duality theory to redevelop and expand the foundations of the profinite approach to studying varieties of languages in the sense of Eilenberg~\cite{Eilenberg76}; most closely related  to our work here is the work of Gehrke, Pin et al. \cite{GGP2008,GKP2016,GPR2016,Geh2016}, Boja\'{n}czyk \cite{Boj2015}, and Rhodes and the second-named author \cite[Chapter~8]{RS2009}.

Until the present work, there has been little attempt to use Stone duality to study the structure of relatively free profinite monoids.  For example, very basic structural properties of the free pro-aperiodic monoid were deduced in~\cite{RS01} using the closure of the pseudovariety of aperiodic monoids under certain expansions.  Deeper structural properties involving Green's relations on stabilizers and the equidivisibility property, which suggested that combinatorics on words can be extended in the limit to free pro-aperiodic monoids, were studied in~\cite{HRS2010} (see also~\cite{AC2009}).  Further structural properties were deduced in~\cite{Steinberg2010}.

The strongest result about the free pro-aperiodic monoid to date is McCammond's solution to the word problem for $\omega$-terms over aperiodic monoids (where we recall that $x^{\omega}$ is the idempotent power of $x$ in a finite semigroup).  McCammond gave an identity basis for the $\omega$-terms that are equal in all finite aperiodic monoids and provided normal forms to solve the word problem.  His proof is technically difficult and relies on his solution to the word problem for free Burnside semigroups of sufficiently high exponent~\cite{MAC91}, which is an even more challenging result.  For this reason, a number of authors have tried to come up with other approaches to this problem.  Almeida, Costa and Zeitoun~\cite{ACZ2015} gave a different proof of the correctness of McCammond's solution to the word problem that avoided using free Burnside semigroups~\cite{MAC91}.  Huschenbett and Kufleitner~\cite{HK2014} provided an alternate, more efficient algorithm for solving the word problem for $\omega$-terms over aperiodic monoids but they required McCammond's identity basis to prove the correctness of their algorithm. Their approach, like ours, makes use of the connection between first-order logic and aperiodic monoids.   Almeida, Costa and Zeitoun have since proved a number of structural results concerning free pro-aperiodic monoids, making use of McCammond normal forms and limiting techniques~\cite{ACZ09a,ACZ14}. 
 More recently, Almeida, Klíma and Kunc showed that the $\omega$-word problem is decidable for the free pro-objects in the concatenation hierarchy for aperiodic monoids \cite{AKK2016}.

In this paper, we use Stone duality and saturated models to obtain most of the known structural results about free pro-aperiodic monoids, as well as some new ones.  The crucial idea is that the identification of aperiodic-recognizable languages with first-order definable languages means that we can view elements of the free pro-aperiodic monoid as ultrafilters in the Boolean algebra of first-order definable languages, which in turn can be identified with the complete theories containing the theory of finite words.  By the compactness theorem of logic, this can be in turn identified with elementary equivalence classes of models of the theory of finite words, so-called pseudofinite words.  These models can be concatenated in a natural way, allowing us to recover the algebraic structure as well as the topological structure from this approach.

Models of the theory of finite words over an alphabet $A$ are examples of $A$-words (words over a discrete linear order with endpoints).  Each elementary equivalence class of $A$-words contains special models, called $\omega$-saturated models, that can realize, in a precise sense, its factorizations up to elementary equivalence.  This allows one to prove many of the structural results for  free pro-aperiodic monoids, like equidivisibility, in almost the same way as they are proved for free monoids.

One of our key technical results is that if we substitute $\omega$-saturated words into an $\omega$-saturated word, then the result is an $\omega$-saturated word.  This allows one to analyze the factors of the image of an element under a substitution in a very similar fashion to the case of free monoids.  Using this result, we can explicitly construct countable $\omega$-saturated models of $\omega$-terms (which turn out to be the same models considered in~\cite{HK2014}).  This allows us to give simple, transparent proofs of many of the results of~\cite{ACZ09a,ACZ2015}, such as the fact that factors of $\omega$-terms are $\omega$-terms, as well as providing a new correctness proof for the word problem algorithm of~\cite{HK2014}, avoiding entirely the machinery of McCammond~\cite{MAC91}.  Moreover, many of our results extend beyond just $\omega$-terms.  For example, in~\cite{ACZ2015} it is shown that factors of an $\omega$-term are well-quasi-ordered by divisibility and the finite factors form a regular language.  By showing that these properties behave well under substitution, we can both easily recover this result for $\omega$-terms and extend it beyond them.

In~\cite{HRS2010}, structural results about free pro-aperiodic monoids were used in order to give clean proofs of the decidability of aperiodic idempotent-pointlikes~\cite{Henckell04} and stable pairs~\cite{Henckell10} (i.e., one-vertex inevitable graphs in the sense of Almeida~\cite{Almeida2000}).  It is hoped that our approach will lead to simpler proofs of other algorithmic properties, as well as proofs of those needed to decide Krohn-Rhodes complexity~\cite{HRS2012}.
Further work would be to extend the theory to monadic second order logic to describe absolutely free profinite monoids.  This would, in particular, require a language rich enough to talk about $p$-adic numbers and free profinite groups!

\section*{Outline} 
In Section~\ref{sec:logic}, we recall the relevant notions from logic that we need, in particular, Ehrenfeucht-Fraiss\'e games. In Section~\ref{sec:proAP}, we recall how Stone duality can be used to relate the free pro-aperiodic monoid to logic. In Section~\ref{sec:subst} we show how substitutions and concatenations are interpreted from the logic point of view. In Section~\ref{sec:TAfaxiom} we make a brief excursion into the question of axiomatizability of the theory of finite words. In Section~\ref{sec:saturation} we recall the crucial notion of saturation from model theory, adapted to our specific context of words, and use topology to prove, as our main technical result, that saturated models behave well with respect to substitutions. In Section~\ref{sec:structure} we give a few basic applications of the theory developed thus far to the structure theory of the free pro-aperiodic monoid. In Section~\ref{sec:wpomega} we show how our results simplify the solution of the word problem for $\omega$-terms. In Section~\ref{sec:factorsomega} we use our results to analyze factor orders in our pro-aperiodic monoids. We prove in particular that several properties are stable under substitutions (and hence in particular under concatenation and $\omega$-power), including the properties that the $\J$-ordering on the set of factors of an element is a well-quasi-order and that the set of factors of an element is regular.

\section{Logic on words}\label{sec:logic}
\subsection{Basic definitions}
\emph{Regular languages} of finite words are those subsets of a finitely generated free monoid that are saturated with respect to a finite index monoid congruence. Regular languages can alternatively be characterized using finite automata or regular expressions, see, e.g., \cite{Str1994}.

Another important characterization of regular languages uses \emph{monadic second-order logic} \cite{Buc1960}. In this paper we only consider the \emph{first-order logic} fragment. We recall the necessary definitions. 
In the logical approach, a finite word $w = a_1 \cdots a_n$ over a finite alphabet $A$ is viewed as a \emph{relational structure}, in the sense of model theory, consisting of a binary relation $<$, capturing the ordering on positions in the word, and a unary relation $P_a$, for each $a\in A$, capturing the positions carrying the letter $a$. For example, the word $W = aba$ over the alphabet $\{a,b,c\}$ is based on the finite set $\{1, 2, 3\}$ with $<$ the order $1 < 2 < 3$, $P_a = \{1,3\}$, $P_b = \{2\}$, $P_c = \emptyset$.
Viewing words in this way, we can use logical sentences to define sets of words. For example, the first-order sentence $\exists x \exists y (P_a(x) \wedge P_b(y) \wedge x < y)$ defines the regular language $A^* a A^* b A^*$. Formally, we adopt the following definitions and notations, cf., e.g., Straubing's book \cite{Str1994} for more details.

Let $A$ be a finite alphabet. A \emph{word} over $A$, or \emph{$A$-word}, is a tuple $W = (|W|,<^W,(P_a^W)_{a \in A})$, where
\begin{itemize}[leftmargin=*]
\item $|W|$ is a set,
\item the relation $<^W$ is a discrete linear order with endpoints on $|W|$, i.e., there are a first and a last element, every element except the last has a unique immediate $<^W$-successor and every element except the first has a unique immediate $<^W$-predecessor,
\item $(P_a^W)_{a \in A}$ is a partition of $|W|$.
\end{itemize}
If $W$ is an $A$-word and $i \in |W|$, we write $W(i)$ for the unique letter $a$ such that $i \in P_a^W$.

An $A$-word is called \emph{finite} if $|W|$ is finite. We denote by $\emp$ the unique $A$-word with $|W| = \emptyset$.

An \emph{atomic formula} is an expression of the form $x < y$ or $P_a(x)$, where, here and in what follows, $x$ and $y$ denote first-order variables.
A \emph{first-order formula} is an expression built up from atomic formulas by inductively applying the connectives $\wedge$, $\vee$, $\neg$, $\to$, $\exists x$, and $\forall x$.
An occurrence of a variable in a formula $\phi$ is called \emph{bound} if it is under the scope of a quantification, the occurrence is called \emph{free} otherwise. If $\overline{x} = x_1,\dots,x_n$ is a tuple of first-order variables, the notation $\phi(\overline{x})$ indicates that all the variables that occur freely in $\phi$ lie in $\overline{x}$. A \emph{sentence} is a formula without free occurrences of variables.

If $W$ is a word, an \emph{assignment} of a tuple of first-order variables $\overline{x}$ in $W$ consists of a tuple $\overline{x}^W$ of the same length as $\overline{x}$, where each element of the tuple $\overline{x}^W$ is an element of $|W|$.

For a first-order formula $\phi(\overline{x})$, $W$ a word and $\overline{x}^W$ an assignment of $\overline{x}$ in $W$, we inductively define the relation $\models$ as follows:
\begin{itemize}[leftmargin=*]
\item $W, \overline{x}^W \models x_i < x_j$ iff $x_i^W <^W x_j^W$,
\item $W, \overline{x}^W \models P_a(x)$ iff $x^W \in P_a^W$,
\item $W, \overline{x}^W \models \neg \phi$ iff it is not the case that $W, \overline{x}^W \models \phi$,
\item $W, \overline{x}^W \models \phi \wedge \psi$ iff $W, \overline{x}^W \models \phi$ and $W, \overline{x}^W \models \psi$,
\item $W, \overline{x}^W \models \phi \vee \psi$ iff $W, \overline{x}^W \models \phi$ or $W, \overline{x}^W \models \psi$;
\item  $W, \overline{x}^W \models\exists y \phi$ iff there exists $y^W \in |W|$ with $W, \overline{x}^Wy^W \models \phi$,
\item  $W, \overline{x}^W \models\forall y \phi$ iff for all $y^W \in |W|$: $W, \overline{x}^Wy^W \models \phi$.
\end{itemize}
The \emph{language defined by an FO-sentence} $\phi$ is the set $L_\phi$ of finite words $W$ such that $W \models \phi$.



The first-order definable languages form a strict subclass of the regular languages. For example, the language $(AA)^*$, consisting of all words of even length is regular, but not first-order definable. Indeed, recall \cite{Sch1965,McNPap1971} that a language $L$ of finite $A$-words is first-order definable if and only if  the syntactic monoid of $L$ is finite and aperiodic, i.e., does not contain non-trivial subgroups. This fundamental result is the starting point for our perspective on the free pro-aperiodic monoid, cf. Section~\ref{sec:proAP} below.
%

\subsection{Quantifier depth and EF games}
A very useful classification of first-order formulas uses quantifier depth.
The \emph{quantifier depth} of a formula $\phi$ is the maximum nesting depth of quantifiers in $\phi$. 
If $U$ and $V$ are $A$-words, we write $U \equiv_k V$ if $U$ and $V$ satisfy exactly the same first-order sentences $\phi$ of quantifier depth at most $k$; in this case, we say that $U$ and $V$ are \emph{elementarily equivalent up to quantifier depth $k$} or simply \emph{$k$-equivalent}. We write $U \equiv V$ if $U \equiv_k V$ for all $k$, i.e., $U$ and $V$ satisfy exactly the same first-order sentences; in this case, we say that $U$ and $V$ are \emph{elementarily equivalent}, or simply \emph{equivalent}. Importantly, for each $k \geq 0$, there are only \emph{finitely} many $k$-equivalence classes, and each such class is definable by a first-order sentence. Indeed, each $k$-equivalence class is defined by a \emph{game-normal sentence of depth $k$}, and the set $\Theta_{0,k}$ of such sentences is finite \cite[Thm.~3.3.2]{Hod1993}, see also \cite[Sec.~VI.1]{Str1994}.

Below we will prove a first fundamental lemma about $k$-equivalence for words, which imports the well-known technique of Ehrenfeucht-Fraiss\'e games (EF games) into our context. We briefly summarize the idea of EF games, referring to \cite[Ch.~3]{Hod1993} for more information. Let $U$ and $V$ be relational structures. The \emph{$k$-round EF game on $(U,V)$} has two players, $\forall$ and $\exists$. In the $i^\mathrm{th}$ round of the game, $\forall$ begins by choosing a position in one of the two structures, denoted $u_i$ or $v_i$, and $\exists$ responds by choosing a position in the other structure, denoted $v_i$ or $u_i$. After $k$ rounds, the player $\exists$ \emph{wins} if 
$u_i \mapsto v_i$ is an isomorphism between the substructure $\overline{u}$ of $U$ and the substructure $\overline{v}$ of $V$.
The main result about EF games is that $U \equiv_{k} V$ if and only if  player $\exists$ has a \emph{winning strategy} in the $k$-round EF game played on $U$ and $V$.

We introduce a few useful notations. Let $U$ be an $A$-word and $i \in |U|$, we write $U\be{i}$ for the $A$-word obtained by restricting $U$ to the set of positions strictly less than $i$ (the \emph{ray} left of $i$), and similarly we define $U\af{i}$ (the \emph{ray} right of $i$). If $i$ and $j$ are positions in $U$, we write $U(i,j)$ for the \emph{open interval}, $U[i,j]$ for the \emph{closed interval} of $U$ between these positions, and \emph{half-open intervals} $U[i,j)$ and $U(i,j]$. Since the order on an $A$-word is discrete with endpoints, any (half-)open interval or ray can be written as a closed interval; we use this fact without mention in what follows. Recall that $U(i)$ denotes the unique letter $a$ such that $i \in P_a^U$. Finally, if $V$ is another $A$-word and $j \in |V|$, we say that the position $i$ in $U$ \emph{$k$-corresponds} to $j$ in $V$ if $U\be{i} \equiv_k V\be{j}$, $U(i) = V(j)$, and $U\af{i} \equiv_k V\af{j}$.

\begin{lemma}\label{lem:EFgames}
Let $U$ and $V$ be $A$-words. For all $k \geq 0$, $U \equiv_{k+1} V$ if and only if  for every $i \in |U|$, there exists $j \in |V|$ that $k$-corresponds to $i$, and for every $j \in |V|$, there exists $i \in |U|$ that $k$-corresponds to $j$.
\end{lemma}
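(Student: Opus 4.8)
The plan is to reduce the statement to the Ehrenfeucht--Fraïssé characterization recalled above: $U \equiv_{k+1} V$ holds exactly when $\exists$ has a winning strategy in the $(k+1)$-round EF game on $(U,V)$, and similarly $U\be{i} \equiv_k V\be{j}$ and $U\af{i} \equiv_k V\af{j}$ correspond to $\exists$ winning the $k$-round games on the respective rays. The lemma is then essentially an unfolding of the \emph{first round} of the $(k+1)$-round game, combined with the observation that once a matched pair $i \leftrightarrow j$ has been pebbled in a word (a colored linear order), the remainder of the game decomposes into two independent games, one on the left rays $(U\be{i},V\be{j})$ and one on the right rays $(U\af{i},V\af{j})$, because no partial isomorphism can send a position below $i$ to a position above $j$.

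For the backward direction I would assume the corresponding condition and let $\forall$ open the $(k+1)$-round game; by symmetry say $\forall$ plays $i \in |U|$, the other case using the ``vice versa'' clause. Then $\exists$ answers with a $j$ that $k$-corresponds to $i$. By hypothesis $\exists$ has winning strategies $\sigma_L,\sigma_R$ in the $k$-round games on the left and right rays, and for the remaining $k$ rounds $\exists$ plays the ``split'' strategy: answer a move strictly left of the current pebble pair using $\sigma_L$, and a move strictly right using $\sigma_R$. The point to verify is that the final map is a partial isomorphism: intra-ray order and letters are preserved by $\sigma_L,\sigma_R$, the cross-ray order is automatic from the split, the letters at $i,j$ agree because $U(i)=V(j)$, and the relations of each played position to $i,j$ themselves are forced by which ray it lies in.

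For the forward direction I would fix $i\in|U|$, let $\forall$ open the game by playing $i$, and take $j\in|V|$ to be $\exists$'s winning response. Invoking the back-and-forth (Fraïssé) formulation of winning strategies, the one-element map $i\mapsto j$ is already a partial isomorphism, so $U(i)=V(j)$. To obtain $U\be{i}\equiv_k V\be{j}$, I would produce a winning $\exists$-strategy on the left rays by restricting $\exists$'s $(k+1)$-round strategy after its first move $i\mapsto j$: every left-ray move of $\forall$ is fed into the big game, where order-preservation forces the answer onto the correct side of the pebble, so the play stays within the left ray and its restriction is a partial isomorphism after $k$ rounds. The symmetric simulation gives $U\af{i}\equiv_k V\af{j}$, and the statement for every $j\in|V|$ follows by letting $\forall$ open in $V$ instead.

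The step I expect to be the main obstacle is making the decomposition across the pebble pair $i\leftrightarrow j$ airtight: justifying that $\exists$'s answers never cross the pebble and that the two ray-strategies can run in parallel without interference. The cleanest way to do this is to phrase the entire argument in terms of back-and-forth systems $I_0 \subseteq \cdots \subseteq I_{k+1}$ of partial isomorphisms rather than informal strategies, since a partial isomorphism of a linear order automatically respects the partition of positions into those below and those above a fixed matched pair. One then checks that the forth and back properties on $(U,V)$ restrict to forth and back properties on each pair of rays (forward direction) and conversely glue from the two rays back to the whole word (backward direction), with the no-crossing phenomenon built into the definition of partial isomorphism rather than argued by hand.
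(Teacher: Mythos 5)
Your proposal is correct and follows essentially the same route as the paper's proof: unfold the first round of the $(k+1)$-round EF game via the $k$-correspondence, and split the remaining $k$ rounds into independent games on the two pairs of rays, the key point being that a partial isomorphism of linear orders cannot send a position on one side of the pebbled pair to the other side. The paper's version is terser but identical in substance; it does explicitly add the clause that if $\forall$ replays the pebbled position $i$ (resp.\ $j$) in a later round then $\exists$ answers with its match, a degenerate case your ``strictly left / strictly right'' split should also mention.
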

\begin{proof}
For the left-to-right implication, suppose that $U \equiv_{k+1} V$ and $i \in |U|$. Let $j \in |V|$ be the element chosen by player $\exists$ according to her winning strategy in the $(k+1)$-round EF game on $(U,V)$ if player $\forall$ chooses $i$ in $U$ as his opening move. Then certainly $U(i) = V(j)$, and moreover player $\exists$ can win both $k$-round EF games on $(U\be{i},V\be{j})$ and on $(U\af{i},V\af{j})$ by following her winning strategy for the $(k+1)$-round EF game on $(U,V)$. Thus the position $j$ $k$-corresponds to $i$, as required. This suffices, by symmetry. 
For the right-to-left implication, suppose that every $i$ in $U$ $k$-corresponds to some $j$ in $V$ and vice versa. Then player $\exists$ can win the $(k+1)$-round EF game on $(U,V)$, as follows. If $\forall$ chooses a position $i$ in $U$ in the first round, $\exists$ responds by choosing a position $j$ in $V$ $k$-corresponding to $i$, and vice versa. In the remaining $k$ rounds, $\exists$ follows her strategy for either $(U\be{i},V\be{j})$ or $(U\af{i},V\af{j})$, or, if $\forall$ plays at $i$ in $U$ (respectively, $j$ in $V$), then  $\exists$ plays at $j$ in $V$ (respectively, $i$ in $U$). The resulting strategy is winning, because both strategies for $(U\be{i},V\be{j})$ and $(U\af{i},V\af{j})$ are winning.
\end{proof}

An immediate consequence of this lemma is a form of cancellability of first and last letters up to $\equiv_k$.

\begin{corollary}\label{c:cancel}
Let $a\in A$ and let $U,V$ be $A$-words.  Suppose that $k\geq 0$ and $aU\equiv_{k+1} aV$.  Then $U\equiv_{k} V$.  Dually, if $Ua\equiv_{k+1} Va$, then $U\equiv_{k}V$.  In particular, if $aU\equiv aV$, then $U\equiv V$ and, similarly, if $Ua\equiv Va$, then $U\equiv V$.
\end{corollary}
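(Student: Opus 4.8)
The plan is to derive this corollary directly from Lemma~\ref{lem:EFgames}, which characterizes $\equiv_{k+1}$ in terms of the existence of $k$-corresponding positions. The key observation is that prepending a fixed letter $a$ to a word produces a new word whose first position carries $a$ and whose right ray is the original word, so that $k$-correspondence at that first position relates the two original words up to $\equiv_k$.

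Concretely, suppose $aU \equiv_{k+1} aV$. Let $i_0$ denote the first position of $aU$; by construction $aU(i_0) = a$ and $(aU)\af{i_0} = U$, while the left ray $(aU)\be{i_0}$ is the empty word $\emp$. By Lemma~\ref{lem:EFgames} applied to $i_0 \in |aU|$, there exists a position $j$ in $aV$ that $k$-corresponds to $i_0$, meaning $(aU)\be{i_0} \equiv_k (aV)\be{j}$, $aU(i_0) = aV(j)$, and $(aU)\af{i_0} \equiv_k (aV)\af{j}$. First I would argue that this $j$ must be the \emph{first} position of $aV$: the condition $(aU)\be{i_0} = \emp \equiv_k (aV)\be{j}$ forces $(aV)\be{j}$ to be $k$-equivalent to the empty word, and since the empty word is the unique word satisfying the depth-$1$ sentence asserting nonexistence of any position (valid for $k \geq 0$, so in particular the left ray has no positions), $j$ is forced to be first. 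Hence $(aV)\af{j} = V$, and the third correspondence condition reads $U \equiv_k V$, as desired. The dual statement, for $Ua \equiv_{k+1} Va$, follows by the symmetric argument applied to the last position, using the right-ray condition in Lemma~\ref{lem:EFgames}.

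For the final claims about full elementary equivalence, I would simply note that $aU \equiv aV$ means $aU \equiv_{k+1} aV$ for every $k \geq 0$, so the quantified version gives $U \equiv_k V$ for all $k$, i.e., $U \equiv V$; the same reasoning handles $Ua \equiv Va$.

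The main obstacle I anticipate is the step identifying $j$ as the first position of $aV$. One must be careful that $k$-equivalence to the empty word genuinely pins down emptiness: the subtlety is that $(aV)\be{j}$ being $k$-equivalent to $\emp$ does not a priori mean it \emph{is} $\emp$, but since the sentence $\neg \exists x\, (x = x)$ — or in our signature $\neg \exists x\, (x < x \vee \neg(x<x))$, a sentence of quantifier depth $1$ detecting the presence of any position — distinguishes $\emp$ from every nonempty word already at depth $1 \leq k+1$, and the left ray comparison is only at depth $k$, I would instead phrase the argument so that the relevant distinguishing happens within the $\equiv_{k+1}$ comparison of $aU$ and $aV$ itself rather than within the $\equiv_k$ comparison of the left rays. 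An alternative, cleaner route avoiding this delicacy is to observe that player $\exists$'s winning strategy in the $(k+1)$-round game on $(aU, aV)$, when $\forall$ opens by playing the first position of $aU$, must respond with a position in $aV$ whose left ray is order-isomorphic-free of any earlier correspondent; pushing the strategy forward yields a winning strategy for $\exists$ in the $k$-round game on $(U,V)$, giving $U \equiv_k V$ directly via the EF characterization.
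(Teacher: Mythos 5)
Your argument is essentially the paper's own proof: apply Lemma~\ref{lem:EFgames} to the first position of $aU$, and use the fact that $\emp$ is $k$-equivalent only to itself to force the $k$-corresponding position in $aV$ to be its first position, whence the right-ray clause of $k$-correspondence gives $U\equiv_k V$. The one subtlety you circle around --- that pinning down emptiness requires a depth-$1$ sentence and hence $k\geq 1$ --- is dispatched in the paper simply by noting that for $k=0$ there is nothing to prove, since any two $A$-words are $\equiv_0$-equivalent; with that observation your first, direct phrasing already goes through and the alternative game-pushing route is unnecessary.
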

\begin{proof}
If $k=0$, there is nothing to prove. Assume that $k\geq 1$.  Since $\varepsilon$ is only $k$-equivalent to itself, the position of $aV$ that must $k$-correspond to the first position of $aU$ is the first position.  Thus $U\equiv_k V$. 
\end{proof}

\subsection{Pseudofinite words}
For a finite alphabet $A$, we denote by $\TA$ the (finitely axiomatized) \emph{theory of $A$-words}, that is, the set of first-order sentences deducible from axioms expressing that the order is a discrete linear order with endpoints, and that exactly one letter predicate holds at each position. We further let $\TAf$ denote the \emph{theory of finite $A$-words}, i.e., the set of first-order sentences that are true in all finite $A$-words.
A model of the theory $\TAf$ is called a \emph{pseudofinite $A$-word}.\footnote{This is an instance of the general model-theoretic use of the term `pseudofinite', cf., e.g., \cite{Vaa2003}.}
The theories $\TA$ and $\TAf$ do not coincide in general; in fact, they coincide only if the alphabet $A$ contains a single letter. In this case, both theories $\TA$ and $\TAf$ are  the theory of discrete linear orders with endpoints, with a unary predicate that is true everywhere. The following lemma about the one-letter case is well-known (cf., e.g., \cite[Thm.~IV.2.1]{Str1994}). It can also be proved easily using Lemma~\ref{lem:EFgames}.
\begin{lemma}\label{lem:oneletter}
For every $k \geq 0$, if $U$ and $V$ are $\{a\}$-words of cardinality at least $2^k-1$, then $U \equiv_k V$.\qed
\end{lemma}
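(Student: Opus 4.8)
The plan is to argue by induction on $k$, using the Ehrenfeucht--Fraïssé characterization of Lemma~\ref{lem:EFgames}. Since $A=\{a\}$ has a single letter, an $\{a\}$-word is just a discrete linear order with endpoints, so the clause $U(i)=V(j)$ in the definition of $k$-correspondence is automatic, and the only data attached to a position $i$ is the pair of ray-cardinalities $\bigl(|U\be{i}|,|U\af{i}|\bigr)$. Two key facts will drive the matching: first, any two $\{a\}$-words of the \emph{same} finite cardinality are isomorphic, hence $\equiv_k$; second, by the inductive hypothesis any two $\{a\}$-words of cardinality at least $2^k-1$ are $\equiv_k$. Thus to make a ray of $U$ correspond to a ray of $V$ it suffices either to reproduce its cardinality exactly (useful when the ray is \emph{short}, i.e.\ of size $<2^k-1$) or to keep both rays \emph{long}, i.e.\ of size $\geq 2^k-1$.

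For the base case $k=0$ the threshold is $2^0-1=0$, and all $\{a\}$-words are $\equiv_0$, since the signature has no constants and hence no quantifier-depth-$0$ sentence can distinguish them. For the inductive step, assume the statement for $k$ and let $U,V$ be $\{a\}$-words of cardinality at least $2^{k+1}-1 = 2(2^k-1)+1$. By Lemma~\ref{lem:EFgames} it is enough to find, for each $i\in|U|$, a $j\in|V|$ that $k$-corresponds to $i$ (the reverse direction then follows by symmetry). Fix $i$ and write $p=|U\be{i}|$, $s=|U\af{i}|$, so $p+s=|U|-1\geq 2(2^k-1)$. I would split into cases according to the length of the rays of $i$: if $p\leq 2^k-2$, choose $j$ so that $|V\be{j}|=p$ (the $(p+1)$-st position of $V$); if instead $s\leq 2^k-2$, choose $j$ so that $|V\af{j}|=s$; and if both $p\geq 2^k-1$ and $s\geq 2^k-1$, choose $j$ at position $2^k$ from the left, so that $|V\be{j}|=2^k-1$ and $|V\af{j}|=|V|-2^k\geq 2^k-1$.

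In each case one ray is matched by an equality of cardinalities (giving isomorphic, hence $\equiv_k$, rays) and the other by the inductive hypothesis; Lemma~\ref{lem:EFgames} then yields $U\equiv_{k+1}V$. The step that requires care, and which I expect to be the only real obstacle, is the cardinality bookkeeping: one must check that whenever a ray is short and matched exactly, the complementary rays in \emph{both} $U$ and $V$ automatically clear the threshold $2^k-1$. This is exactly where the value $2^{k+1}-1=2(2^k-1)+1$ is used: if $p\leq 2^k-2$ then $s=|U|-1-p\geq 2(2^k-1)-(2^k-2)=2^k\geq 2^k-1$, and likewise $|V\af{j}|=|V|-1-p\geq 2^k-1$, so both right rays are long and the inductive hypothesis applies (the case $s\leq 2^k-2$ is symmetric). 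Verifying these inequalities is routine arithmetic, so the argument closes.
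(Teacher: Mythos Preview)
Your proposal is correct and follows essentially the same approach as the paper's own (suppressed) proof: induction on $k$, the EF characterization of Lemma~\ref{lem:EFgames}, and the same three-case split according to whether the left ray, the right ray, or neither is short, with the identical arithmetic verification that $2^{k+1}-1$ is exactly the threshold needed for the complementary rays to clear $2^k-1$.
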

In particular, if $A = \{a\}$, then any two infinite models of $\TAf$ are elementarily equivalent. The smallest infinite model of $\TAf$ is $\mathbb{N} + \mathbb{N}^\mathrm{op}$.

As soon as $A$ contains at least two letters, the situation is very different. In particular, there are $A$-words that are not pseudofinite.
\begin{example}\label{exa:TAnotTAfin}
Let $W$ be the word over the alphabet $\{a,b\}$ with underlying order $\mathbb{N} + \mathbb{N}^\op$, where $W(i) = a$ for all $i \in \mathbb{N}$ and $W(i) = b$ for all $i \in \mathbb{N}^\op$; visually, $W$ is the word
\[ aaaa \dots \;\; \dots bbbb.\]
The sentence
\[ \exists x P_a(x) \to \exists x (P_a(x) \wedge \forall y (y > x \to \neg P_a(y))\]
expressing `if there exists an $a$-position, then there exists a last such' is true in every finite $A$-word, and therefore lies in $\TAf$, but it fails to hold in $W$. Thus, $W$ is not pseudofinite.
\end{example}

A useful characterization of pseudofinite words is the following.
\begin{lemma}\label{lem:psfinchar}
An $A$-word $U$ is pseudofinite if and only if  for each $k \geq 0$ there exists a finite $A$-word $U_k$ such that $U \equiv_k U_k$.
\end{lemma}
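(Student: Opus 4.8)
The plan is to prove both implications directly, leaning on the fact recalled in the subsection on quantifier depth: for each $k$ there are only finitely many $k$-equivalence classes, and each is defined by a game-normal sentence of depth $k$, the set $\Theta_{0,k}$ of such sentences being finite. The point is that this lets me replace the infinite theory $\TAf$, in each fixed quantifier depth, by a single sentence pinning down a $k$-equivalence class.

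For the implication from right to left, suppose that for each $k$ there is a finite word $U_k$ with $U \equiv_k U_k$, and let $\phi \in \TAf$ be arbitrary. Let $k$ be the quantifier depth of $\phi$. Since $\phi$ holds in every finite $A$-word, in particular $U_k \models \phi$; since $U \equiv_k U_k$ and $\phi$ has quantifier depth at most $k$, we obtain $U \models \phi$. As $\phi \in \TAf$ was arbitrary, $U \models \TAf$, i.e.\ $U$ is pseudofinite.

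For the converse, suppose $U$ is pseudofinite and fix $k \geq 0$. Let $\theta \in \Theta_{0,k}$ be the unique game-normal sentence of depth $k$ satisfied by $U$, so that $\theta$ defines precisely the $k$-equivalence class of $U$. I claim that some finite word satisfies $\theta$. If not, then $\neg\theta$ would be true in every finite $A$-word, hence $\neg\theta \in \TAf$; but then $U \models \TAf$ would force $U \models \neg\theta$, contradicting $U \models \theta$. So there is a finite word $U_k \models \theta$, and since $\theta$ characterizes the $k$-equivalence class, $U \equiv_k U_k$, as required.

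The argument is short, and the only real content is the appeal to the finite set $\Theta_{0,k}$ of game-normal sentences: this is exactly what lets me name the $k$-equivalence class of $U$ by a single first-order sentence and then transfer the existence of a finite word in that class across the divide between ``true in all finite words'' and ``true in $U$''. I expect no genuine obstacle beyond correctly invoking this classification of $k$-equivalence classes; the proof is in essence a reformulation, organized by quantifier depth, of the definition of $\TAf$ as the set of sentences true in all finite words.
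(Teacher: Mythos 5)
Your proof is correct and follows essentially the same route as the paper: both directions hinge on the finiteness of the set $\Theta_{0,k}$ of game-normal sentences of depth $k$, your right-to-left direction being the one the paper dismisses as ``clear,'' and your left-to-right direction being the contrapositive formulation of the paper's argument (the paper forms the disjunction $\phi_k$ of those game-normal sentences realized by some finite word and notes $\phi_k\in\TAf$, whereas you isolate the single game-normal sentence $\theta$ satisfied by $U$ and observe that $\neg\theta\in\TAf$ would contradict pseudofiniteness). No gap.
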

\begin{proof}
If $U$ is pseudofinite, let $k \geq 0$, and let $\phi_k$ be the disjunction of those game-normal sentences of depth $k$ which define the $k$-equivalence class of a finite word. Then $\phi_k$ lies in $\TAf$, so $U$ satisfies it, and therefore $U$ also satisfies one of the disjuncts. The converse is clear.
\end{proof}
In Section~\ref{sec:TAfaxiom}, we discuss the axiomatizability of the theory of finite $A$-words. In particular, we show that the theory is not finitely axiomatizable, and we give an axiomatization of it using an axiom scheme similar to the one given by Doets \cite{Doe1987}.

\section{Pro-aperiodic monoids}\label{sec:proAP}
A \emph{profinite monoid} is an inverse limit of finite discrete monoids in the category of topological monoids (i.e., monoids whose underlying set is equipped with a topology in which the monoid operation is continuous). Equivalently, a profinite monoid is a topological monoid whose underlying space is a \emph{Boolean} or \emph{Stone space}, i.e., compact, Hausdorff and zero-dimensional. Profinite monoids inherit many properties of finite monoids. Of particular interest to us is the fact that any element $x$ in a profinite monoid has a unique idempotent, denoted $x^\omega$, in its orbit-closure $\overline{\{x^n \ | \ n \geq 1\}}$. A \emph{pro-aperiodic} monoid $M$ is a profinite monoid in which $x^{\omega} = x^{\omega}x$ for all $x \in M$. Equivalently, a pro-aperiodic monoid is an inverse limit of finite aperiodic monoids; here, finite monoids are equipped with the discrete topology, and the inverse limit is taken in the category of topological monoids.

The \emph{free pro-aperiodic monoid} generated by a finite set $A$ is a pro-aperiodic monoid $\FA(A)$ containing $A$ such that any function $f \colon A \to M$, with $M$ a finite aperiodic monoid, extends uniquely to a continuous homomorphism $\overline{f} \colon \FA(A) \to M$, where $M$ is given the discrete topology. The free pro-aperiodic monoid is unique up to topological isomorphism\footnote{Here and in what follows, we use the term `homeomorphism' to indicate an isomorphism in the category of topological spaces, and `topological isomorphism' for an isomorphism in the category of topological monoids.}, and the same extension property still holds if in the previous sentence $M$ is replaced by an arbitrary pro-aperiodic monoid.

We recall some basic definitions from the theory of monoids that are used throughout the paper. Let $M$ be a monoid and $u$, $v$ elements of $M$. Then $u \leq_{\J} v$ means that there exist $x$ and $y$ such that $u = xvy$, and in this case $v$ is called a \emph{factor} of $u$; $u \leq_{\L} v$ means that there exists $x$ such that $u = xv$, and in this case $v$ is called a \emph{suffix} of $u$; $u \leq_{\R} v$ means that there exists $y$ such that $u = vy$, and in this case $v$ is called a \emph{prefix} of $u$. Each of these relations is a quasi-order on $M$; the equivalence relations they induce are denoted $\J$, $\L$ and $\R$, e.g., $u \mathrel{\J} v$ means that $u \leq_\J v$ and $v \leq_\J u$. We will repeatedly use the easily proved facts that any homomorphism preserves Green's relations, and that, if $\alpha_1 \leq_{\L} \alpha_2$ and $\beta_1 \leq_{\R} \beta_2$, then $\alpha_1 u \beta_1 \leq_{\J} \alpha_2 u \beta_2$ for any $u$.
Recall that an element $e$ in a monoid $M$ is called \emph{idempotent} if $ee = e$. The set of idempotent elements in $M$ is denoted by $E(M)$. A \emph{(two-sided) ideal} $I$ in a monoid $M$ is a subset which is closed under multiplication by arbitrary elements from $M$ on both sides, i.e., if $a \in I$ and $u \in M$, then $ua \in I$ and $au \in I$. An ideal is \emph{idempotent} if $I = I^2$, i.e., for every $a \in I$, there exist $x, y \in I$ such that $a = xy$.  An ideal is \emph{prime} if its complement is a submonoid. 

We make use of \emph{Stone duality}, which, we briefly recall, is the dual equivalence between the categories of Boolean algebras and Boolean spaces that takes a Boolean space $X$ to its algebra $\K(X)$ of clopen sets, and a Boolean algebra $B$ to the set of ultrafilters $\St(B)$ of $B$, which is given a Boolean topology by declaring, for each $L \in B$, the set $\widehat{L} := \{x \in \St(B) \ | \ L \in x\}$ to be open. Stone's duality theorem \cite{Sto1936} says that the assignment $L \mapsto \widehat{L}$ is an isomorphism from $B$ to $\K(\St(B))$, and that moreover Boolean algebra homomorphisms $B_1 \to B_2$ are in natural bijection with continuous functions $\St(B_2) \to \St(B_1)$. 

The topological space underlying any profinite monoid is the Stone dual space of some Boolean algebra. For free objects, this Boolean algebra is particularly nice. The following result has a long history and holds in much more generality, but we do not go into this here; for more information, see, e.g., \cite[Sec.~4]{Geh2016} or~\cite[Sec.~3.6]{Almeida:book}.
\begin{theorem}\label{thm:produal}
Let $A$ be a finite alphabet. The Boolean space underlying the free pro-aperiodic monoid $\FA(A)$ is the Stone dual space of the Boolean algebra $\mathrm{Rec}_{\mathbf{A}}(A)$ of aperiodic-recognizable languages of finite $A$-words.
\end{theorem}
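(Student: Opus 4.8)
The plan is to identify the Boolean algebras of which $\FA(A)$ and $\St(\mathrm{Rec}_{\mathbf A}(A))$ are the respective Stone dual spaces, and then invoke Stone's theorem. Since $\FA(A)$ is a Boolean space, it is canonically homeomorphic to $\St(\K(\FA(A)))$, so it suffices to produce a Boolean algebra isomorphism $\K(\FA(A)) \cong \mathrm{Rec}_{\mathbf A}(A)$ and apply the contravariant functor $\St(-)$.

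First I would use the natural monoid homomorphism $\iota \colon A^* \to \FA(A)$ and record that its image is dense; this is immediate from the universal property, since the closed submonoid topologically generated by $A$ must be all of $\FA(A)$. The candidate isomorphism is the map $\Phi \colon \K(\FA(A)) \to \mathcal{P}(A^*)$ given by $K \mapsto \iota^{-1}(K)$. As a restriction of inverse image, $\Phi$ is a homomorphism of Boolean algebras, and it is injective because two clopen sets agreeing on the dense set $\iota(A^*)$ must coincide: if $K_1, K_2$ are clopen then $K_1 \setminus K_2$ is open, and an open set disjoint from a dense set is empty.

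Next I would pin down the image of $\Phi$. For one inclusion I would use that $\FA(A)$ is an inverse limit of finite aperiodic monoids $M$ with continuous projections $\pi_M$, and that by compactness every clopen $K$ has the form $K = \pi_M^{-1}(S)$ for some such $M$ and some $S \subseteq M$; then $\Phi(K) = (\pi_M \iota)^{-1}(S)$ is recognized by the homomorphism $\pi_M \iota \colon A^* \to M$ into a finite aperiodic monoid, hence lies in $\mathrm{Rec}_{\mathbf A}(A)$. For the reverse inclusion, given $L \in \mathrm{Rec}_{\mathbf A}(A)$ recognized by $\phi \colon A^* \to M$ with $M$ finite aperiodic and $L = \phi^{-1}(S)$, I would apply the universal property to extend $\phi$ to a continuous homomorphism $\overline{\phi} \colon \FA(A) \to M$; then $K := \overline{\phi}^{-1}(S)$ is clopen and $\Phi(K) = (\overline{\phi}\,\iota)^{-1}(S) = \phi^{-1}(S) = L$, using $\overline{\phi}\,\iota = \phi$. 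Thus $\Phi$ corestricts to a Boolean algebra isomorphism onto $\mathrm{Rec}_{\mathbf A}(A)$, and applying $\St(-)$ together with the canonical homeomorphism $\FA(A) \cong \St(\K(\FA(A)))$ yields the claim.

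The main obstacle I anticipate is the purely topological input that every clopen subset of $\FA(A)$ is a \emph{cylinder} $\pi_M^{-1}(S)$ over a single finite aperiodic quotient; this is where the inverse-limit description and compactness of the profinite monoid are essential, and it is also the point at which aperiodicity enters, since the quotients through which one factors must themselves be aperiodic. Everything else — density of $\iota(A^*)$, the Boolean-homomorphism properties of $\Phi$, and the extension of recognizing homomorphisms — is routine once this cylinder description and the stated universal property are in hand.
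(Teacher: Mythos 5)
Your proof is correct. Note, however, that the paper does not actually prove Theorem~\ref{thm:produal}: it is stated as a known result and delegated to the references ([Sec.~4]{Geh2016}, [Sec.~3.6]{Almeida:book}), so there is no in-paper argument to compare against; what you have written is essentially the standard proof found in those sources. All the key steps check out: density of $\iota(A^*)$ follows from uniqueness in the universal property (the closure of $\iota(A^*)$ is a closed, hence pro-aperiodic, submonoid containing $A$, and the induced retraction must be the identity); injectivity of $K \mapsto \iota^{-1}(K)$ on clopens is exactly the density argument you give; and the identification of the image with $\mathrm{Rec}_{\mathbf{A}}(A)$ uses precisely the two ingredients you isolate, namely that every clopen in an inverse limit of finite discrete spaces is pulled back from a single finite stage (a finite union of basic cylinders can be consolidated over one index by directedness and compactness), and that recognizing homomorphisms $\phi \colon A^* \to M$ into finite aperiodic monoids extend continuously to $\FA(A)$ with $\overline{\phi}\,\iota = \phi$ by freeness of $A^*$. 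Your closing remark correctly locates where aperiodicity enters (the finite quotients in the inverse system are aperiodic, so the pulled-back languages are aperiodic-recognizable; conversely the universal property is only guaranteed for aperiodic targets). The only cosmetic point worth making explicit is that $\Phi$ preserves the top element ($\FA(A) \mapsto A^*$) and complements relative to $A^*$, so that its corestriction really is a Boolean algebra isomorphism onto $\mathrm{Rec}_{\mathbf{A}}(A)$ and not merely a lattice embedding; this is immediate from $\iota^{-1}$ commuting with all Boolean operations.
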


We already cited in the previous section the characterization theorem \cite{Sch1965,McNPap1971} that first-order languages are exactly the aperiodic-recognizable languages. We now connect this result with Theorem~\ref{thm:produal}. Recall that, for $T$ a set of sentences in first-order logic, the \emph{Lindenbaum-Tarski algebra} $\LT(T)$ of $T$ is the Boolean algebra of $T$-equivalence classes of first-order sentences. Here, two first-order sentences $\phi$ and $\psi$ are \emph{$T$-equivalent} if, in any model $W$ where all sentences in $T$ are true, $\phi$ and $\psi$ are either both true or both false.

\begin{proposition}\label{prop:lindalg}
The Lindenbaum-Tarski algebra $\LT(\TAf)$ of the theory $\TAf$ is isomorphic to the algebra $\mathrm{Rec}_{\mathbf{A}}(A)$ of aperiodic-recognizable languages of finite $A$-words.
\end{proposition}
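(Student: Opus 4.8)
The plan is to use the single assignment $\phi \mapsto L_\phi$, sending a first-order sentence to the language of finite $A$-words it defines, and to show that this map induces the desired isomorphism: its image is exactly $\mathrm{Rec}_{\mathbf A}(A)$, and two sentences have the same image precisely when they are $\TAf$-equivalent.

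First I would record that the assignment $\Phi \colon \phi \mapsto L_\phi$ respects the Boolean operations, in the sense that $L_{\phi \wedge \psi} = L_\phi \cap L_\psi$, $L_{\phi \vee \psi} = L_\phi \cup L_\psi$, and $L_{\neg\phi}$ is the complement of $L_\phi$ in the set of all finite $A$-words; this is immediate from the inductive definition of $\models$. Hence $\Phi$ is a map from sentences to $\mathcal P(A^*)$ preserving meets, joins and complements. By definition, the image of $\Phi$ is the set of first-order definable languages of finite $A$-words, which by the Sch\"utzenberger--McNaughton--Papert theorem recalled above coincides with the Boolean algebra $\mathrm{Rec}_{\mathbf A}(A)$.

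The crux is to compute when $L_\phi = L_\psi$; I claim this holds if and only if $\phi$ and $\psi$ are $\TAf$-equivalent. One direction is immediate: every finite $A$-word is a model of $\TAf$, so if $\phi$ and $\psi$ agree in all models of $\TAf$ they in particular agree on all finite words, whence $L_\phi = L_\psi$. For the converse, suppose $L_\phi = L_\psi$, i.e.\ $\phi$ and $\psi$ have the same truth value on every finite word. Then $\phi \leftrightarrow \psi$ is true in every finite $A$-word, so by the very definition of $\TAf$ as the set of sentences true in all finite $A$-words, $\phi \leftrightarrow \psi \in \TAf$; therefore $\phi$ and $\psi$ agree in every model of $\TAf$, i.e.\ they are $\TAf$-equivalent. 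Consequently $\Phi$ factors through the quotient map to $\LT(\TAf)$ and induces a well-defined \emph{injective} Boolean homomorphism $\overline\Phi \colon \LT(\TAf) \to \mathcal P(A^*)$.

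Putting the pieces together, $\overline\Phi$ is an injective Boolean-algebra homomorphism with image $\mathrm{Rec}_{\mathbf A}(A)$, and is therefore an isomorphism onto $\mathrm{Rec}_{\mathbf A}(A)$, as desired. The step I expect to require the most care is the identification of the kernel, specifically the passage from ``agrees on all finite words'' to ``$\TAf$-equivalent'' (equivalently, agrees on all pseudofinite words): this works cleanly only because $\TAf$ is by definition the theory of the finite words and because the finite words are themselves among its models, so that no appeal to compactness, saturation, or an explicit axiomatization of $\TAf$ is needed. Everything else is bookkeeping once the Sch\"utzenberger--McNaughton--Papert characterization is invoked to pin down the image.
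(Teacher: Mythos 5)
Your proof is correct and follows essentially the same route as the paper's: both use the assignment $\phi \mapsto L_\phi$, identify its kernel with $\TAf$-equivalence (which the paper states in one line and you justify in detail, correctly, using that the finite words are among the models of $\TAf$ and that $\TAf$ is by definition their theory), and invoke the Sch\"utzenberger--McNaughton--Papert theorem to identify the image with $\mathrm{Rec}_{\mathbf{A}}(A)$. The only difference is the level of detail, not the argument.
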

\begin{proof}
Note that two first-order sentences $\phi$ and $\psi$ define the same language if and only if  $\phi$ and $\psi$ are $\TAf$-equivalent. Thus, the assignment $\phi \mapsto L_\phi$ is a well-defined injective function from $\LT(\TAf)$ to languages of finite $A$-words. It is clear from the definition of $L_\phi$ that this is a homomorphism. The image of this homomorphism consists exactly of the aperiodic-recognizable languages \cite{Sch1965,McNPap1971}.
\end{proof}
Theorem~\ref{thm:produal} and Proposition~\ref{prop:lindalg} together immediately entail that $\FA(A)$ is homeomorphic to the Stone dual space of $\LT(\TAf)$. Stone dual spaces of Lindenbaum-Tarski algebras are well understood in logic.
\begin{proposition}
Let $T$ be a set of first-order sentences. The Stone dual space of $\LT(T)$ is homeomorphic to the Boolean space $X$ whose points are elementary equivalence classes of models of $T$, in which the clopen sets are exactly the \emph{truth sets}
\[ \widehat{\phi} := \{x \in X \ | \ \phi \text{ is true in models in the class } x\},\]
for $\phi$ any first-order sentence.
\end{proposition}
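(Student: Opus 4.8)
The plan is to exhibit an explicit isomorphism between $\LT(T)$ and the Boolean algebra $\K(X)$ of clopen subsets of the proposed space $X$, and then invoke Stone duality (recalled in the excerpt) to conclude that $X$ is homeomorphic to $\St(\LT(T))$. The natural candidate map is $\sem{\phi} \mapsto \widehat{\phi}$, sending the $T$-equivalence class of a sentence $\phi$ to its truth set. First I would verify that this map is well-defined: if $\phi$ and $\psi$ are $T$-equivalent, then by definition they have the same truth value in every model of $T$, hence in particular the same truth value across every elementary equivalence class, so $\widehat{\phi} = \widehat{\psi}$. Conversely, injectivity follows because if $\widehat{\phi} = \widehat{\psi}$ then $\phi \liff \psi$ holds in every model of $T$, which is exactly $T$-equivalence. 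That the assignment is a Boolean homomorphism is routine from the semantic clauses: $\widehat{\neg\phi} = X \setminus \widehat{\phi}$, $\widehat{\phi \wedge \psi} = \widehat{\phi} \cap \widehat{\psi}$, and $\widehat{\phi \vee \psi} = \widehat{\phi} \cup \widehat{\psi}$, using that truth of a Boolean combination in a class depends only on the truth of its components in that class (which is well-defined precisely because members of an elementary equivalence class agree on all sentences).

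Next I would argue surjectivity onto $\K(X)$, i.e., that the sets $\widehat{\phi}$ are exactly the clopen subsets of $X$. By the stated definition of the topology on $X$, the sets $\widehat{\phi}$ are declared open and form a basis; since $\widehat{\neg\phi}$ is the complement of $\widehat{\phi}$, each $\widehat{\phi}$ is also closed, so every such set is clopen. For the reverse inclusion I would use compactness of $X$: a clopen set $C$ is open, hence a union of basic opens $\widehat{\phi_i}$, and $C$ is also closed in a compact space, hence compact, so finitely many $\widehat{\phi_1}, \dots, \widehat{\phi_n}$ cover it; as each is contained in $C$ and they are themselves of the form $\widehat{\phi_i}$, one checks $C = \widehat{\phi_1 \vee \cdots \vee \phi_n}$. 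Thus $\sem{\phi} \mapsto \widehat{\phi}$ is a Boolean isomorphism $\LT(T) \to \K(X)$.

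Having identified $\K(X)$ with $\LT(T)$, Stone's duality theorem gives a homeomorphism $X \cong \St(\K(X)) \cong \St(\LT(T))$; concretely, the point of $X$ corresponding to an ultrafilter $u$ of $\LT(T)$ is the elementary equivalence class of models realizing exactly the sentences whose class lies in $u$. What remains is to justify that $X$ as described is genuinely a Boolean space and that the correspondence between its points and ultrafilters is a bijection. A point of $X$ determines an ultrafilter $\{\sem{\phi} : \phi \text{ true in the class}\}$ by the semantic clauses, and conversely every ultrafilter of $\LT(T)$ is a maximal consistent (relative to $T$) set of sentences, which by the compactness theorem is satisfiable by a model of $T$, yielding an elementary equivalence class; these assignments are mutually inverse.

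The main obstacle will be the appeal to the compactness theorem to show that every ultrafilter is realized by some model, and the accompanying verification that $X$ is compact and Hausdorff. Hausdorffness is easy: distinct elementary equivalence classes are separated by some sentence $\phi$ with $\widehat{\phi}$ containing one and $\widehat{\neg\phi}$ the other. Compactness is precisely where logic enters: given a family of basic closed sets with the finite intersection property, the corresponding finite conjunctions of sentences are each $T$-consistent, so by compactness the whole family is $T$-consistent and is witnessed by a model, whose equivalence class lies in the total intersection. This is the one step that cannot be reduced to formal manipulation and must invoke a genuine model-theoretic input; everything else is bookkeeping with the $\widehat{\cdot}$ operation.
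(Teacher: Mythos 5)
Your proof is correct, but it takes the dual route to the one in the paper. The paper works directly at the level of points: it sends a model $W$ to the ultrafilter $\mathcal{U}_W$ of sentences true in $W$, observes that this is well defined and injective on elementary equivalence classes by definition, gets surjectivity from the completeness theorem, and checks continuity and openness. You instead work at the level of algebras, exhibiting $\sem{\phi} \mapsto \widehat{\phi}$ as a Boolean isomorphism $\LT(T) \to \K(X)$ and then invoking Stone duality to transport this to a homeomorphism of spaces. The two approaches hinge on exactly the same model-theoretic input --- that every ultrafilter of $\LT(T)$ is realized by an actual model of $T$, via compactness in your version and completeness in the paper's --- and your route does not avoid the point-level bijection: to apply Stone duality to $X$ you must first verify that $X$ is a Boolean space, and your compactness argument for $X$ is essentially the same realization argument again, so there is a mild redundancy. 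What your approach buys is that the clause of the proposition asserting that the clopens of $X$ are \emph{exactly} the truth sets is established explicitly (via the finite-subcover argument showing every clopen is a finite union of basic clopens), whereas the paper leaves this to be read off from the homeomorphism and Stone's theorem. One small presentational caveat: you use compactness of $X$ to prove surjectivity onto $\K(X)$ before you prove compactness of $X$; the logic is not circular since the compactness proof is independent, but you should reorder these steps in a final write-up.
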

\begin{proof}
To any model $W$ of $T$, associate the ultrafilter $\mathcal{U}_W$ of sentences that are true in $W$. By definition, $\mathcal{U}_W = \mathcal{U}_{W'}$ if and only if  $W$ and $W'$ are elementarily equivalent, so this is a well-defined injective function from $X$ to  $\St \LT(T)$. The completeness theorem for first-order logic entails that this map is also surjective. It is clear from the definitions that it is continuous and open, and therefore a homeomorphism.
\end{proof}
The following theorem is now an immediate application of the preceding three results.
\begin{theorem}\label{thm:start}
Let $A$ be a finite alphabet. The Stone space underlying $\FA(A)$ is homeomorphic to the Boolean space of elementary equivalence classes of pseudofinite $A$-words, with clopens the truth sets of first-order sentences.\qed
\end{theorem}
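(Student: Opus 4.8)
The plan is to observe that Theorem~\ref{thm:start} follows by simply chaining together the three immediately preceding results, so the proof is essentially a one-line composition of homeomorphisms. First I would recall that Theorem~\ref{thm:produal} identifies the Stone space underlying $\FA(A)$ with the Stone dual space of the Boolean algebra $\mathrm{Rec}_{\mathbf A}(A)$ of aperiodic-recognizable languages. Next, Proposition~\ref{prop:lindalg} provides an isomorphism of Boolean algebras $\LT(\TAf) \cong \mathrm{Rec}_{\mathbf A}(A)$; since Stone duality is a dual equivalence of categories, an isomorphism of Boolean algebras induces a homeomorphism of their Stone dual spaces, so $\St\LT(\TAf)$ is homeomorphic to $\St\,\mathrm{Rec}_{\mathbf A}(A)$, and hence to the space underlying $\FA(A)$.

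The final link is the Proposition applied with $T = \TAf$: it identifies $\St\LT(\TAf)$ with the Boolean space whose points are elementary equivalence classes of models of $\TAf$, with clopens the truth sets $\widehat{\phi}$. By definition, a model of $\TAf$ is exactly a pseudofinite $A$-word. Composing the three homeomorphisms therefore exhibits the Stone space underlying $\FA(A)$ as homeomorphic to the space of elementary equivalence classes of pseudofinite $A$-words, with the clopen sets being the truth sets of first-order sentences. This is precisely the assertion of Theorem~\ref{thm:start}.

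Because the statement is flagged with \qed at the point of assertion and is described in the text as ``an immediate application of the preceding three results,'' I would not expect any genuine obstacle. The only point requiring a word of care is the step from an isomorphism of Boolean algebras to a homeomorphism of Stone dual spaces; this is exactly the functoriality half of Stone's duality theorem recalled in the text, which states that Boolean algebra homomorphisms correspond naturally to continuous maps between the dual spaces, so isomorphisms correspond to homeomorphisms. One should also note that the clopen sets match up correctly under the composite: the generating clopens $\widehat{L}$ for $L \in \mathrm{Rec}_{\mathbf A}(A)$ transport, via $L = L_\phi$, to the truth sets $\widehat{\phi}$, so the topological description of the clopens in the three statements is consistent. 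Given all this bookkeeping is handled by the cited results, the composition is immediate and the theorem follows.
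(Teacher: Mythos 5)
Your proposal is correct and follows exactly the paper's intended argument: the paper itself gives no separate proof beyond noting that the theorem is ``an immediate application of the preceding three results,'' which is precisely the composition $\FA(A) \cong \St\,\mathrm{Rec}_{\mathbf A}(A) \cong \St\LT(\TAf) \cong$ (classes of pseudofinite $A$-words) that you describe. Your added care about transporting the clopen sets $\widehat{L_\phi} \mapsto \widehat{\phi}$ through the composite is a reasonable bit of bookkeeping that the paper leaves implicit.
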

We will henceforth identify $\FA(A)$ with the space of classes of pseudofinite $A$-words. Since $\FA(A)$ is not just a topological space, but also carries a monoid multiplication, it is natural to ask how one can multiply two classes of pseudofinite $A$-words. This is done by simply concatenating $A$-words in each class and taking the class of the result, as we will see in the next section.

We will denote by $\AL$ the Stone dual space of $\LT(\TA)$. By Proposition~\ref{prop:lindalg}, elements of $\AL$ are elementary equivalence classes of $A$-words. In view of Theorem~\ref{thm:start}, $\FA(A)$ can be viewed as a subspace of $\AL$, also see Proposition~\ref{prop:FAinAL} below. It will often be useful to consider this larger space $\AL$ consisting of all classes of $A$-words, not just the pseudofinite ones.

The spaces $\AL$ and $\FA(A)$ can be usefully described as inverse limits of particular countable chains of finite discrete spaces, as follows. For each $k \geq 0$, let $\AL_k$ denote the set of $\equiv_k$-classes of $A$-words. As noted in Section~\ref{sec:logic}, the set $\AL_k$ is finite. Since $\equiv_{k+1}$ refines $\equiv_k$, and $\equiv$ refines each $\equiv_k$, we get a diagram
\begin{equation}\label{eq:ALchain}
\AL \onto \;\;\; \cdots \;\;\; \onto \AL_{k+1} \onto \AL_k \onto \;\;\; \cdots \;\;\; \onto \AL_0.
\end{equation}
The Stone dual of this diagram is the chain of subalgebra inclusions
\begin{equation}\label{eq:LTchain}
\LT(\TA)_0 \into \;\;\; \cdots \;\;\; \into \LT(\TA)_k \into \LT(\TA)_{k+1} \into \;\;\; \cdots \;\;\; \into \LT(\TA),
\end{equation}
where $\LT(\TA)_k$ is the finite Boolean algebra of $\TA$-equivalence classes of first-order sentences of depth at most $k$. Since $\LT(\TA)$ is clearly the direct limit (union) of the diagram in (\ref{eq:LTchain}), it follows from Stone duality that $\AL$ is the inverse limit as a topological space of the diagram in (\ref{eq:ALchain}), where each $\AL_k$ is regarded as a finite discrete space. For each $k$, let $\FAf(A)_k$ denote the subset of $\AL_k$ consisting of the $\equiv_k$-classes of \emph{finite} $A$-words. Again, $\FA(A)$ is the inverse limit in the diagram
\begin{equation}\label{eq:FAchain}
\FA(A) \onto \;\;\; \cdots \;\;\; \onto \FAf(A)_{k+1} \onto \FAf(A)_k \onto \;\;\; \cdots \;\;\; \onto \FAf(A)_0.
\end{equation}
Note that Example~\ref{exa:TAnotTAfin} shows that $\FAf(A)_2$ is strictly contained in $\AL_2$, since the word $W$ in that example fails to satisfy a depth $2$ sentence that holds in all (pseudo)finite words. We will show in the next section that these remarks still hold true when the spaces are endowed with a monoid structure.

Let us denote, for each $k \geq 0$, by $\pi_k$ the continuous projection map $\AL \onto \AL_k$. With a slight abuse of notation, if $u, u' \in \AL$, we write $u \equiv_k u'$ to mean $\pi_k(u) = \pi_k(u')$, i.e., $U \equiv_k U'$ for all $U$ in the class $u$ and $U'$ in the class $u'$.

We can now draw the following useful conclusions.
\begin{proposition}\label{prop:ALchain}
Let $A$, $B_1, \dots, B_n$ be finite alphabets.
\begin{enumerate}[leftmargin=*]
\item The collection $\{[U]_{\equiv_k} \ | \ k \geq 0, \, U \, \text{an $A$-word}\}$ is a basis of clopen sets for $\AL$.
\item A function $f \colon \Lambda(B_1) \times \cdots \times \Lambda(B_n) \to \AL$ is continuous if and only if  for each $k \geq 0$ there exists $m(k) \geq 0$ such that for any $v_i, v_i' \in \Lambda(B_i)$, if $v_i \equiv_{m(k)} v_i'$ then $f(v_1,\dots,v_n) \equiv_k f(v'_1,\dots,v'_n)$.
\end{enumerate}
\end{proposition}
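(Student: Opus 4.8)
The plan is to prove each part by exploiting the inverse limit description of $\AL$ already established, namely that $\AL = \varprojlim \AL_k$ as a topological space with each $\AL_k$ finite discrete, and that the maps $\pi_k \colon \AL \onto \AL_k$ are continuous. Part (1) should be essentially immediate from this: the sets $[U]_{\equiv_k}$ are exactly the fibers $\pi_k^{-1}(\{[U]_{\equiv_k}\})$ of the projection to a finite discrete space, hence clopen, and since each $\AL_k$ is finite the preimages of its points (i.e. the $\equiv_k$-classes) form a finite clopen partition of $\AL$ for each fixed $k$. The fact that these form a \emph{basis} is the standard characterization of the topology on an inverse limit of finite discrete spaces: a basic open set in the limit is the preimage of a point under some $\pi_k$, and every such preimage is of the form $[U]_{\equiv_k}$. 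So part (1) reduces to unwinding the inverse-limit topology already spelled out via diagram~(\ref{eq:ALchain}).

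\textbf{For part (2)}, the statement is a reformulation of continuity into an explicit ``$\varepsilon$–$\delta$'' condition phrased in terms of quantifier depth, and I would prove both implications using part (1) together with the abuse-of-notation convention that $u \equiv_k u'$ means $\pi_k(u) = \pi_k(u')$. For the forward direction, suppose $f$ is continuous and fix $k \geq 0$. The composite $\pi_k \circ f \colon \Lambda(B_1) \times \cdots \times \Lambda(B_n) \to \AL_k$ is continuous into a finite discrete space, so each fiber is clopen; by part (1) applied to each coordinate (and the fact that a basis for the product topology is given by products of basic clopens), the clopen set $(\pi_k \circ f)^{-1}(c)$, for each of the finitely many $c \in \AL_k$, can be covered by finitely many product-boxes $[U_1]_{\equiv_{m_1}} \times \cdots \times [U_n]_{\equiv_{m_n}}$. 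Taking $m(k)$ to be the maximum over all the finitely many depths $m_i$ appearing (using that $\equiv_{m}$ refines $\equiv_{m'}$ when $m \geq m'$, from diagram~(\ref{eq:ALchain})), one sees that the value $\pi_k(f(v_1,\dots,v_n))$ depends only on the $\equiv_{m(k)}$-classes of the $v_i$, which is exactly the asserted condition.

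For the reverse direction, assume the depth condition holds. To check continuity it suffices, by part (1), to show that $f^{-1}$ of each basic clopen $[W]_{\equiv_k}$ is open; equivalently, since $\AL = \varprojlim \AL_k$, it suffices to show that $\pi_k \circ f$ is continuous for each $k$. But the hypothesis says precisely that $\pi_k(f(v_1,\dots,v_n))$ is determined by the tuple of $\equiv_{m(k)}$-classes $(\pi_{m(k)}(v_1),\dots,\pi_{m(k)}(v_n))$, i.e. $\pi_k \circ f$ factors through the continuous projection to the finite discrete space $\Lambda(B_1)_{m(k)} \times \cdots \times \Lambda(B_n)_{m(k)}$. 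Any function out of a discrete space is continuous, so the factored map is continuous and hence so is $\pi_k \circ f$; as this holds for all $k$, $f$ is continuous into the inverse limit.

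\textbf{The main obstacle} I anticipate is purely bookkeeping rather than conceptual: in the forward direction of part (2), one must handle the product topology carefully, making sure that the finitely many boxes covering a given fiber, across all finitely many fibers and all $n$ coordinates, yield a \emph{single} uniform depth $m(k)$. This is where compactness of the domain (each $\Lambda(B_i)$ is a Boolean space, hence compact, so products are compact) and the refinement relation among the $\equiv_m$ are both used, to pass from a cover to a finite subcover and then to a common depth. Everything else is a direct translation between the inverse-limit topology and the combinatorial condition on quantifier depth.
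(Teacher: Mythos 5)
Your proposal is correct and follows essentially the same route as the paper: part (1) is read off from the inverse-limit description $\AL = \varprojlim \AL_k$, and part (2) is the same translation between continuity and the uniform-depth condition, taking $m(k)$ as the maximum of the finitely many depths appearing in finite box-covers of the fibers of $\pi_k \circ f$. Your version merely makes explicit the compactness and the factorization through $\Lambda(B_1)_{m(k)} \times \cdots \times \Lambda(B_n)_{m(k)}$ that the paper leaves implicit.
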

\begin{proof}
(1) Clear from the fact that $\AL$ is the inverse limit of the finite discrete spaces $\AL_k$. 

(2) Using (1) and the definition of product topology, $f$ is continuous if and only if  for any $U$, $k$, the set $f^{-1}([U]_{\equiv_k})$ is a finite union of sets of the form $[V_1]_{\equiv_{m_1}} \times \cdots \times [V_n]_{\equiv_{m_n}}$. If $f$ is continuous,
choose $m(k)$ to be the maximum value of $m_i$ that occurs in such a finite union as $[U]_{\equiv_k}$ ranges over the (finitely many) elements of $\AL_k$. Conversely,  if the condition holds, then $f^{-1}([U]_{\equiv_k})$ is the union of $[V_1]_{\equiv_{m(k)}} \times \cdots \times [V_n]_{\equiv_{m(k)}}$ such that $f([V_1]_{\equiv}, \dots, [V_n]_{\equiv})$ lies in $[U]_{\equiv_k}$.
\end{proof}

\section{Substitutions and concatenation}\label{sec:subst}
Suppose that $V$ is a word over a finite alphabet $B$, and that for each $b \in B$, $U_b$ is a word over a finite alphabet $A$. We obtain a new word $V[b/U_b]$ over $A$ by substituting the word $U_b$ for each occurrence of the letter $b$ in $V$; see Figure~\ref{fig:substitution}. 
\begin{figure}[htp]
\begin{tikzpicture}[decoration={brace}]
\renewcommand{\labeldist}{.1}
\drawwordnr{(2,2)}{6}{$V$}

\drawwordnr{(0,0)}{10}{$V[b/U_b]$}

\drawwordn{(4,0)}{1.5}{}
\drawdecor{(5.4,-.2)}{(4.1,-.2)}{$U_b$}{}
\drawsubtri{(4.75,2)}{(4,0)}{(5.5,0)}{$b$}

\drawwordn{(8.5,0)}{1}{}
\drawdecor{(9.4,-.2)}{(8.6,-.2)}{$U_{b'}$}{}
\drawsubtri{(7.5,2)}{(8.5,0)}{(9.5,0)}{$b'$}

\node[above=.15] at (6.15,2) {$\dots$};
\node[below=.4] at (7,0) {$\dots$};
\end{tikzpicture}

\caption{Substituting $(U_b)_{b \in B}$ into $V$}
\label{fig:substitution}
\end{figure}
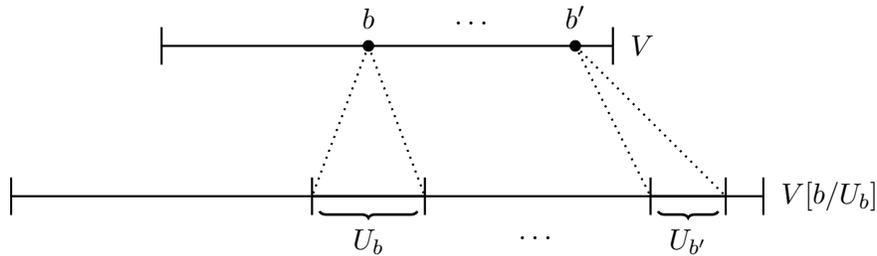

Formally, the \emph{substitution} of the $A$-words $(U_b)_{b \in B}$ into the $B$-word $V$ is the $A$-word $W = V[b/U_b]$ defined as follows. 
\begin{itemize}[leftmargin=*]
\item The underlying order of $W$ is the \emph{lexicographic order} on the disjoint union $|W| := \bigsqcup_{i \in |V|} |U_{V(i)}|$, i.e.,
\[ (i,j) <^W (i',j') \stackrel{\mathrm{def}}{\iff} i <^V i', \text{ or } i = i' \text{ and } j <^{U_{V(i)}} j'.\]
\item The letter at position $(i,j)$ in $W$ is the letter at position $j$ in $U_{V(i)}$.
\end{itemize}
There are two important special cases of substitution. If $U_0$ and $U_1$ are $A$-words, then the \emph{concatenation} $U_0 \cdot U_1$ of $U_0$ and $U_1$ is defined as the substitution of $(U_b)_{b \in \{0,1\}}$ into the $\{0,1\}$-word $01$. If $U$ is an $A$-word and $\lambda$ is a discrete linear order with endpoints, then the \emph{$\lambda$-power} $U^\lambda$ of $U$ is defined as the substitution of $U_b = U$ into the unique $\{b\}$-word with underlying order $\lambda$.

Importantly, the operation of substitution respects the equivalence relations $\equiv_k$. This is a well-known result in the model theory of labelled linear orderings, cf., e.g., \cite[Thm.~A.6.2]{Hod1993} and \cite[Sec.~13.6]{Ros1982}.
\begin{proposition}\label{prop:substinv-k}
Let $k \geq 0$. For any finite alphabets $A$ and $B$ and any $B$-indexed collections of $A$-words $(U_b)_{b \in B}$ and $(U_b')_{b \in B}$ such that $U_b \equiv_k U_b'$ for each $b \in B$, if $V$ and $V'$ are $A$-words such that $V \equiv_k V'$, then $V[b/U_b] \equiv_k V'[b/U_b']$.
\end{proposition}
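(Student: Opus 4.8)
The plan is to prove this using Ehrenfeucht-Fraïssé games directly, leveraging Lemma~\ref{lem:EFgames} and the game characterization $U\equiv_k V$ iff $\exists$ has a winning strategy in the $k$-round EF game. I want to show that player $\exists$ has a winning strategy in the $k$-round EF game on $(W,W')$, where $W = V[b/U_b]$ and $W' = V'[b/U_b']$. The key idea is that $\exists$ should play a ``two-level'' strategy: she simultaneously maintains a winning strategy in the $k$-round game on the ``outer'' words $(V,V')$ and, in each ``block'' $U_b$ (respectively $U_b'$) that gets touched, a winning strategy in the $k$-round game on the corresponding pair $(U_b,U_b')$.

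Let me spell out the strategy. Recall a position of $W$ is a pair $(i,j)$ with $i\in|V|$ and $j\in|U_{V(i)}|$. When $\forall$ plays a position, say $(i,j)$ in $W$, $\exists$ first consults her outer strategy on $(V,V')$: treating $\forall$ as having played $i$ in $V$, the outer strategy prescribes a response $i'$ in $V'$. Crucially, because the outer strategy is winning, the partial isomorphism it maintains guarantees $V(i)=V'(i')$, so the blocks $U_{V(i)}$ and $U'_{V'(i')}=U'_{V(i)}$ are a pair of $\equiv_k$-equivalent words on which $\exists$ has a fixed winning inner strategy. She then consults that inner strategy on $(U_{V(i)},U'_{V(i)})$, treating $\forall$ as having played $j$, obtaining a response $j'$, and plays $(i',j')$ in $W'$. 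The same recipe applies symmetrically when $\forall$ plays in $W'$. The subtlety, which I would handle carefully, is that over $k$ rounds $\forall$ may visit several distinct outer positions, each opening a fresh inner game; $\exists$ runs one inner game per touched block, and since each inner game lasts at most $k$ rounds, each stays within the scope of its winning strategy. I would organize this as a bookkeeping argument: to each round associate the outer move and the inner move, and maintain as an invariant that (a) the outer moves form a partial isomorphism winning position in $(V,V')$, and (b) within each touched block the inner moves form a winning position in $(U_b,U'_b)$.

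The main obstacle is verifying that this composite strategy actually wins, i.e., that the map $(i_\ell,j_\ell)\mapsto(i'_\ell,j'_\ell)$ is a partial isomorphism of $A$-words after $k$ rounds. Letter-preservation is immediate since each $j_\ell$ sits in a block whose label matches under the inner strategy. For the order relation I must check $(i_\ell,j_\ell)<^W(i_m,j_m) \iff (i'_\ell,j'_\ell)<^{W'}(i'_m,j'_m)$, and here the lexicographic definition of $<^W$ splits this into two cases. If $i_\ell\neq i_m$, then the comparison is governed entirely by the outer order, and the outer partial isomorphism gives $i_\ell<^V i_m \iff i'_\ell<^{V'}i'_m$; I must also confirm the outer strategy never collapses two distinct outer positions to one (nor separates a coincident pair), which follows from it being a partial isomorphism. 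If $i_\ell=i_m$, then the comparison reduces to the inner order within a single shared block, where $\exists$'s winning inner strategy on that block supplies $j_\ell<^{U}j_m \iff j'_\ell<^{U'}j'_m$; I must additionally note that $i_\ell=i_m$ forces $i'_\ell=i'_m$ (again by the outer partial isomorphism being injective), so both points genuinely lie in the same block pair and the single inner game is the relevant one. Once these order-compatibility checks go through in both directions, $\exists$ wins, hence $W\equiv_k W'$, completing the proof.
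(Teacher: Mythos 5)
Your proposal is correct: the two-level (outer/inner) composition of winning strategies does establish $V[b/U_b]\equiv_k V'[b/U_b']$, and the order- and letter-preservation checks you list are exactly the ones needed; the only facts you use beyond the game characterization of $\equiv_k$ are standard (a winning strategy for the $k$-round game responds to repeated moves consistently and maintains a partial isomorphism after any $m\leq k$ rounds, so the inner games may each be abandoned early). This is essentially the classical strategy-composition proof for ordered sums, as in the sources the paper itself cites for this ``well-known result'' (Hodges, Thm.~A.6.2; Rosenstein, \S13.6). The paper's own proof is organized differently: it proceeds by induction on $k$, peeling off one round at a time via Lemma~\ref{lem:EFgames}. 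Given a position $(i,j)$ of $V[b/U_b]$, it picks $i'$ $k$-corresponding to $i$ and then $j'$ $k$-corresponding to $j$, and verifies that $(i',j')$ $k$-corresponds to $(i,j)$ by observing that the ray $V[b/U_b]\be{(i,j)}$ is itself a substitution into the shorter word $V\be{i}c$, where $c$ is a fresh letter sent to $U_b\be{j}$, so that the induction hypothesis applies. What the paper's route buys is that everything stays inside the $k$-correspondence framework already set up in Lemma~\ref{lem:EFgames} (which is reused throughout the paper) and no intermediate-stage facts about strategies are needed; the cost is the induction and the fresh-letter trick. Your route is more direct and arguably closer to the intuition (``play the outer game and one inner game per touched block simultaneously''), at the price of the bookkeeping invariant you describe, which you have identified and handled correctly.
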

\begin{proof}
By induction on $k$. The case $k = 0$ is trivial. Assume the statement is true for $k$.   Suppose that $U_b \equiv_{k+1} U_b'$ for each $b \in B$, and $V \equiv_{k+1} V'$. We use Lemma~\ref{lem:EFgames} to prove that $V[b/U_b] \equiv_{k+1} V'[b/U_b']$. Let $(i,j)$ be a position in $V[b/U_b]$. Since $V \equiv_{k+1} V'$, by Lemma~\ref{lem:EFgames}, pick $i' \in |V'|$ which $k$-corresponds to the position $i$ in $V$. In particular, $V(i) = V'(i')$; denote this letter by $b$. Since $U_{b} \equiv_{k+1} U'_b$, by Lemma~\ref{lem:EFgames}, pick $j'$ in $U'_b$ which $k$-corresponds to the position $j$ in $U_b$.
We finish the proof by showing that the position $(i',j')$ in $V'[b/U_b']$ $k$-corresponds to $(i,j)$ in $V[b/U_b]$. The letters at these positions are $U_b(j)$ and $U'_b(j')$, which are the same because $j'$ $k$-corresponds to $j$. Writing $Y$ for the subword $V[b/U_b]\be{(i,j)}$ and $Y'$ for the subword $V'[b/U'_b]\be{(i',j')}$, we use the induction hypothesis to show that $Y \equiv_k Y'$. Let $c$ be a new letter not in $A$ nor $B$. Define $U_c$ to be the $A$-word $U_b\be{j}$. Notice that $Y$ is the result of substituting the words $(U_b)_{b \in B \cup \{c\}}$ into the word $V\be{i}c$, obtained by appending a single letter $c$ to the end of $V\be{i}$. Similarly, $Y' = (V'\be{i'}c)[b/U'_b]$ where $U'_c := U_b\be{j'}$.
Now, since $V\be{i} \equiv_k V'\be{i'}$, we have $V\be{i}c \equiv_k V'\be{i'}c$ by the induction hypothesis applied in the special case of a concatenation. Also, $U_c \equiv_k U'_c$ by the choice of $j'$, and $U_b \equiv_k U'_b$ for each $b \in B$ since $(k+1)$-equivalence implies $k$-equivalence. By the induction hypothesis, $Y \equiv_k Y'$. By symmetry, we are done.
\end{proof}
\begin{corollary}\label{cor:substinv}
If $V \equiv V'$ and $U_b \equiv U_b'$ for each $b \in B$, then $V[b/U_b] \equiv V'[b/U_b']$.\qed
\end{corollary}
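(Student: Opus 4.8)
The plan is to derive Corollary~\ref{cor:substinv} directly from Proposition~\ref{prop:substinv-k} by passing to the limit over all quantifier depths $k$. The corollary asserts that full elementary equivalence of the words being substituted (both the host word $V\equiv V'$ and each replacement word $U_b\equiv U_b'$) implies full elementary equivalence of the substituted words $V[b/U_b]\equiv V'[b/U_b']$. Since $\equiv$ is by definition the conjunction of all the relations $\equiv_k$, this is exactly the kind of statement that reduces to its finite-depth approximations.

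\medskip

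First I would unpack the hypotheses. By the definition of $\equiv$ given after Lemma~\ref{lem:oneletter}'s surrounding discussion, $V\equiv V'$ means $V\equiv_k V'$ for every $k\geq 0$, and likewise $U_b\equiv U_b'$ means $U_b\equiv_k U_b'$ for every $k\geq 0$ and every $b\in B$. Fix an arbitrary $k\geq 0$. Then the hypotheses supply, in particular, $V\equiv_k V'$ and $U_b\equiv_k U_b'$ for each $b\in B$, which are precisely the hypotheses of Proposition~\ref{prop:substinv-k} at depth $k$. Applying that proposition yields $V[b/U_b]\equiv_k V'[b/U_b']$. Since $k$ was arbitrary, this holds for all $k$, which is by definition the assertion $V[b/U_b]\equiv V'[b/U_b']$.

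\medskip

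The proof is therefore a one-line quantifier interchange: the per-depth invariance established in Proposition~\ref{prop:substinv-k} holds uniformly in $k$, and $\equiv$ is just the infinitary meet of the $\equiv_k$. There is essentially no obstacle here, since no compactness or limiting argument beyond the trivial ``for all $k$'' universal closure is needed; the work was already done in the inductive proof of the proposition. The only point requiring care is purely notational: one must confirm that the finite alphabets $A$, $B$ are held fixed as $k$ varies so that the single substitution operation $V\mapsto V[b/U_b]$ is being compared at every depth, which is immediate from the setup. Hence the corollary follows, and the displayed \qed in the excerpt correctly signals that the argument is this immediate unravelling of definitions.
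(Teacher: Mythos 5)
Your argument is correct and is exactly the reasoning the paper intends: the corollary is stated with an immediate \qed because it follows from Proposition~\ref{prop:substinv-k} by quantifying over all depths $k$, since $\equiv$ is by definition the intersection of the relations $\equiv_k$. Nothing further is needed.
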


Corollary~\ref{cor:substinv} in particular implies that there is a well-defined binary operation of concatenation on the set $\AL$ of elementary equivalence classes of $A$-words. Recall from the previous section that this set is naturally a Stone space, being the dual space of the Lindenbaum-Tarski algebra of the theory $T_A$.
\begin{theorem}\label{thm:AL}
The Stone space $\AL$, equipped with the operation of concatenation up to elementary equivalence, is a pro-aperiodic monoid. The $\omega$-power of an element $[U]_{\equiv}$ is $[U^{\lambda}]_{\equiv}$, where $\lambda$ is any infinite discrete linear order with endpoints.
\end{theorem}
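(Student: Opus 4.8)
**

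The plan is to verify that $\AL$ with concatenation satisfies all the defining properties of a pro-aperiodic monoid. By Corollary~\ref{cor:substinv}, concatenation is well-defined on $\equiv$-classes, so the first task is to show it is a monoid operation: associativity and the identity law. Associativity follows from the fact that substitution is associative on the level of $A$-words themselves — concatenating three words $U_0 \cdot U_1 \cdot U_2$ via either bracketing yields order-isomorphic (indeed identical) lexicographic orders on the disjoint union of the three position sets — so the induced operation on classes is automatically associative. The identity is the class $[\emp]_\equiv$ of the empty word, since $U \cdot \emp = U = \emp \cdot U$ as $A$-words. Thus $(\AL, \cdot)$ is a monoid.

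Next I would check that the multiplication is continuous, making $\AL$ a topological monoid. Here I would invoke Proposition~\ref{prop:ALchain}(2): to show that $f(u_0,u_1) = u_0 \cdot u_1$ is continuous, it suffices to produce, for each $k$, a modulus $m(k)$ with the property that $u_0 \equiv_{m(k)} u_0'$ and $u_1 \equiv_{m(k)} u_1'$ imply $u_0 u_1 \equiv_k u_0' u_1'$. But concatenation is the substitution of $(U_b)_{b\in\{0,1\}}$ into the word $01$, and Proposition~\ref{prop:substinv-k} tells us precisely that $\equiv_k$ is preserved by substitution when the substituted words are $\equiv_k$-equivalent. So $m(k) = k$ works, and continuity is immediate. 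Combined with the fact, recorded after diagram~(\ref{eq:ALchain}), that $\AL$ is a Boolean space (the inverse limit of the finite discrete spaces $\AL_k$), this shows $\AL$ is a profinite monoid, provided it is indeed an inverse limit of finite monoids — which follows because each $\AL_k$ inherits a monoid structure from concatenation (well-defined by Proposition~\ref{prop:substinv-k}) and the projections $\AL_{k+1} \onto \AL_k$ are monoid homomorphisms.

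The core of the theorem is the aperiodicity identity $u^\omega = u^\omega u$, together with the identification of $u^\omega$ with $[U^\lambda]_\equiv$ for $\lambda$ any infinite discrete linear order with endpoints. Here I expect the main work to lie. First I would argue that for a fixed word $U$ and any two infinite discrete linear orders with endpoints $\lambda, \lambda'$, one has $U^\lambda \equiv U^{\lambda'}$: by Proposition~\ref{prop:substinv-k} applied with $U_b = U$, it suffices that the underlying orders satisfy $\lambda \equiv_k \lambda'$ for all $k$, which is exactly Lemma~\ref{lem:oneletter} applied to the one-letter words with orders $\lambda$ and $\lambda'$, since both are infinite. Hence $[U^\lambda]_\equiv$ is independent of the choice of infinite $\lambda$; call this class $e$. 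Writing $\lambda = \mathbb{N} + \mathbb{N}^{\op}$ and noting $\lambda + 1 \equiv_k \lambda$ for all $k$ (again Lemma~\ref{lem:oneletter}), I get $e \cdot [U]_\equiv = [U^\lambda \cdot U]_\equiv = [U^{\lambda+1}]_\equiv = [U^\lambda]_\equiv = e$, and similarly $e \cdot e = [U^{\lambda + \lambda}]_\equiv = e$ since $\lambda + \lambda$ is again an infinite discrete linear order with endpoints. Thus $e$ is idempotent and absorbs $[U]_\equiv$ on the right.

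It remains to tie $e$ to the abstractly-defined idempotent power $u^\omega$, i.e.\ the unique idempotent in the orbit-closure $\overline{\{u^n : n \geq 1\}}$, and to conclude pro-aperiodicity. I would show $e \in \overline{\{[U]_\equiv^n : n \geq 1\}}$ by checking that every basic clopen neighborhood $[U^\lambda]_{\equiv_k}$ of $e$ contains some finite power $[U]_\equiv^n = [U^n]_\equiv$: taking $n \geq 2^k$, Lemma~\ref{lem:oneletter} gives that the finite order $n$ is $\equiv_k$ to the infinite order $\lambda$ as one-letter words, whence $U^n \equiv_k U^\lambda$ by Proposition~\ref{prop:substinv-k}. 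Since $e$ is idempotent and lies in this orbit-closure, uniqueness of the idempotent forces $e = u^\omega$. The identity $u^\omega u = u^\omega$ is then exactly the relation $e \cdot [U]_\equiv = e$ established above, so $\AL$ is pro-aperiodic. The step I anticipate being most delicate is marshalling Lemma~\ref{lem:oneletter} correctly at the level of the \emph{orders} $\lambda$ (rather than the words), and transporting those equivalences through Proposition~\ref{prop:substinv-k} — one must be careful that Lemma~\ref{lem:oneletter} is a statement about one-letter words, and that its hypothesis ``cardinality at least $2^k - 1$'' is what licenses comparing a large finite power with an infinite one.
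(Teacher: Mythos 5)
Your proposal is correct and follows essentially the same route as the paper's proof: continuity via Proposition~\ref{prop:ALchain}(2) combined with Proposition~\ref{prop:substinv-k}, idempotency of $[U^\lambda]_\equiv$ via $\lambda \equiv \lambda+\lambda$, membership in the orbit-closure via Lemma~\ref{lem:oneletter} comparing $\lambda$ with the finite order on $2^k-1$ elements, and aperiodicity from $U^\lambda\cdot U \equiv U^{\lambda}$. The only (harmless) differences are that you spell out a few steps the paper leaves implicit (associativity, the inverse-limit-of-finite-monoids structure, uniqueness of the idempotent in the orbit-closure) and derive $u^\omega u = u^\omega$ from the elementary equivalence $\lambda+1\equiv\lambda$ where the paper uses the isomorphism $U^\lambda \cong U^\lambda U$.
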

\begin{proof}
It is straight-forward to prove that concatenation is an associative operation with neutral element the class of an empty word. Moreover, concatenation is continuous by combining Proposition~\ref{prop:substinv-k} with Proposition~\ref{prop:ALchain}(2).

We prove the statement about the $\omega$-power. Let $u = [U]_{\equiv} \in \AL$ be arbitrary and let $\lambda$ be an infinite discrete linear order with endpoints. Let $v := [U^\lambda]_{\equiv}$. To prove that $v = u^\omega$, we need to show that $v$ is idempotent and lies in $\overline{\{u^n \ | \ n \geq 1\}}$. For idempotency, notice that $U^\lambda \cdot U^\lambda$ is isomorphic to $U^{\lambda + \lambda}$. Since the models $\lambda$ and $\lambda + \lambda$ are elementarily equivalent, it follows from Corollary~\ref{cor:substinv} that $U^\lambda \equiv U^{\lambda+\lambda} \cong U^\lambda \cdot U^\lambda$, so $v$ is idempotent. We now prove that $v$ lies in the orbit-closure of $u$. For every $k \geq 0$, by Lemma~\ref{lem:oneletter}, for $n_k := 2^k-1$ we have that $\lambda$ is $k$-equivalent to the finite linear order on $n_k$ elements. By Proposition~\ref{prop:substinv-k}, $v \equiv_k u^{n_k}$. It now follows from Proposition~\ref{prop:ALchain} that $v$ lies in the closure of $\{u^n \ | \ n \geq 1\}$.

Given the description of the operation $()^\omega$, it is easy to see that $\AL$ satisfies the equation $x^\omega = x^\omega x$; indeed, for any $A$-word $U$, the $A$-words $U^{\lambda}$ and $U^{\lambda}U$ are isomorphic.
\end{proof}

Recall from Theorem~\ref{thm:start} and the remarks following it that the free pro-aperiodic monoid, $\FA(A)$, is isomorphic to a subspace of $\AL$. Indeed, we can prove more.
\begin{lemma}\label{lem:productpseudofinite}
Let $U$ and $V$ be $A$-words. The concatenation $UV$ is pseudofinite if and only if  both $U$ and $V$ are pseudofinite.
\end{lemma}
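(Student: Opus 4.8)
The plan is to reduce both implications to the finite-approximation criterion for pseudofiniteness (Lemma~\ref{lem:psfinchar}): an $A$-word is pseudofinite precisely when, for every $k$, it is $\equiv_k$-equivalent to some finite $A$-word.

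For the easy forward direction, I would assume $U$ and $V$ are pseudofinite and fix $k$. By Lemma~\ref{lem:psfinchar} I can choose finite $A$-words $U_k \equiv_k U$ and $V_k \equiv_k V$. Applying Proposition~\ref{prop:substinv-k} in the special case of concatenation then gives $UV \equiv_k U_kV_k$, and $U_kV_k$ is finite. Since $k$ was arbitrary, Lemma~\ref{lem:psfinchar} shows $UV$ is pseudofinite.

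The converse is the main content. Assume $UV$ is pseudofinite; I will show $U$ is pseudofinite, the case of $V$ being symmetric (cut on the right instead of the left, or invoke reversal, under which $\TAf$ is invariant). The key geometric observation is that the factor $U$ is in fact a \emph{ray} of $UV$: if $V \neq \emp$ and $p$ is the first position of the $V$-part inside $UV$, then $U = (UV)\be{p}$ (if $V = \emp$ there is nothing to prove, as then $UV = U$). Now fix $k$. Using pseudofiniteness of $UV$ together with Lemma~\ref{lem:psfinchar}, I pick a finite word $W$ with $UV \equiv_{k+1} W$. By Lemma~\ref{lem:EFgames} there is a position $q \in |W|$ that $k$-corresponds to $p$; in particular $(UV)\be{p} \equiv_k W\be{q}$, that is, $U \equiv_k W\be{q}$. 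As $W\be{q}$ is a prefix of the finite word $W$, it is finite, so $U$ is $k$-equivalent to a finite word; since $k$ was arbitrary, $U$ is pseudofinite by Lemma~\ref{lem:psfinchar}. The symmetric argument, with $p$ the last position of the $U$-part and using the right ray $(UV)\af{p} = V$, shows $V$ is pseudofinite.

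The only real obstacle is that there is no definable boundary between $U$ and $V$ inside the word $UV$, so one cannot simply relativize a sentence witnessing non-pseudofiniteness of $U$ to a definable prefix. The idea that resolves this is to treat $U$ not as an abstract algebraic factor but as the concrete left ray of $UV$ at the first position of $V$, and then to transport a finite approximation of the whole word $UV$ to a finite approximation of that ray by \emph{cutting} the approximant $W$ at the $k$-corresponding position $q$ produced by the Ehrenfeucht–Fraïssé analysis of Lemma~\ref{lem:EFgames}. Everything else is a routine appeal to the finite-approximation criterion and to the invariance of $\equiv_k$ under concatenation.
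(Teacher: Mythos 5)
Your proof is correct and follows essentially the same route as the paper: the forward direction via Lemma~\ref{lem:psfinchar} and Proposition~\ref{prop:substinv-k}, and the converse by taking a finite $(k+1)$-approximant $W$ of $UV$ and cutting it at a position $k$-corresponding (Lemma~\ref{lem:EFgames}) to the boundary between the $U$- and $V$-parts. The only cosmetic difference is that the paper extracts both $U \equiv_k W\be{j}$ and $V \equiv_k W({\geq}j)$ from a single cut at the first position of the $V$-part, whereas you run the cut twice symmetrically; this changes nothing of substance.
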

\begin{proof}
We use the characterization of pseudofinite words in Lemma~\ref{lem:psfinchar}.\footnote{An alternative proof of this lemma can be given using the axiomatization of $\TAf$ in Proposition~\ref{prop:TAfax}.}
If both $U$ and $V$ are pseudofinite, then there exist sequences of finite $A$-words $(U_k)_{k \geq 0}$ and $(V_k)_{k \geq 0}$ such that $U \equiv_k U_k$ and $V \equiv_k V_k$ for all $k \geq 0$. By Proposition~\ref{prop:substinv-k}, $UV \equiv_k U_k V_k$ for all $k \geq 0$. Conversely, if $UV$ is pseudofinite, for any $k \geq 0$, pick $W_k$ such that $UV \equiv_{k+1} W_k$. Let $i$ denote the first position in the $V$-part of $UV$. Pick a position $j$ in $W_k$ which $k$-corresponds to $i$. Then $U \equiv_k W_k\be{j}$ and $V \equiv_k W_k({\geq}j)$. So both $U$ and $V$ are pseudofinite.
\end{proof}
\begin{proposition}\label{prop:FAinAL}
The set of classes of pseudofinite $A$-words is a topologically closed submonoid of $\AL$, which is the free pro-aperiodic monoid $\FA(A)$.  Moreover, it is the complement of a prime ideal of $\AL$.
\end{proposition}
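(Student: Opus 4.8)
The plan is to establish three things about the set $P$ of classes of pseudofinite $A$-words: that it is a topologically closed submonoid of $\AL$, that it equals $\FA(A)$, and that its complement is a prime ideal. The identification of $P$ with $\FA(A)$ is already delivered by Theorem~\ref{thm:start} together with the monoid structure from Theorem~\ref{thm:AL}, so the work is to verify the purely structural and topological claims. First I would record that $P$ is a submonoid: the empty word is (vacuously) pseudofinite, and closure under multiplication is exactly the ``if'' direction of Lemma~\ref{lem:productpseudofinite}, which says $UV$ is pseudofinite when both $U$ and $V$ are.

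Next I would handle the prime ideal claim, which is really the heart of the matter and packages the closedness as well. Let $N := \AL \setminus P$ be the set of classes of $A$-words that are \emph{not} pseudofinite. To see $N$ is an ideal, take $u \in N$ and an arbitrary $v \in \AL$; I must show $uv$ and $vu$ lie in $N$. This is precisely the contrapositive of the ``only if'' direction of Lemma~\ref{lem:productpseudofinite}: if a concatenation $UV$ is pseudofinite then both factors are, so if one factor fails to be pseudofinite the product cannot be pseudofinite either. Hence multiplying a non-pseudofinite class on either side keeps it non-pseudofinite, so $N$ is a two-sided ideal. That $N$ is prime is immediate, since its complement $P$ is a submonoid by the previous paragraph; by definition an ideal whose complement is a submonoid is prime.

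For topological closedness I would argue that $P$ is closed by exhibiting $N$ as open, or directly via the inverse-limit description. The cleanest route uses Lemma~\ref{lem:psfinchar}: a class $u$ is pseudofinite iff for every $k$ its $\equiv_k$-class $\pi_k(u)$ is the class of some finite word, i.e.\ $\pi_k(u) \in \FAf(A)_k$. Thus $P = \bigcap_{k \geq 0} \pi_k^{-1}(\FAf(A)_k)$. Since each $\FAf(A)_k$ is a subset of the finite discrete space $\AL_k$, it is clopen, and each projection $\pi_k$ is continuous, so every $\pi_k^{-1}(\FAf(A)_k)$ is clopen in $\AL$; their intersection $P$ is therefore closed. (This simultaneously shows $P$ is the inverse limit of the chain~\eqref{eq:FAchain}, consistent with the identification with $\FA(A)$.)

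The main obstacle, such as it is, is conceptual rather than technical: one must be careful that the operation ``concatenation up to elementary equivalence'' is well-defined on classes, so that statements about representatives $U, V$ in Lemma~\ref{lem:productpseudofinite} transfer to statements about the classes $u, v$ — but this is exactly what Corollary~\ref{cor:substinv} guarantees, since pseudofiniteness is invariant under $\equiv$. Once that invariance is in hand, every claim reduces directly to the two halves of Lemma~\ref{lem:productpseudofinite} and the clopen-preimage argument, with no genuinely new combinatorial content required.
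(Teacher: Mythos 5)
Most of your argument coincides with the paper's: the submonoid and prime-ideal claims are read off from the two directions of Lemma~\ref{lem:productpseudofinite} exactly as in the paper, and your closedness argument, writing $P=\bigcap_{k\geq 0}\pi_k^{-1}(\FAf(A)_k)$ via Lemma~\ref{lem:psfinchar}, is a harmless variant of the paper's $P=\bigcap_{\phi\in\TAf}\widehat{\phi}$ (both exhibit $P$ as an intersection of clopens). The remark about well-definedness via Corollary~\ref{cor:substinv} is also fine.

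There is, however, a genuine gap in the identification of $P$ with $\FA(A)$. You dismiss this as ``already delivered by Theorem~\ref{thm:start} together with the monoid structure from Theorem~\ref{thm:AL}'', but Theorem~\ref{thm:start} only provides a \emph{homeomorphism of the underlying Stone spaces}, and Theorem~\ref{thm:AL} only says that $\AL$ (hence its closed submonoid $P$) is a pro-aperiodic monoid. Neither result tells you that the homeomorphism of Theorem~\ref{thm:start} intertwines the concatenation product on $P$ with the multiplication of $\FA(A)$ --- and the proposition asserts an identification of topological \emph{monoids}, not just of spaces. The paper closes exactly this gap: the homeomorphism sends each finite word $w\in\FA(A)$ to its elementary equivalence class in $P$, so it preserves multiplication on the set of finite words; since that set is dense in $\FA(A)$ and multiplication is continuous on both sides, the homeomorphism is a monoid isomorphism. (Alternatively one could verify the universal property of $\FA(A)$ for $P$ directly, but that too requires an argument.) Your proof needs some such step before the conclusion ``$P$ is the free pro-aperiodic monoid'' is justified.
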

\begin{proof}
\newcommand{\PF}{\mathrm{PF}(A)}
Let us denote by $\PF$ the set of classes of pseudofinite $A$-words in $\AL$. The set $\PF$ is closed because it can be written as the intersection $\bigcap_{\phi \in \TAf} \widehat{\phi}$. It is a submonoid (whose complement is a prime ideal) by Lemma~\ref{lem:productpseudofinite}. It now follows from Theorem~\ref{thm:AL} that $\PF$ is a pro-aperiodic monoid, and by Theorem~\ref{thm:start} its underlying space is homeomorphic to $\FA(A)$. The homeomorphism of Theorem~\ref{thm:start} sends any finite word $w$ in $\FA(A)$ to the elementary equivalence class of that finite word in $\PF$. Thus, the homeomorphism clearly preserves multiplication of finite words. Since the finite words are a dense subset of $\FA(A)$, it is an isomorphism for the monoid structure.
\end{proof}

\begin{convention}
In view of Proposition~\ref{prop:FAinAL}, we henceforth identify the free pro-aperiodic monoid $\FA(A)$ with the closed submonoid $\mathrm{PF}(A)$ of $\AL$ consisting of the elementary equivalence classes of pseudofinite $A$-words, and we will no longer make a notational distinction between the two. Moreover, throughout the paper except in Section~\ref{sec:wpomega}, we tacitly consider $A$-words up to isomorphism.
\end{convention}

Recall that in Section~\ref{sec:proAP} we defined $\AL_k$ to be the finite set of $\equiv_k$-equivalence classes of $A$-words. By Proposition~\ref{prop:substinv-k}, $\AL_k$ carries a monoid multiplication for each $k$, which is given in the same way as the multiplication on $\AL$. Hence, the chain in (\ref{eq:ALchain}) defined in the previous section, which exhibited the topological space $\AL$ as the inverse limit of finite sets $\AL_k$, now also exhibits the topological monoid $\AL$ as the inverse limit of finite monoids $\AL_k$. The same remarks apply to $\FA(A)$ and $\FAf(A)_k$.

Notice that, for every $k$, $\AL_k$ is an aperiodic monoid. This is true because $\AL_k$ is a finite quotient of the pro-aperiodic monoid $\AL$, but we can say something more precise. Indeed, for any $A$-word $W$, Proposition~\ref{prop:substinv-k} implies that $W^{2^k-1} \equiv_k W^{2^k}$, since $2^{k}-1 \equiv_k 2^k$ by Lemma~\ref{lem:oneletter}. Therefore, for any $w \in \AL_k$, we have $w^{2^k-1} = w^{2^k}$.

\begin{remark}\label{rem:blockproduct}
It is essentially implicit in~\cite{Str1994} that a formula $\phi$ has quantifier depth at most $k$ if and only if the syntactic monoid of the language of $\phi$ belongs to a strongly bracketed iterated block product of $k$ copies of the pseudovariety of semilattices.  In fact, using Lemma~\ref{lem:EFgames} and a well known description of the free object in a block product of locally finite pseudovarieties, due to Almeida (cf.~\cite{Almeida:book}), it is easy to see that $\FAf(A)_k$ is the relatively free monoid on $A$ in the strongly bracketed iterated block product of $k$ copies of the pseudovariety of semilattices.  We will show in future work that, more generally,  $\AL_k$ belongs to the strongly bracketed iterated block product of $k$ copies of the pseudovariety of semilattices.
\end{remark}

\begin{proposition}\label{prop:conthom}
Let $(U_b)_{b \in B}$ be a $B$-indexed collection of $A$-words. The function $f \colon \BL \to \AL$, which sends an element $[V]_{\equiv}$ of $\BL$ to $[V[b/U_b]]_\equiv$, is a well-defined continuous homomorphism.

Moreover, if each $U_b$ is pseudofinite, then $f$ restricts to a map $\FA(B) \to \FA(A)$, and any continuous homomorphism from $\FA(B)$ to $\FA(A)$ arises in this manner.
\end{proposition}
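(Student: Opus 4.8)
The plan is to dispatch the first three assertions directly from the results already established. Well-definedness of $f$ is exactly Corollary~\ref{cor:substinv} in the special case $U_b = U_b'$: if $V \equiv V'$, then $V[b/U_b] \equiv V'[b/U_b]$. For continuity, I would apply the criterion of Proposition~\ref{prop:ALchain}(2) with $n=1$ and $B_1 = B$: Proposition~\ref{prop:substinv-k} shows that $V \equiv_k V'$ already implies $V[b/U_b] \equiv_k V'[b/U_b]$, so one may take $m(k) = k$. To see that $f$ is a monoid homomorphism, the key observation is that substitution distributes over concatenation up to isomorphism: unwinding the definition of the lexicographic order on $\bigsqcup_{i} |U_{(VW)(i)}|$ shows that $(VW)[b/U_b]$ is isomorphic as an $A$-word to $(V[b/U_b])\,(W[b/U_b])$, since $|VW| = |V| \sqcup |W|$ splits the disjoint union into a $V$-block preceding a $W$-block. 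As isomorphic words are elementarily equivalent, $f([V]_\equiv [W]_\equiv) = f([V]_\equiv) f([W]_\equiv)$, and $f$ visibly sends the class of $\varepsilon$ to itself.

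For the restriction to pseudofinite words, suppose each $U_b$ is pseudofinite and $V$ is pseudofinite. I would invoke Lemma~\ref{lem:psfinchar}: for each $k$ pick finite words $V_k \equiv_k V$ and $U_{b,k} \equiv_k U_b$ for $b \in B$. By Proposition~\ref{prop:substinv-k}, $V[b/U_b] \equiv_k V_k[b/U_{b,k}]$, and the latter is a finite $A$-word because a substitution of finitely many finite words into a finite word is finite. Hence $V[b/U_b]$ is $k$-equivalent to a finite word for every $k$, so it is pseudofinite by Lemma~\ref{lem:psfinchar}. Thus $f$ carries $\FA(B)$ into $\FA(A)$.

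For the final assertion, I would use the universal property of the free pro-aperiodic monoid. Let $g \colon \FA(B) \to \FA(A)$ be any continuous homomorphism. For each letter $b \in B$, regarded as a one-letter $B$-word, $g$ sends its class to a class in $\FA(A)$; choose a pseudofinite representative $U_b$ of $g([b]_\equiv)$. By the first two parts, the substitution map $f$ determined by $(U_b)_{b\in B}$ is a continuous homomorphism that, by the previous paragraph, lands in $\FA(A)$. Now $f$ and $g$ agree on each generator $b$, since $f([b]_\equiv) = [b[c/U_c]]_\equiv = [U_b]_\equiv = g([b]_\equiv)$. Because $\FA(B)$ is the free pro-aperiodic monoid on $B$ and $\FA(A)$ is pro-aperiodic, a continuous homomorphism out of $\FA(B)$ is determined by its restriction to $B$; hence $f = g$.

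The routine parts are the first three assertions and the pseudofinite restriction. The step requiring the most care is the last one, where one must be sure that the substitution map $f$ built from the representatives $U_b = g([b]_\equiv)$ is genuinely a continuous homomorphism into $\FA(A)$ --- this is precisely why the earlier parts are needed --- before the uniqueness clause of the universal property can be applied to conclude $f = g$.
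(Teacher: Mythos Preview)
Your proposal is correct and follows essentially the same route as the paper: well-definedness from Corollary~\ref{cor:substinv}, continuity from Proposition~\ref{prop:substinv-k} via the criterion in Proposition~\ref{prop:ALchain}(2), the homomorphism property from the isomorphism $(VW)[b/U_b] \cong (V[b/U_b])(W[b/U_b])$, the pseudofinite restriction via Lemma~\ref{lem:psfinchar}, and the final assertion by choosing representatives $U_b$ in $g(b)$ and invoking the uniqueness clause of the universal property of $\FA(B)$.
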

\begin{proof}
The function $f$ is well-defined by Corollary~\ref{cor:substinv}, it is continuous by Proposition~\ref{prop:substinv-k} and Proposition~\ref{prop:ALchain}.2, and it is a homomorphism because the $A$-words $(VV')[b/U_b]$ and $V[b/U_b] V'[b/U_b]$ are isomorphic.

For the moreover statement, notice that if each $U_b$ is pseudofinite and $V$ is pseudofinite, then $V[b/U_b]$ is pseudofinite. Indeed, using Lemma~\ref{lem:psfinchar}, for any $k \geq 0$, if $U_b'$ are finite $A$-words such that $U_b \equiv_k U_b'$ and $V'$ is a finite $B$-word such that $V \equiv_k V'$, then by Proposition~\ref{prop:substinv-k}, $V[b/U_b] \equiv_k V'[b/U_b']$, and the latter is a finite $A$-word.

If $g \colon \FA(B) \to \FA(A)$ is a continuous homomorphism, pick, for each $b \in B$, some $A$-word $U_b$ in the class $g(b)$. The substitution arising from this $B$-indexed collection of $A$-words is a continuous homomorphism that extends $g|B \colon B \to \FA(A)$, so it must coincide with $g$ by the uniqueness part of the universal property of $\FA(B)$.
\end{proof}
We call a continuous homomorphism $f \colon \BL \to \AL$ a \emph{substitution} if it arises as in Proposition~\ref{prop:conthom}. We suspect that there exist continuous homomorphisms $f \colon \BL \to \AL$ that are not substitutions, but we leave it to further work to exhibit a concrete example.

\section{Axiomatizing pseudofinite words}\label{sec:TAfaxiom}
We saw in Example~\ref{exa:TAnotTAfin} that if the alphabet $A$ contains at least two letters, then $\TA$ and $\TAf$ do not coincide. Using Proposition~\ref{prop:substinv-k}, we can prove more.

\begin{theorem}\label{t:finite.basis}
If the alphabet $A$ contains at least two letters, then $\TAf$ is not finitely axiomatizable.
\end{theorem}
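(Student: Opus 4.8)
The plan is to show that no finite set of sentences can axiomatize $\TAf$ by exhibiting, for each candidate finite axiomatization, a word that satisfies all those axioms but is nonetheless not pseudofinite. Since any finite axiomatization is equivalent to a single sentence, and that sentence has some fixed quantifier depth $k$, it suffices to produce, for every $k$, an $A$-word $W_k$ that is $\equiv_k$-equivalent to some finite word (hence satisfies every depth-$k$ sentence in $\TAf$) but is \emph{not} pseudofinite. If such $W_k$ exist for all $k$, then no single depth-$k$ sentence can separate the finite words from all non-pseudofinite words, and $\TAf$ cannot be finitely axiomatized.

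To construct these $W_k$, I would start from the non-pseudofinite word $W$ of Example~\ref{exa:TAnotTAfin}: over $\{a,b\}\subseteq A$, the word $a^{\mathbb N} b^{\mathbb N^{\op}}$, i.e.\ the $\lambda$-power-style substitution giving an infinite block of $a$'s followed (in reverse order) by an infinite block of $b$'s, which fails the $\TAf$-sentence asserting that the existence of an $a$ implies the existence of a \emph{last} $a$. The key point is that this ``badness'' is witnessed at a bounded quantifier depth. I would instead consider $W_k := a^{n_k} b^{n_k}$-type witnesses built so as to be genuinely non-pseudofinite while agreeing with a finite word up to depth $k$. Concretely, set $n_k := 2^k - 1$ and let $\lambda_k$ be an infinite discrete linear order with endpoints; by Lemma~\ref{lem:oneletter}, the one-letter word on $\lambda_k$ is $\equiv_k$ to the finite one-letter word of length $n_k$. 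Taking $W_k := a^{\lambda_k} \cdot b^{\lambda_k^{\op}}$ (an infinite block of $a$'s with no last $a$, followed by an infinite block of $b$'s with no first $b$), Proposition~\ref{prop:substinv-k} gives $W_k \equiv_k a^{n_k} b^{n_k}$, a \emph{finite} word, so $W_k$ satisfies every $\TAf$-sentence of depth at most $k$. On the other hand $W_k$ is not pseudofinite, since, exactly as in Example~\ref{exa:TAnotTAfin}, it contains $a$-positions but no last $a$-position, violating a sentence of $\TAf$.

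Assembling these pieces: suppose toward a contradiction that $\TAf$ were axiomatized by a finite set $\Sigma$, equivalently by a single sentence $\sigma$ of some quantifier depth $k$. Then $\sigma$ is a logical consequence of $\TAf$, so $W_k \models \sigma$ because $W_k \equiv_k$ a finite word and $\sigma$ has depth $\le k$; but then $W_k$ models the full theory $\TAf$, contradicting that $W_k$ is not pseudofinite. This yields the theorem. The one place requiring a little care is justifying that depth-$k$ agreement with a finite word really does force $W_k$ to satisfy \emph{all} depth-$\le k$ consequences of $\TAf$: every sentence of $\TAf$ of depth at most $k$ holds in all finite words, in particular in $a^{n_k}b^{n_k}$, and is invariant under $\equiv_k$, hence holds in $W_k$.

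I expect the main obstacle to be purely in the bookkeeping of the construction of $W_k$ rather than in any deep idea: one must verify that the chosen $W_k$ is truly not pseudofinite (this is immediate from the same sentence used in Example~\ref{exa:TAnotTAfin}, which has depth $2$ and is independent of $k$, so it is \emph{not} among the depth-$\le k$ sentences that $W_k$ must respect only when $k\ge 2$ — a subtlety worth checking), and that the $\equiv_k$-equivalence to a finite word follows cleanly from Lemma~\ref{lem:oneletter} together with Proposition~\ref{prop:substinv-k} applied to the two-letter skeleton $ab$. Since the violating sentence has fixed depth $2$ while the matching finite word is obtained at depth $k$, one simply notes that for each $k$ the word $W_k$ passes all depth-$\le k$ membership tests yet fails a (necessarily higher-complexity, but fixed) sentence of $\TAf$; because \emph{every} finite axiomatization has some finite depth, no finite axiomatization can capture the failure witnessed across all the $W_k$ simultaneously.
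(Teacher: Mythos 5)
Your overall strategy --- for each $k$, exhibit a non-pseudofinite word that is $\equiv_k$-equivalent to a finite word, then note that any finite axiomatization has some fixed maximal quantifier depth $k$ and hence cannot exclude that word --- is exactly the paper's strategy, and that outer reduction is fine. The gap is in the construction of $W_k$, and it is not bookkeeping: the two properties you need are mutually contradictory for the word you propose. If you read $W_k=a^{\lambda_k}\cdot b^{\lambda_k^{\op}}$ literally, with $\lambda_k$ a discrete linear order with endpoints, then $a^{\lambda_k}$ has a last position, so $W_k$ has a last $a$; moreover each factor is pseudofinite by Lemma~\ref{lem:oneletter} and Lemma~\ref{lem:psfinchar}, hence so is $W_k$ by Lemma~\ref{lem:productpseudofinite}, and it witnesses nothing. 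If instead you mean the word of Example~\ref{exa:TAnotTAfin}, namely $a^{\mathbb N}b^{\mathbb N^{\op}}$ with no last $a$, then it violates a sentence of $\TAf$ of quantifier depth $2$; since that sentence holds in every finite word and is invariant under $\equiv_2$, this word is \emph{not} $\equiv_k$-equivalent to any finite word for any $k\geq 2$. Concretely, the last $a$-position of $a^{n}b^{n}$ has no $1$-corresponding position in $a^{\mathbb N}b^{\mathbb N^{\op}}$ (every $a$-position there still has $a$'s to its right), so Lemma~\ref{lem:EFgames} already refutes $\equiv_2$. Your closing remark that the depth-$2$ violating sentence is ``not among the depth-$\le k$ sentences that $W_k$ must respect'' is backwards: a fixed depth-$2$ sentence \emph{is} of depth $\le k$ once $k\ge 2$, so a word that is $\equiv_k$ to a finite word must satisfy it.

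The missing idea --- which is the entire content of the paper's proof --- is that the witness of non-pseudofiniteness must itself require quantifier depth growing with $k$. The paper takes $W_k=(ab^{2^k})^{\mathbb N}(ab^{2^k-1})^{\mathbb N^{\op}}$: it is not pseudofinite because it has no last occurrence of the factor $ab^{2^k}$, but detecting this requires distinguishing $b^{2^k}$ from $b^{2^k-1}$, which by Lemma~\ref{lem:oneletter} cannot be done at depth $k$; Proposition~\ref{prop:substinv-k} then collapses $W_k$ to $(ab^{2^k-1})^{\mathbb N+\mathbb N^{\op}}\equiv_k(ab^{2^k-1})^{2^k-1}$, a finite word. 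Any correct proof needs this kind of $k$-dependent camouflage; a single fixed-depth defect, as in your $W_k$, cannot survive $\equiv_k$-equivalence to a finite word for all $k$.
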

\begin{proof}
Generalizing Example~\ref{exa:TAnotTAfin}, for each $k \geq 0$, consider the word $W_k$ defined as
\[ ab^{2^k}ab^{2^k}\dots \;\; \dots ab^{2^k-1}ab^{2^k-1}.\]
Then $W_k$ is not pseudofinite because it does not have a last occurrence of the factor $ab^{2^k}$.

{\bf Claim.} $W_k$ is $k$-equivalent to the finite word $(ab^{2^k-1})^{2^k-1}$.

{\it Proof of Claim.} Note first that $b^{2^k} \equiv_k b^{2^k-1}$ by Lemma~\ref{lem:oneletter}, so $ab^{2^k} \equiv_k ab^{2^k-1}$ by Proposition~\ref{prop:substinv-k}. Applying Proposition~\ref{prop:substinv-k} to the substitution of $U_c = ab^{2^k}$ and $U_d = U'_c = U'_d := ab^{2^k-1}$ into the $\{c,d\}$-word $c^\mathbb{N}d^{\mathbb{N}^\op}$, we get that $W_k$ is $k$-equivalent to $(ab^{2^k-1})^{\mathbb{N}+\mathbb{N}^\op}$. The latter, in turn, is $k$-equivalent to $(ab^{2^k-1})^{2^k-1}$ since $c^{\mathbb{N}+\mathbb{N^\op}}$ is $k$-equivalent to $c^{2^k-1}$ by Lemma~\ref{lem:oneletter}.\qed

Now, if $S$ is a finite set of sentences in $\TAf$, let $k$ be the maximum quantifier depth of sentences occurring in $S$. Then $W_k$ will satisfy all sentences in $S$, because it is $k$-equivalent to a finite word, but it is not a model of $\TAf$.  Thus $S$ does not axiomatize $\TAf$.
\end{proof}

Theorem~\ref{t:finite.basis} can be reformulated as saying that $\FA(A)$ is not an open subspace of $\AL$.  Indeed, since $\FA(A)$ is closed, it is open if and only if it is clopen, which occurs if and only if $\TAf$ is finitely axiomatizable.

There is a natural infinite axiomatization of $\TAf$, obtained by adding an induction scheme \cite[Thm.~3.1.1]{Doe1987}. We recall how this can be done. For every first-order formula $\phi(x)$ in one free variable, consider the first-order sentence
\[\mathrm{Last}_{\phi} \colon \exists x \phi(x) \to \exists x (\phi(x) \wedge \forall y (y > x \to \neg \phi(y))), \]
which expresses that if $\phi(x)$ is true for some position $x$, then there is a last position where $\phi(x)$ is true. Clearly, $\mathrm{Last}_{\phi}$ is true in every finite word for every first-order formula $\phi(x)$.
\begin{proposition}\label{prop:TAfax}
The theory of finite $A$-words, $\TAf$, is generated by adding to $\TA$ the sentences
$\mathrm{Last}_{\phi}$ for each first-order formula $\phi(x)$ in one free variable.
\end{proposition}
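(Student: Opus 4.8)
The plan is to show that the theory $T := \TA \cup \{\mathrm{Last}_\phi : \phi(x) \text{ a formula in one free variable}\}$ has exactly the same consequences as $\TAf$. One inclusion is immediate: every axiom of $\TA$ holds in all finite words, and each $\mathrm{Last}_\phi$ holds in every finite word, since a nonempty finite set of positions has a greatest element; hence $T \subseteq \TAf$ and so every consequence of $T$ lies in $\TAf$. By the completeness theorem, the reverse inclusion $\TAf \subseteq \mathrm{Th}(T)$ is equivalent to the statement that every model of $T$ is a model of $\TAf$, i.e.\ is pseudofinite. So the whole problem reduces to showing that \emph{every $W \models T$ is pseudofinite}, which by Lemma~\ref{lem:psfinchar} amounts to producing, for each $k$, a finite word $\equiv_k W$.

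Before the main induction I would record two preliminary facts about models of $T$. First, rays and intervals of a model of $T$ are again models of $T$: fixing a position $i$ and relativizing, the sentence $\mathrm{Last}_\psi$ for $\psi(x) := (x<i) \wedge \phi^{<i}(x)$ (with quantifiers bounded below $i$) transfers the maximum principle for $\phi$ from $W$ to $W\be{i}$, and symmetrically for $W\af{i}$ and closed intervals. Second, the \emph{dual} scheme is derivable, $T \vdash \mathrm{First}_\phi$ for all $\phi$: if some nonempty definable set had no least element, then its set of strict lower bounds $\{z : \forall x(\phi(x) \to z < x)\}$ would be nonempty—the first position belongs to it, since otherwise it would already be the least $\phi$-position—yet have no greatest element, contradicting $\mathrm{Last}$ applied to that set. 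Consequently, in any model of $T$ every nonempty first-order definable set of positions has both a least and a greatest element; this is the finiteness property that drives everything.

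The main construction proceeds by induction on $k$ using Ehrenfeucht-Fraiss\'e games. The case $k=0$ is trivial. For the inductive step, by Lemma~\ref{lem:EFgames} it suffices to build a finite word realizing exactly the set $\mathcal{T}$ of $k$-correspondence types $(\,[W\be{i}]_{\equiv_k},\, W(i),\, [W\af{i}]_{\equiv_k}\,)$ occurring in $W$. Each occurring left- and right-type is, by the first preliminary fact together with the induction hypothesis applied to the rays (which are themselves models of $T$), the class of a finite word, hence an element of $\FAf(A)_k$. Moreover each type-class $S_{L,a,R} = \{x : \sigma_L^{<x}(x) \wedge P_a(x) \wedge \sigma_R^{>x}(x)\}$ is first-order definable, where $\sigma_L,\sigma_R$ are the depth-$k$ game-normal sentences defining $L,R$, relativized to the rays of $x$; so by the second preliminary fact each nonempty $S_{L,a,R}$ has a least and a greatest position. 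Using these finitely many extremal positions as cut points, I would decompose $W$ into boundedly many pieces and, on each piece, collapse the infinite homogeneous stretches to finite ones via Lemma~\ref{lem:oneletter} and the substitution-invariance of $\equiv_k$ (Proposition~\ref{prop:substinv-k}), arriving at a finite word realizing the same set of types.

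The main obstacle is exactly this last compression step: translating the min/max control of the definable type-sets into an actual finite word realizing $\mathcal{T}$. One must check that the extremal positions cut $W$ into finitely many intervals whose interiors are, up to $\equiv_k$, powers $X^\lambda$ of a single finite pattern $X$, so that Lemma~\ref{lem:oneletter} replaces each $\lambda$ by a bounded finite exponent. This analysis of discrete linear orders carrying a definable-set minimum/maximum principle is the real content of the statement (it is essentially Doets' argument in \cite{Doe1987}); by contrast the reduction in the first two paragraphs and the EF-game bookkeeping are routine once the preliminary facts are in place. A clean alternative worth trying is to bypass the explicit decomposition and instead play the $(k+1)$-round game against a suitable finite word directly, letting player $\exists$ use the least and greatest positions as canonical responses; this may avoid the structural case analysis at the cost of a more intricate strategy description.
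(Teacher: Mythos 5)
Your reduction to ``every model of $T$ is pseudofinite'' and your verification that $T\subseteq \TAf$ match the paper, but from there the argument diverges and has genuine gaps. The paper's proof is a single application of the scheme: fix $k$, let $\phi(x)$ be the (parameter-free) formula asserting that $U\be{x}$ is $k$-equivalent to some finite word (a finite disjunction of relativized game-normal sentences), note that the first position satisfies $\phi$, and take the \emph{last} position $i$ satisfying $\phi$. By Proposition~\ref{prop:substinv-k}, $U({\leq}i)$ is then $k$-equivalent to the finite word $Va$, where $V$ is a finite word with $V\equiv_k U\be{i}$ and $a=U(i)$; hence if $i$ had a successor, that successor would also satisfy $\phi$, so $i$ must be the last position of $U$ and $U\equiv_k Va$. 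No induction on $k$, no dual scheme, no decomposition of $W$, and no analysis of definable type-classes is needed.

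Two concrete problems with your route. First, your preliminary fact that rays of a model of $T$ are again models of $T$ is justified by applying $\mathrm{Last}_\psi$ to $\psi(x):=(x<i)\wedge\phi^{<i}(x)$; but this $\psi$ contains the position $i$ as a \emph{parameter}, whereas the scheme as stated ranges only over parameter-free formulas in one free variable, and deriving the parametrized scheme from the parameter-free one is not automatic and is not addressed. Second, and more seriously, the compression step --- cutting $W$ at the extremal positions of the definable type-classes and collapsing the pieces to $k$-equivalent finite words --- is exactly the hard part, and you acknowledge it is only a plan; the claim that the interiors of the resulting intervals are, up to $\equiv_k$, powers of a single finite pattern is asserted without proof, and producing a single finite word realizing exactly the set of $k$-correspondence types of $W$ (as Lemma~\ref{lem:EFgames} requires) does not follow merely from each individual ray being $k$-equivalent to a finite word. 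As it stands the proposal is a strategy outline for a much harder argument (essentially Doets') than the one the statement requires, with its core step missing.
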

\begin{proof}
Let $U$ be an $A$-word which satisfies $\mathrm{Last}_{\phi}$ for each $\phi(x)$. Let $k \geq 0$ be arbitrary. Let $\phi(x)$ be a first-order formula expressing that `the subword $U\be{x}$ is $k$-equivalent to some finite word'; such a formula exists because $\equiv_k$ has finitely many classes, each of which is first-order definable, and one can relativize formulas with respect to a position, cf.~Lemma~\ref{lem:relativize}. Let $f$ denote the first position of the word $U$. Since $U\be{f}$ is the empty word, $\phi(f)$ holds. Since $U$ satisfies the scheme, pick the last position $i$ where $\phi(i)$ holds. Pick a finite word $V$ that is $k$-equivalent to $U\be{i}$. Then, by Proposition~\ref{prop:substinv-k}, $U({\leq}i) \equiv_k Va$, where $a := U(i)$. Since $i$ is the last position where $\phi(i)$ holds, we must have that $i$ is the last position in $U$, so $U$ is $k$-equivalent to the finite word $Va$. By Lemma~\ref{lem:psfinchar}, since $k$ was arbitrary, $U$ is pseudofinite.
\end{proof}

\section{Saturated words}\label{sec:saturation}
In this section we introduce the useful notions of types and saturation from model theory, and prove that substitution preserves saturation. We do not introduce the notions in their full model-theoretic generality, but apply them immediately to our context of words. We do this so that the statements of our results can be understood without knowledge of the model-theoretic background, although the proofs do rely on results from model theory. We will justify and explain our use of the model-theoretic terminology in the Appendix.

Let $A$ be a finite alphabet and let $U$ be an $A$-word. For any position $i \in |U|$, define the triple
\[t^U(i) := ([U\be{i}]_{\equiv},U(i),[U\af{i}]_{\equiv}),\]
an element of the Cartesian product $\AL \times A \times \AL$. We call $t^U(i)$ the \emph{type} of $i$ in $U$. We prove in Proposition~\ref{prop:onetypes} that this definition of type is equivalent to the usual definition of (complete $1$-)type in model theory. We refer to the product space $\AL \times A \times \AL$, where the middle component $A$ has the discrete topology, as the \emph{type space}.

We write $\RT(U)$ for the \emph{set of types realized in $U$}, i.e., $\RT(U)$ is the subset $\{t^U(i) \ | \ i \in |U|\}$ of the type space. If $V$ is an $A$-word elementarily equivalent to $U$ and $j \in |V|$, then we say the type $t^V(j)$ is \emph{consistent with $U$}. We write $\CT(U)$ for the \emph{set of types consistent with $U$}, i.e., $\CT(U)$ is the subset $\{t^V(j) \ | \ V \equiv U, j \in |V|\}$ of the type space.

The $A$-word $U$ is \emph{weakly saturated} if every type consistent with $U$ is realized in $U$, i.e., $\CT(U) = \RT(U)$. The $A$-word $U$ is \emph{$\omega$-saturated} if every closed interval $U[i,j]$ in $U$ is weakly saturated. Note that a closed interval in an $\omega$-saturated word is $\omega$-saturated. Also, any finite $A$-word is $\omega$-saturated. We prove in Proposition~\ref{prop:omegasatequiv} that this definition is equivalent to the usual definition of $\omega$-saturation in model theory. Following usual model theoretic terminology, we say that an $A$-word is \emph{countably saturated} if it is $\omega$-saturated and its underlying set is countable.

\begin{example}\label{exa:saturated}
Consider the case of a one-letter alphabet. All infinite $\{a\}$-words are elementarily equivalent (see Lemma~\ref{lem:oneletter}) and pseudofinite. Hence, $\FA(\{a\}) = \Lambda(\{a\})$ is topologically isomorphic to the topological monoid $\mathbb{N} \cup \{\omega\}$, i.e., the one-point compactification of $\mathbb{N}$ with the usual addition, where $\omega$ is an absorbing element.

The space of types of $\{a\}$-words is $\FA(\{a\}) \times \{a\} \times \FA(\{a\})$. Concretely, types of $\{a\}$-words are of one of the following four forms:
\begin{itemize}
\item $(a^n,a,a^m)$ for $n,m \in \mathbb{N}$;
\item $(a^n,a,a^\omega)$ for $n \in \mathbb{N}$;
\item $(a^\omega,a,a^m)$ for $m \in \mathbb{N}$;
\item $(a^\omega,a,a^\omega)$.
\end{itemize}

Consider the following infinite $\{a\}$-words:
\begin{enumerate}
\item $W_1 := a^{\mathbb{N} + \mathbb{N}^{\mathrm{op}}}$,
\item $W_2 := a^{\mathbb{N} + \mathbb{Z} + \mathbb{N}^{\mathrm{op}}}$,
\item $W_3 := a^{\mathbb{N} + \mathbb{Q} \times \mathbb{Z} + \mathbb{N}^{\mathrm{op}}}$, where $\mathbb{Q} \times \mathbb{Z}$ denotes the $\mathbb{Q}$-indexed lexicographic sum of copies of $\mathbb{Z}$.
\end{enumerate}
The word $W_1$ is not weakly saturated, because the elementarily equivalent word $W_2$ realizes the type $(a^\omega,a,a^\omega)$, which is not realized in $W_1$, that is, $(a^\omega,a,a^\omega) \in \CT(W_1) \setminus \RT(W_1)$. The word $W_2$ is weakly saturated, because it realizes all the types. However, $W_2$ is not $\omega$-saturated, because the ray to the left of $i$, where $i$ is any point in the summand $\mathbb{Z}$, is isomorphic to $W_1$, and not weakly saturated. Notice that any closed interval in the word in $W_3$ is either finite or isomorphic to $W_3$, using the well-known fact that any open interval in the order $\mathbb{Q}$ is isomorphic to $\mathbb{Q}$ (cf. e.g., \cite[p.~100]{Hod1993}). Since finite words and $W_3$ are weakly saturated, the word $W_3$ is in fact $\omega$-saturated, and even countably saturated.
\end{example}

One advantage of $\omega$-saturated words is that they realize any finite factorization of their elementary equivalence class.

\begin{lemma}\label{lem:satfact}
Let $W$ be an $\omega$-saturated $A$-word and suppose that $W\equiv W_1\cdots W_n$ with $W_1,\ldots, W_n$ non-empty $A$-words.  Then we can find positions $i_1<i_2<\cdots<i_{n-1}$ in $|W|$ such that $W\be{i_1}\equiv W_1$, $W({\geq}i_{n-1})\equiv W_n$ and $W[i_j,i_{j+1})\equiv W_{j+1}$ for $1\leq j\leq n-2$.
\end{lemma}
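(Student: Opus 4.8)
The plan is to induct on $n$, peeling off the leftmost factor at each stage by realizing a single type and then recursing on the closed right ray that remains. The base case $n=1$ is trivial. For the inductive step the first thing I would record is that $W$ itself is weakly saturated: since every $A$-word has a first position $f$ and a last position $\ell$, we have $W=W[f,\ell]$, a closed interval of the $\omega$-saturated word $W$, so $\CT(W)=\RT(W)$.

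To peel off $W_1$, group the factorization as the two-block product $V:=W_1\cdot(W_2\cdots W_n)$, which is elementarily equivalent to $W$ since grouping a concatenation does not change it up to isomorphism. Let $j\in|V|$ be the first position of the $(W_2\cdots W_n)$-block, which exists because that block is non-empty. Then $V\be{j}=W_1$ and $V({\geq}j)=W_2\cdots W_n$, so the type $t^V(j)=([W_1]_{\equiv},V(j),[V\af{j}]_{\equiv})$ is consistent with $W$. By weak saturation it is realized at some position $i_1\in|W|$, which gives $W\be{i_1}\equiv W_1$ outright. For the right-hand side I would reconstruct the closed ray $W({\geq}i_1)$ as the concatenation of the one-letter word $W(i_1)$ with the open ray $W\af{i_1}$; since $W(i_1)=V(j)$ and $W\af{i_1}\equiv V\af{j}$, Corollary~\ref{cor:substinv} yields $W({\geq}i_1)\equiv V({\geq}j)=W_2\cdots W_n$.

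It then remains to recurse. Because $A$-words have endpoints, $W({\geq}i_1)=W[i_1,\ell]$ is a closed interval of the $\omega$-saturated word $W$, hence itself $\omega$-saturated. Writing $W':=W({\geq}i_1)$ and applying the induction hypothesis to $W'$ and the factorization $W_2\cdots W_n$ produces positions $i_2<\cdots<i_{n-1}$ in $|W'|$ with $W'\be{i_2}\equiv W_2$, $W'({\geq}i_{n-1})\equiv W_n$, and $W'[i_j,i_{j+1})\equiv W_{j+1}$ for $2\le j\le n-2$. Since $W'$ begins at $i_1$, these intervals of $W'$ coincide with $W[i_1,i_2)$, the intervals $W[i_j,i_{j+1})$, and $W({\geq}i_{n-1})$ inside $W$, giving the asserted equivalences; and $i_1<i_2$ because $W_2$ is non-empty, so $i_2$ cannot be the first position $i_1$ of $W'$. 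The one genuinely delicate point is the recovery of the right-hand factor: the type $t^W(i_1)$ remembers only the class of the \emph{open} ray $W\af{i_1}$ together with the letter at $i_1$, so the closed ray must be reassembled and Corollary~\ref{cor:substinv} invoked. Everything else is routine bookkeeping of intervals and indices.
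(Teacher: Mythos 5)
Your proof is correct and follows essentially the same route as the paper's: induction on $n$, using weak saturation of $W$ to realize the type of a split position, and then recursing on an $\omega$-saturated closed sub-interval. The only (immaterial) differences are that you peel off the leftmost factor and recurse on the right ray where the paper peels off the rightmost factor and recurses on the left prefix, and that you spell out the reassembly of the closed ray $W({\geq}i_1)$ from the letter $W(i_1)$ and the open ray $W\af{i_1}$ via Corollary~\ref{cor:substinv}, a step the paper leaves implicit.
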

\begin{proof}
We proceed by induction on $n$ (for all $\omega$-saturated $A$-words).  If $n=1$, then there is nothing to prove.  Assume it is true for $n-1$ and $W\equiv W_1\cdots W_n$.  Put $U=W_1\cdots W_{n-1}$.  Since $W$ is weakly saturated, we can find $i_{n-1}\in |W|$ such that $W\be{i_{n-1}}\equiv U$ and $W({\geq}i_{n-1})\equiv W_n$.  Then since $W\be{i_{n-1}}$ is $\omega$-saturated, by induction we can find $i_1<i_2<\cdots<i_{n-2}$ with $i_{n-2}<i_{n-1}$ and $W\be{i_1} \equiv W_1$ and $W[i_j,i_{j+1})\equiv W_{j+1}$ for $1\leq j\leq n-2$.  This completes the proof.
\end{proof}

\begin{lemma}\label{lem:satclosure}
Let $U$ be an $A$-word.
The set $\CT(U)$ is the topological closure of $\RT(U)$ in the type space.
In particular, $U$ is weakly saturated if and only if  $\RT(U)$ is closed.
\end{lemma}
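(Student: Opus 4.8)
The plan is to prove the equality $\CT(U)=\overline{\RT(U)}$ and deduce the ``in particular'' clause from it. That deduction is immediate: one always has $\RT(U)\subseteq\CT(U)$ (take $V=U$ and $j=i$ in the definition of $\CT$), so $U$ is weakly saturated, i.e.\ $\RT(U)=\CT(U)$, exactly when $\RT(U)$ coincides with $\overline{\RT(U)}=\CT(U)$, i.e.\ when $\RT(U)$ is closed. To handle the closures concretely, I would first record a neighbourhood basis: by Proposition~\ref{prop:ALchain}(1) and the discreteness of $A$, a neighbourhood basis of a point $(u,a,w)$ in the type space is given by the clopen sets $[u]_{\equiv_k}\times\{a\}\times[w]_{\equiv_k}$ for $k\geq 0$. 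Hence $(u,a,w)\in\overline{\RT(U)}$ if and only if for every $k$ there exists a position $i\in|U|$ with $U\be{i}\equiv_k u$, $U(i)=a$, and $U\af{i}\equiv_k w$.

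For the inclusion $\overline{\RT(U)}\subseteq\CT(U)$ I would show that $\CT(U)$ is already closed, via the clean description
\[ \CT(U)=\{(u,a,w)\in\AL\times A\times\AL \ : \ u\cdot[a]_{\equiv}\cdot w=[U]_{\equiv}\}, \]
where $[a]_{\equiv}$ is the class of the one-letter word $a$. Indeed, if $t^V(j)=(u,a,w)$ with $V\equiv U$, then $V\cong V\be{j}\cdot a\cdot V\af{j}$ gives $u\cdot[a]_{\equiv}\cdot w=[V]_{\equiv}=[U]_{\equiv}$ (using that concatenation is well defined up to $\equiv$, Corollary~\ref{cor:substinv}); conversely, given such $u,a,w$, choosing representatives $U_u\in u$ and $U_w\in w$ and setting $V:=U_u\,a\,U_w$ produces a word $V\equiv U$ that realizes the type $(u,a,w)$ at its middle position. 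Thus $\CT(U)$ is the preimage of the point $[U]_{\equiv}$ under the map $\mu\colon(u,a,w)\mapsto u\cdot[a]_{\equiv}\cdot w$, which is continuous by Theorem~\ref{thm:AL} together with the finiteness and discreteness of $A$. Since $\AL$ is Hausdorff, $\{[U]_{\equiv}\}$ is closed, so $\CT(U)=\mu^{-1}(\{[U]_{\equiv}\})$ is closed; as it contains $\RT(U)$, it contains $\overline{\RT(U)}$.

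The substantive direction is $\CT(U)\subseteq\overline{\RT(U)}$, and this is the step I expect to carry the real weight, since it is precisely where ``consistency'' must be upgraded to ``approximate realizability'' through the Ehrenfeucht--Fraiss\'e machinery. Given $(u,a,w)\in\CT(U)$, I would fix $V\equiv U$ and $j\in|V|$ with $t^V(j)=(u,a,w)$, and fix $k\geq 0$. Since $V\equiv U$ we have $V\equiv_{k+1}U$, so Lemma~\ref{lem:EFgames} supplies a position $i\in|U|$ that $k$-corresponds to $j$; that is, $U\be{i}\equiv_k V\be{j}$, $U(i)=V(j)=a$, and $U\af{i}\equiv_k V\af{j}$. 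Consequently $t^U(i)\in\RT(U)$ lies in the basic neighbourhood $[u]_{\equiv_k}\times\{a\}\times[w]_{\equiv_k}$ of $(u,a,w)$. As $k$ was arbitrary, $(u,a,w)\in\overline{\RT(U)}$, which finishes the equality. The only delicate bookkeeping is matching the quantifier depths so that Lemma~\ref{lem:EFgames} is invoked at level $k+1$ to produce a $k$-correspondence; once that is arranged, the argument is entirely formal.
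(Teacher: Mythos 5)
Your proposal is correct and follows essentially the same route as the paper: closedness of $\CT(U)$ as the fibre of $[U]_{\equiv}$ under the continuous map $(u,a,w)\mapsto uaw$, and density of $\RT(U)$ in $\CT(U)$ via Lemma~\ref{lem:EFgames} applied at quantifier depth $k+1$ to produce a $k$-corresponding position. The only cosmetic differences are that you use the basic clopens $[u]_{\equiv_k}\times\{a\}\times[w]_{\equiv_k}$ where the paper uses $\hat\phi\times\{a\}\times\hat\psi$, and that you spell out the identification of $\CT(U)$ with the fibre, which the paper asserts without detail.
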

\begin{proof}
Notice that the function $\AL \times A \times \AL \to \AL$ which sends $(u,a,v)$ to $uav$ is continuous, since multiplication is continuous. Therefore, $\CT(U)$, which is the inverse image of the point $[U]_{\equiv}$ under this map, is closed. To see that $\CT(U)$ is the closure of $\RT(U)$, let $t \in \CT(U)$, and let $\hat \phi \times \{a\} \times \hat \psi$ be an arbitrary basic neighbourhood of $t$. By definition of $\CT(U)$, pick $V \equiv U$ and $j \in |V|$ such that $t = t^V(j)$. Let $k$ be the maximum of the quantifier depths of $\phi$ and $\psi$. Since in particular $U \equiv_{k+1} V$, by Lemma~\ref{lem:EFgames} pick $i \in |U|$ which $k$-corresponds to the position $j$ in $V$. Then $U(i) = a$, and, since $V\be{j} \models \phi$, we have $U\be{i} \models \phi$, and similarly $U\af{i} \models \psi$. Thus, $t' = t^U(i)$ is a point of $\RT(U)$ that lies in $\hat \phi \times \{a\} \times \hat \psi$.
\end{proof}
It is well-known in model theory that any elementary equivalence class contains an $\omega$-saturated model, which typically has an uncountable underlying set; see Proposition~\ref{prop:satexist} in the Appendix.

Our main theoretical result about $\omega$-saturated words is that they are \emph{stable under substitutions} (Theorem~\ref{thm:omegasubst}). We first prove this for weakly saturated words, and then deduce it for $\omega$-saturated words.

For the proof of our next theorem, some more notation will be useful. We denote by $\alpha$ the continuous two-sided action of $\AL$ on the type space, defined for $t = (x,a,y) \in \AL \times A \times \AL$ and $u, v\in \AL$ by $\alpha(u,v,t) := (ux,a,yv)$. If $S \subseteq \AL \times \AL$ and $T \subseteq \AL \times A \times \AL$, we write $S \circ_\alpha T$ for the set $\{\alpha(u,v,t) \ | \ (u,v) \in S, t \in T\}$. For each $a \in A$, we denote by $\iota_a$ the continuous inclusion of $\AL \times \AL$ into the type space which sends $(u,v)$ to $(u,a,v)$.
\begin{theorem}\label{thm:satsubst}
If $V$ is a weakly saturated $B$-word and $(U_b)_{b \in B}$ are $A$-words that are weakly saturated, then $V[b/U_b]$ is weakly saturated.
\end{theorem}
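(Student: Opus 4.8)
The plan is to use the topological characterization of weak saturation from Lemma~\ref{lem:satclosure}: since the type space is compact Hausdorff, it suffices to show that $\RT(W)$ is closed, where $W := V[b/U_b]$. First I would obtain an explicit description of the types realized in $W$ in terms of those realized in $V$ and in the $U_b$. Writing $f \colon \BL \to \AL$ for the substitution homomorphism of Proposition~\ref{prop:conthom}, a position of $W$ is a pair $(i,j)$ with $i \in |V|$ and $j \in |U_{V(i)}|$; putting $b := V(i)$, the left part $W\be{(i,j)}$ is the concatenation of $V\be{i}[b'/U_{b'}]$ with $U_b\be{j}$, and dually $W\af{(i,j)}$ is $U_b\af{j}$ concatenated with $V\af{i}[b'/U_{b'}]$, while the letter at $(i,j)$ is $U_b(j)$. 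Hence
\[ t^W(i,j) = \alpha\big(f([V\be{i}]_\equiv),\, f([V\af{i}]_\equiv),\, t^{U_b}(j)\big). \]
This is the one genuinely computational step, and the only subtle point is to get the left and right parts of a position in a substituted word correct; it follows directly from the definition of substitution and the well-definedness of concatenation on $\equiv$-classes.

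From this identity I would read off a closed-form description of $\RT(W)$. For each $b \in B$, set $S_b := \{(f([V\be{i}]_\equiv),\, f([V\af{i}]_\equiv)) \mid i \in |V|,\ V(i) = b\} \subseteq \AL \times \AL$. As $j$ ranges over $|U_b|$, the triple $t^{U_b}(j)$ ranges over $\RT(U_b)$, and as $i$ ranges over the $b$-positions of $V$, the pair ranges over $S_b$, so the identity above gives
\[ \RT(W) = \bigcup_{b \in B} S_b \circ_\alpha \RT(U_b). \]

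It then remains to show that each set on the right is closed and to take the (finite) union. By Lemma~\ref{lem:satclosure}, weak saturation of $U_b$ gives that $\RT(U_b)$ is closed, hence compact, and weak saturation of $V$ gives that $\RT(V)$ is closed. The set $S_b$ is the image of the closed (hence compact) set $\RT(V) \cap (\BL \times \{b\} \times \BL)$ under the continuous map $\BL \times B \times \BL \to \AL \times \AL$, $(x,b',y) \mapsto (f(x), f(y))$ — continuity of $f$ is part of Proposition~\ref{prop:conthom} — so $S_b$ is compact, hence closed. Since $\alpha$ is continuous, $S_b \circ_\alpha \RT(U_b)$ is the continuous image of the compact set $S_b \times \RT(U_b)$, hence compact and therefore closed. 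Finally $B$ is finite, so $\RT(W)$ is a finite union of closed sets and is closed. By Lemma~\ref{lem:satclosure} again, $W$ is weakly saturated.

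The main obstacle is really the bookkeeping in the type identity: expressing $t^W(i,j)$ correctly through $\alpha$ requires care with how substitution splits a word to the left and right of a position, especially since $(i,j)$ sits inside a single substituted block $U_b$. Once $\RT(W)$ is written in the form $\bigcup_b S_b \circ_\alpha \RT(U_b)$, the remainder is a routine compactness argument that leans only on the finiteness of $B$ and on the continuity of $f$ and $\alpha$ in the compact Hausdorff type space.
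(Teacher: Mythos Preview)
Your proof is correct and is essentially the same as the paper's: the paper derives the identical type identity, writes it as $\RT(V[b/U_b]) = \bigcup_{b \in B} (f \times f)[\iota_b^{-1}(\RT(V))] \circ_\alpha \RT(U_b)$ (your $S_b$ is exactly $(f \times f)[\iota_b^{-1}(\RT(V))]$), and then concludes closedness via Lemma~\ref{lem:satclosure} and the continuity of $\iota_b$, $f$, and $\alpha$. Your explicit use of compactness to pass from continuous images to closed sets is slightly more carefully stated than the paper's ``continuous, and therefore closed'', but the argument is the same.
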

\begin{proof}
Fix $A$-words $(U_b)_{b \in B}$ and denote by $f \colon \BL \to \AL$ the continuous homomorphism which sends $[V]_{\equiv}$ to $[V[b/U_b]]_{\equiv}$ (Proposition~\ref{prop:conthom}).
Note that, if $i \in |V|$, $b = V(i)$, and $j \in |U_b|$,
then, writing $t^{V}(i) = (u_0,b,u_1)$,
the definition of substitution gives that $t^{V[b/U_b]}((i,j)) = \alpha(f(u_0),f(u_1),t^{U_b}(j))$. This observation can be written as the following equality:
\begin{equation}\label{eq:RTsub}
\RT(V[b/U_b]) = \bigcup_{b \in B} (f \times f)[\iota_b^{-1}(\RT(V))] \circ_\alpha \RT(U_b).
\end{equation}
Now suppose that the words $V$ and $U_b$ are weakly saturated. By Lemma~\ref{lem:satclosure} the sets $\RT(V)$ and $\RT(U_b)$ are closed. Since $\iota_b$, $f$ and $\alpha$ are continuous, and therefore closed, maps and $B$ is finite, the equality (\ref{eq:RTsub}) implies that $\RT(V[b/U_b])$ is closed, so $V[b/U_b]$ is weakly saturated by Lemma~\ref{lem:satclosure}.
\end{proof}

Since concatenations and powers are particular cases of substitutions, we immediately obtain the following corollary.
\begin{corollary}\label{cor:satconpower-weak}
The concatenation of two weakly saturated words is weakly saturated. The power of a weakly saturated word by a weakly saturated linear order is weakly saturated.\qed
\end{corollary}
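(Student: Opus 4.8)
The plan is to observe that both operations named in the corollary are literally special cases of the substitution operation, and then invoke Theorem~\ref{thm:satsubst} with no further work. The entire content of the argument is identifying, in each case, the correct ``outer'' word $V$ and checking that it is weakly saturated; once that is done, the family $(U_b)_b$ is exactly the collection of weakly saturated words we are combining, and the conclusion is immediate.

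For the concatenation statement, I would recall from Section~\ref{sec:subst} that $U_0 \cdot U_1$ is by definition the substitution of the family $(U_b)_{b \in \{0,1\}}$ into the $\{0,1\}$-word $01$. The outer word $01$ is finite, and any finite $A$-word is $\omega$-saturated, as noted just after the definition of $\omega$-saturation. Since $\omega$-saturation requires every closed interval to be weakly saturated, taking the closed interval to be the whole word shows that $01$ is in particular weakly saturated. Hence, if $U_0$ and $U_1$ are weakly saturated, Theorem~\ref{thm:satsubst} applies with $V = 01$ and gives that $U_0 \cdot U_1 = (01)[b/U_b]$ is weakly saturated.

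For the power statement, I would recall that $U^{\lambda}$ is the substitution of the constant family $U_b = U$ into the unique $\{b\}$-word with underlying order $\lambda$. The phrase ``weakly saturated linear order $\lambda$'' is precisely the hypothesis that this $\{b\}$-word is weakly saturated; for instance, by Example~\ref{exa:saturated} the order $\mathbb{N} + \mathbb{Q}\times\mathbb{Z} + \mathbb{N}^{\mathrm{op}}$ furnishes such a $\lambda$. Thus, taking $V$ to be the $\{b\}$-word of order $\lambda$ and applying Theorem~\ref{thm:satsubst} with the constant family $U_b = U$, we conclude that $U^{\lambda} = V[b/U]$ is weakly saturated whenever $U$ is.

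There is no genuine obstacle here: the corollary is a direct specialization of Theorem~\ref{thm:satsubst}, which is exactly why it is marked \qed in the statement. The only point requiring a moment's attention is the observation that the base words ($01$ for concatenation, and the one-letter word of order $\lambda$ for the power) are themselves weakly saturated, which in the concatenation case follows from finiteness and in the power case is built into the hypothesis.
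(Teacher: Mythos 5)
Your proposal is correct and follows exactly the route the paper intends: the paper derives this corollary as an immediate specialization of Theorem~\ref{thm:satsubst}, noting only that concatenations and powers are particular cases of substitutions, which is why no proof is given. Your additional checks (that the outer word $01$ is weakly saturated because it is finite, and that the hypothesis on $\lambda$ supplies weak saturation of the one-letter outer word) are exactly the details left implicit in the paper.
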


We will now deduce an analogous result for $\omega$-saturation. The following basic, but important, lemma analyzes intervals in substituted words. By a \emph{prefix} of an $A$-word $U$ we mean an $A$-word of the form $U\be{i}$ for some $i \in |U|$, and by a \emph{suffix} of $U$ we mean an $A$-word of the form $U\af{i}$ for some $i \in |U|$.
\begin{lemma}\label{lem:intervalinsubst}
Let $V$ be a non-empty $B$-word and $(U_b)_{b \in B}$ a $B$-indexed collection of $A$-words. For any interval $W = V[b/U_b]([k_1,k_2])$ in the $A$-word $V[b/U_b]$:
\begin{enumerate}
\item either there exists $b \in B$ such that $W$ is an interval in $U_b$,
\item or there exist $b_1, b_2 \in B$, a suffix $X$ of $U_{b_1}$, a prefix $Y$ of $U_{b_2}$, and a $B$-word $Z$ such that $b_1 \cdot Z \cdot b_2$ is an interval in $V$, and $W = X \cdot Z[b/U_b] \cdot Y$.
\end{enumerate}
\end{lemma}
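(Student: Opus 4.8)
The plan is to analyze where the endpoints $k_1$ and $k_2$ of the interval $W = V[b/U_b]([k_1,k_2])$ fall in the substituted word. Recall that the positions of $V[b/U_b]$ are pairs $(i,j)$ with $i \in |V|$ and $j \in |U_{V(i)}|$, ordered lexicographically. So I would write $k_1 = (i_1,j_1)$ and $k_2 = (i_2,j_2)$ and distinguish two cases according to whether $i_1 = i_2$ or $i_1 < i_2$.

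First, in the case $i_1 = i_2$, both endpoints live inside the single block $U_{V(i_1)}$, and the lexicographic order restricted to $\{i_1\} \times |U_{V(i_1)}|$ is order-isomorphic to the order on $U_{V(i_1)}$. Hence $W$ is, by the very definition of the letters and order of the substitution, isomorphic to the interval $U_{V(i_1)}[j_1,j_2]$. Setting $b := V(i_1)$ gives alternative (1).

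For the case $i_1 < i_2$, I would set $b_1 := V(i_1)$ and $b_2 := V(i_2)$. The portion of $W$ coming from the block at $i_1$ consists of the positions $(i_1,j)$ with $j \geq j_1$, which is precisely the suffix $X := U_{b_1}\af{(j_1{-}1)}$; more carefully, since the order is discrete with endpoints, this is the suffix of $U_{b_1}$ starting at position $j_1$, i.e. $X = U_{b_1}[j_1, \text{last}]$, a suffix in the sense defined just before the lemma. Symmetrically, the portion from the block at $i_2$ is the prefix $Y := U_{b_2}[\text{first}, j_2]$. The middle part comes from the positions $(i,j)$ with $i_1 < i < i_2$, which is exactly $Z[b/U_b]$ where $Z := V(i_1,i_2)$ is the open interval of $V$ strictly between $i_1$ and $i_2$. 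Since $i_1 < i_2$ and the order on $V$ is discrete, $b_1 \cdot Z \cdot b_2 = V[i_1,i_2]$ is indeed an interval in $V$. Concatenating these three pieces in lexicographic order yields $W = X \cdot Z[b/U_b] \cdot Y$, which is alternative (2).

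I do not expect any genuine obstacle here: the statement is essentially a bookkeeping unravelling of the definition of substitution, and the only points requiring care are the degenerate boundary situations. In particular, one should check that $X$ and $Y$ are non-empty (they contain the positions $(i_1,j_1)$ and $(i_2,j_2)$ respectively, so they are), and that the splitting into prefix, middle, and suffix is exhaustive and order-preserving, which follows directly from the definition of the lexicographic order. The hypothesis that $V$ is non-empty guarantees that the interval $W$ sits inside a genuine block structure, so the case analysis on $i_1$ versus $i_2$ is exhaustive. Thus the main work is simply to state the isomorphisms explicitly; no substantive combinatorial or logical difficulty arises.
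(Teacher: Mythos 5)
Your proof is correct and follows essentially the same route as the paper: write $k_r = (i_r,j_r)$, split into the cases $i_1 = i_2$ and $i_1 < i_2$, and in the latter case take $X$ the suffix of $U_{b_1}$ from $j_1$, $Y$ the prefix of $U_{b_2}$ up to $j_2$, and $Z = V(i_1,i_2)$. The extra remarks on non-emptiness and order-preservation are harmless elaborations of the same bookkeeping the paper leaves implicit.
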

\begin{figure}[htp]
\begin{tikzpicture}[decoration={brace}]

\renewcommand{\labeldist}{.3}
\drawwordnr{(2,2)}{6}{$V$}

\drawwordnr{(0,0)}{10}{$V[b/U_b]$}

\drawwordn{(4,0)}{1.5}{}
\drawdecor{(5.4,-.5)}{(4.1,-.5)}{$U_b$}{}
\drawsubtri{(4.75,2)}{(4,0)}{(5.5,0)}{$b$}

\drawwordo{(4.3,0)}{.7}{}{dashed}
\draw[thick] (4.3,0) -- (5,0);
\node[below] at (4.65,0) {\small $W$};
\end{tikzpicture}

\vspace{5mm}


\begin{tikzpicture}[decoration={brace}]

\renewcommand{\labeldist}{0.1}

\drawwordnr{(2,2)}{6}{$V$}

\drawwordnr{(0,0)}{10}{$V[b/U_b]$}

\drawwordn{(3,0)}{3}{$Z[b/U_b]$}
\drawwordn{(4.15,2)}{.7}{$Z$}
\draw[thick,dotted] (4.15,2)--(3,0);
\draw[thick,dotted] (4.85,2)--(6,0);

\drawsubtri{(4,2)}{(1.1,0)}{(3,0)}{$b_1$}
\drawwordn{(1.1,0)}{2}{\small $U_{b_1}$}

\drawsubtri{(5,2)}{(6,0)}{(8,0)}{$b_2$}
\drawwordn{(6,0)}{2}{\small $U_{b_2}$}

\node[below] at (2.75,0) {\small $X$};
\node[below] at (6.25,0) {\small $Y$};

\drawwordo{(2.5,0)}{4}{}{dashed}
\draw[thick] (2.5,0) -- (6.5,0);

\drawdecor{(6.3,-.5)}{(2.7,-.5)}{$W = XZ[b/U_b]Y$}{}

\end{tikzpicture}
\caption{Intervals in a substitution (Lemma~\ref{lem:intervalinsubst})}
\label{fig:interval}
\end{figure}
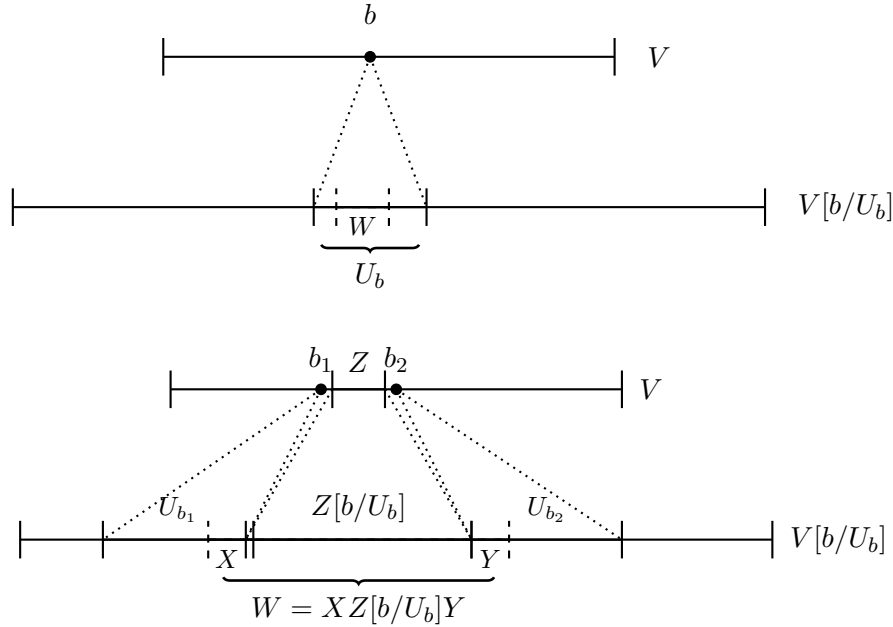
\begin{proof}
For $r = 1, 2$, we have $k_r = (i_r,j_r)$ for some $i_r \in |V|$ and $j_r \in |U_{V(i_r)}|$. There are two cases.

(1) $i_1 = i_2$. In this case, letting $b := V(i_1) = V(i_2)$, we have $W = U_b[j_1,j_2]$.

(2) $i_1 < i_2$. Let $b_r := V(i_r)$, $X$ the suffix of $V_{b_1}$ beginning at $j_1$, $Y$ the prefix of $V_{b_2}$ ending at $j_2$, and $Z$ the interval $V(i_1,i_2)$. Then $b_1 \cdot Z \cdot b_2 = V[i_1,i_2]$, and $W = X \cdot Z[b/U_b] \cdot Y$.
\end{proof}

\begin{theorem}\label{thm:omegasubst}
If $V$ is an $\omega$-saturated $B$-word and $(U_b)_{b \in B}$ is a $B$-indexed collection of $A$-words that are $\omega$-saturated, then $V[b/U_b]$ is $\omega$-saturated.
\end{theorem}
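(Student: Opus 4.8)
The plan is to reduce $\omega$-saturation of $V[b/U_b]$ to the weak-saturation result already established in Theorem~\ref{thm:satsubst}, using the structural description of intervals from Lemma~\ref{lem:intervalinsubst}. By definition of $\omega$-saturation, I must show that every closed interval $W = V[b/U_b]([k_1,k_2])$ is weakly saturated, so I would fix such a $W$ and apply Lemma~\ref{lem:intervalinsubst}, which splits the analysis into its two cases.

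In the first case, $W$ is an interval in some $U_b$. Since $U_b$ is $\omega$-saturated and any closed interval of an $\omega$-saturated word is again $\omega$-saturated (in particular weakly saturated), $W$ is weakly saturated and there is nothing more to do. In the second case, $W = X \cdot Z[b/U_b] \cdot Y$, where $X$ is a suffix of $U_{b_1}$, $Y$ is a prefix of $U_{b_2}$, and $b_1 \cdot Z \cdot b_2$ is an interval in $V$. Here I would first observe that each of the three factors is weakly saturated. The key point is that a suffix $U_{b_1}\af{i}$ and a prefix $U_{b_2}\be{j}$ can both be rewritten as closed intervals, since the order is discrete with endpoints; hence $X$ and $Y$ are closed intervals of the $\omega$-saturated words $U_{b_1}$ and $U_{b_2}$, so they are $\omega$-saturated and in particular weakly saturated. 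Likewise $Z$ is a closed interval of the $\omega$-saturated word $V$, hence $\omega$-saturated; since $Z$ is weakly saturated and each $U_b$ is weakly saturated, Theorem~\ref{thm:satsubst} gives that $Z[b/U_b]$ is weakly saturated. Finally, $W$ is a concatenation of the three weakly saturated words $X$, $Z[b/U_b]$, and $Y$, so by Corollary~\ref{cor:satconpower-weak} (applied twice, using associativity) it is weakly saturated. As $W$ was an arbitrary closed interval, this shows $V[b/U_b]$ is $\omega$-saturated.

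The only real subtlety I anticipate is bookkeeping around degenerate configurations -- for instance when $b_1$ and $b_2$ are adjacent in $V$ so that $Z$ (and hence $Z[b/U_b]$) is empty and $W = X \cdot Y$, or when the interval degenerates to a single point. These are handled uniformly, since the empty word is trivially weakly saturated and half-open rays remain closed intervals in a discrete order with endpoints. The substantive content is entirely imported from Theorem~\ref{thm:satsubst} and Lemma~\ref{lem:intervalinsubst}; the main conceptual step, and the thing worth stating carefully, is the recognition that prefixes and suffixes of $\omega$-saturated words are themselves $\omega$-saturated closed intervals, which is exactly what allows the concatenation corollary and the weak-substitution theorem to be combined.
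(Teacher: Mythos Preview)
Your proof is correct and follows essentially the same approach as the paper: reduce to Lemma~\ref{lem:intervalinsubst}, handle case~(1) directly from the $\omega$-saturation of $U_b$, and in case~(2) combine Theorem~\ref{thm:satsubst} with Corollary~\ref{cor:satconpower-weak}. Your extra remarks on why prefixes, suffixes, and the open interval $Z$ qualify as closed intervals, and on the degenerate cases, are welcome elaborations but do not depart from the paper's argument.
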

\begin{proof}
Let $W = V[b/U_b]([k_1,k_2])$ be a closed interval in $V[b/U_b]$. We prove that $W$ is weakly saturated. By Lemma~\ref{lem:intervalinsubst}, there are two cases.

(1) If $W$ is an interval in $U_b$ for some $b$, then $W$ is weakly saturated because $U_b$ is $\omega$-saturated by assumption.

(2) Suppose that $W = X \cdot Z[b/U_b] \cdot Y$ for some suffix $X$ of $U_{b_1}$, prefix $Y$ of $U_{b_2}$ and interval $Z$ in $V$.  Then $Z$ is weakly saturated because $V$ is $\omega$-saturated, so $Z[b/U_b]$ is weakly saturated by Theorem~\ref{thm:satsubst}, and $X$ and $Y$ are weakly saturated since the $U_b$ are $\omega$-saturated by assumption. By Corollary~\ref{cor:satconpower-weak}, the concatenation of weakly saturated words is weakly saturated.
\end{proof}

As before, Theorem~\ref{thm:omegasubst} has the following immediate consequence.
\begin{corollary}\label{cor:satconpower-omega}
The concatenation of two $\omega$-saturated words is $\omega$-saturated. The power of an $\omega$-saturated word by an $\omega$-saturated linear order is $\omega$-saturated.\qed
\end{corollary}

As a first application of Theorem~\ref{thm:omegasubst}, we will now analyze what factors of the result of a substitution can look like in $\AL$ (and hence, by Lemma~\ref{lem:productpseudofinite}, in $\FA(A)$).

\begin{proposition}\label{prop:findfact}
Let $W$ be an $\omega$-saturated $A$-word and let $v\in \AL\setminus \{\emp\}$.
\begin{enumerate}
\item  If $[W]_{\equiv}\leq_{\J}v$, then there exist positions $i<j$ in $|W|$ with $[W[i,j]]_{\equiv}=v$.
\item If $[W]_{\equiv}\leq_{\R}v$, then there exists a position $i\in |W|$ with $[W\be{i})]_{\equiv}=v$.
\item If $[W]_{\equiv}\leq_{\L}v$, then there exists a position $i\in |W|$ with $[W\af{i}]_{\equiv}=v$.
\end{enumerate}
\end{proposition}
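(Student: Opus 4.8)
The plan is to handle the three statements in parallel, since they are formally very similar; I would prove part (1) in detail and indicate that (2) and (3) follow by the same argument with the appropriate one-sided modification. The key idea is that each of the three hypotheses asserts the existence of a factorization of $[W]_{\equiv}$ that \emph{displays} $v$ as a factor, prefix, or suffix, and that the $\omega$-saturation of $W$ lets us \emph{realize} that factorization by actual positions inside $|W|$, via Lemma~\ref{lem:satfact}.

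In more detail, for part (1) the hypothesis $[W]_{\equiv}\leq_{\J}v$ means there exist $x,y\in\AL$ with $[W]_{\equiv}=xvy$. Choosing representative $A$-words $X$, $V$, $Y$ in the classes $x$, $v$, $y$ respectively, this says $W\equiv X\cdot V\cdot Y$, where $V$ is non-empty since $v\neq\emp$. Here I would be slightly careful: Lemma~\ref{lem:satfact} as stated requires all factors to be non-empty, so I would first argue that if $X$ (or $Y$) represents the empty class, one simply drops it and applies the $n=2$ (or $n=1$) case instead; otherwise $n=3$. Assuming all three are non-empty, Lemma~\ref{lem:satfact} applied to the $\omega$-saturated word $W$ yields positions $i<j$ in $|W|$ with $W\be{i}\equiv X$, $W[i,j)\equiv V$, and $W\af{j}\cdots\equiv Y$ — more precisely the lemma gives $W\be{i_1}\equiv W_1$, $W[i_1,i_2)\equiv W_2$, $W\af{i_2}\equiv W_3$ with $(W_1,W_2,W_3)=(X,V,Y)$. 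Reading off the middle factor gives a position pair realizing $[W[i,j')]_{\equiv}=v$ for the appropriate closed interval; since half-open and closed intervals coincide up to isomorphism for discrete orders with endpoints (as noted in Section~\ref{sec:logic}), I can pass from $W[i,j)$ to a closed interval $W[i,j]$ and conclude $[W[i,j]]_{\equiv}=v$.

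For parts (2) and (3), the factorizations are $[W]_{\equiv}=vy$ and $[W]_{\equiv}=xv$ respectively, so I apply Lemma~\ref{lem:satfact} with $n=2$. In case (2), weak saturation (the $n=2$ instance) produces a position $i$ with $W\be{i}\equiv V$ and $W\af{i}\equiv Y$, giving $[W\be{i}]_{\equiv}=v$; case (3) is symmetric, producing $[W\af{i}]_{\equiv}=v$. In each case the degenerate subcase where the complementary factor is empty is handled trivially (the position $i$ can be taken to be the last or first position of $W$).

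The main obstacle, though it is a mild one, is the bookkeeping around \emph{empty factors}: Lemma~\ref{lem:satfact} is stated only for non-empty $W_1,\ldots,W_n$, whereas the elements $x,y$ furnished by $\leq_{\J}$ are arbitrary elements of $\AL$ and may well be (classes containing) the empty word. I expect the cleanest fix is to observe at the outset that the statement is invariant under discarding trivial factors — if $x=\emp$ then $\leq_{\J}$ collapses to $\leq_{\R}$, and if $y=\emp$ it collapses to $\leq_{\L}$, so (1) reduces to (2) or (3) (or to the trivial case $v=[W]_{\equiv}$ witnessed by the whole word) — and otherwise apply Lemma~\ref{lem:satfact} directly. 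Beyond this the argument is essentially a translation between the algebraic factorization orders $\leq_{\J},\leq_{\R},\leq_{\L}$ and the order-theoretic notions of interval, prefix, and suffix of $W$, which is exactly what $\omega$-saturation was designed to make faithful.
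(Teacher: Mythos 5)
Your proof is correct and follows exactly the paper's argument: choose representative words so that $W\equiv XVY$ (resp.\ $W\equiv VY$, $W\equiv XV$) and apply Lemma~\ref{lem:satfact} to realize the factorization by positions of the $\omega$-saturated word $W$. The paper's own proof is precisely this in three lines, merely remarking that ``$X$ or $Y$ could be empty''; your explicit reduction of the empty-factor cases to the smaller-$n$ instances of Lemma~\ref{lem:satfact} just fills in the bookkeeping the paper leaves implicit.
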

\begin{proof}
Choose an $A$-word $V$ with $[V]_{\equiv} =v$.
To prove (1), there are $A$-words $X,Y$ with $W\equiv XVY$.  Note that $X$ or $Y$ could be empty. Now apply Lemma~\ref{lem:satfact}.  The proofs for (2) and (3) are similar.
\end{proof}

\begin{theorem}\label{thm:substfactors}
Let $f \colon \BL \to \AL$ be a substitution. Let $v \in \BL$ and $w \in \AL$.
\begin{enumerate}[leftmargin=*]
\item $f(v) \leq_{\J} w$ if and only if one of the following holds:
\begin{enumerate}
\item $w = \emp$, or
\item there exists $b \in B$ such that $v \leq_{\J} b$ and $f(b) \leq_{\J} w$, or
\item there exist $b_1,b_2 \in B$, $x, y \in \AL$ and $z \in \BL$ such that $v \leq_\J b_1zb_2$,  $f(b_1) \leq_{\L} x$, $f(b_2) \leq_{\R} y$, and $w = x f(z) y$.
\end{enumerate}
\item $f(v) \leq_{\R} w$ if and only if  $w=\emp$ or there exist $x \in \AL$, $b \in B$, and $z \in \BL$ such that $f(b) \leq_{\R} x$, $v \leq_{\R} zb$, and $w = f(z)x$.
\item $f(v) \leq_{\L} w$ if and only if  $w=\emp$ or there exist $y \in \AL$, $c \in B$, and $z \in \BL$ such that $f(c) \leq_{\L} y$, $v \leq_{\L} cz$, and $w = yf(z)$.
\end{enumerate}
\end{theorem}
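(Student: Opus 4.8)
The plan is to reduce everything to the case of $\omega$-saturated representatives and then read off the required decompositions directly from the structure of a substitution. First I would observe that the value of $f$ depends only on the classes $[U_b]_{\equiv} = f(b)$ (Corollary~\ref{cor:substinv}), so I may replace each $U_b$ by an elementarily equivalent $\omega$-saturated $A$-word without changing $f$; such representatives exist by the existence of $\omega$-saturated models (Proposition~\ref{prop:satexist}). Likewise I choose an $\omega$-saturated $B$-word $V$ with $[V]_{\equiv} = v$. Theorem~\ref{thm:omegasubst} then guarantees that $W := V[b/U_b]$ is an $\omega$-saturated $A$-word with $[W]_{\equiv} = f(v)$. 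The $(\Leftarrow)$ directions of all three parts require no saturation: given the asserted factorization of $v$ (into $zb$, $cz$, or $b_1zb_2$), one applies the homomorphism $f$, expands each $f(b_i)$ using the prefix/suffix witnesses $x,y$, and regroups the product to exhibit $w$ as the appropriate factor, prefix, or suffix of $f(v)$. The case $w = \emp$ is immediate throughout.

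For the forward direction of part (2), assume $f(v) \leq_\R w$ with $w \neq \emp$. Since $[W]_{\equiv} = f(v) \leq_\R w$ and $W$ is $\omega$-saturated, Proposition~\ref{prop:findfact}(2) produces a position $k \in |W|$ with $[W\be{k}]_{\equiv} = w$. Writing $k = (i,j)$ with $i \in |V|$, $j \in |U_{V(i)}|$ and $b := V(i)$, the lexicographic order makes the prefix $W\be{k}$ equal to $V\be{i}[b/U_b] \cdot U_b\be{j}$. Setting $z := [V\be{i}]_{\equiv}$ and $x := [U_b\be{j}]_{\equiv}$ then gives $w = f(z)\,x$, while $U_b\be{j}$ is a prefix of $U_b$ (so $f(b) \leq_\R x$) and $V\be{i}\,b$ is a prefix of $V$ with $b = V(i)$ (so $v \leq_\R zb$), as required. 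Part (3) is the left--right mirror image, obtained identically from Proposition~\ref{prop:findfact}(3) and the dual decomposition $W\af{k} = U_c\af{j}\cdot V\af{i}[b/U_b]$ with $c = V(i)$.

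For the forward direction of part (1), assume $f(v) \leq_\J w$ with $w \neq \emp$. By Proposition~\ref{prop:findfact}(1) applied to the $\omega$-saturated word $W$ there are positions $k_1 < k_2$ with $[W[k_1,k_2]]_{\equiv} = w$. I would now invoke Lemma~\ref{lem:intervalinsubst} to decompose this interval. In its first case $W[k_1,k_2]$ is an interval of a single block $U_b$; then $w$ is a factor of $[U_b]_{\equiv} = f(b)$, i.e.\ $f(b) \leq_\J w$, and since $b$ occurs as a letter of $V$ the one-letter word $b$ is a factor of $V$, whence $v \leq_\J b$, giving alternative (b). In its second case $W[k_1,k_2] = X \cdot Z[b/U_b] \cdot Y$ with $X$ a suffix of $U_{b_1}$, $Y$ a prefix of $U_{b_2}$, and $b_1 Z b_2$ an interval of $V$; setting $x := [X]_{\equiv}$, $y := [Y]_{\equiv}$, $z := [Z]_{\equiv}$ yields $w = x\,f(z)\,y$ together with $f(b_1) \leq_\L x$, $f(b_2) \leq_\R y$, and $v \leq_\J b_1 z b_2$, which is precisely alternative (c).

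The conceptual work is all front-loaded into the setup: saturation is used only to turn the \emph{existence} of an abstract factorization of the class $f(v)$ into an \emph{actual} interval or ray inside the concrete $\omega$-saturated word $W$ (via Proposition~\ref{prop:findfact}), and the only genuinely combinatorial step is the interval analysis of Lemma~\ref{lem:intervalinsubst} used in part (1). I expect the main obstacle to be bookkeeping rather than mathematics: one must keep the three divisibility conventions ($\leq_\J$, $\leq_\L$, $\leq_\R$ pointing toward factors, suffixes, and prefixes respectively) straight when matching the word-level prefix/suffix decompositions to the monoid inequalities in (b) and (c), and handle the degenerate cases (empty blocks, $i$ or $j$ an endpoint, $w = \emp$) uniformly.
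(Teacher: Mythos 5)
Your proposal is correct and follows essentially the same route as the paper's proof: choose $\omega$-saturated representatives (Proposition~\ref{prop:satexist}), invoke Theorem~\ref{thm:omegasubst} to get an $\omega$-saturated word representing $f(v)$, realize the factorization via Proposition~\ref{prop:findfact}, and analyze the resulting interval with Lemma~\ref{lem:intervalinsubst}. The only cosmetic difference is that you explicitly note $v \leq_{\J} b$ in case (b) of part (1), which the paper leaves implicit.
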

\begin{proof}
The case $w = \emp$ is trivial, so we may henceforth assume $w \neq \emp$.
By Proposition~\ref{prop:satexist}, for each $b \in B$ pick an $\omega$-saturated $A$-word $U_b$ in the class $f(b)$, pick an $\omega$-saturated $B$-word $V$ in the class $v$, and also pick an $A$-word $W$ in the class $w$. Since $f$ is a substitution, $f(v) = [V[b/U_b]]_{\equiv}$. By Theorem~\ref{thm:omegasubst}, the $A$-word $V[b/U_b]$ is $\omega$-saturated.

(1) Suppose $f(v) \leq_{\J} w$.  Since $V[b/U_b]$ is $\omega$-saturated, there are  positions $k_1$ and $k_2$ in $V[b/U_b]$ such that $V[b/U_b]([k_1,k_2]) \equiv W$ by Proposition~\ref{prop:findfact}. Consider the interval $W' := V[b/U_b]([k_1,k_2])$. By Lemma~\ref{lem:intervalinsubst}, there are two cases.

(a) The $A$-word $W'$ is an interval in some $U_b$. In this case, $f(b) \leq_{\J} w$.

(b) The $A$-word $W' = X \cdot Z[b/U_b] \cdot Y$ for some suffix $X$ of $U_{b_1}$, prefix $Y$ of $U_{b_2}$ and $b_1 \cdot Z \cdot b_2$ an interval in $V$. Setting $z := [Z]_{\equiv}$, $x := [X]_{\equiv}$ and $y := [Y]_{\equiv}$ gives the desired result.

Conversely, if (a) holds, then $f(v) \leq_{\J} f(b) \leq_{\J} w$, since $f$ preserves the $\J$-ordering. Suppose (b) holds. Then $f(v) \leq_{\J} f(b_1) f(z) f(b_2) \leq_{\J} x f(z) y = w$.

(2) Suppose $f(v) \leq_{\R} w$. Since $V[b/U_b]$ is $\omega$-saturated,  pick (by Proposition~\ref{prop:findfact}) a position $k$ such that $W \equiv V[b/U_b]\be{k}$. Pick $i \in |V|$ and $j \in |U_{V(i)}|$ such that $k = (i,j)$. Set $b := V(i)$. Then $W \equiv Z[b/U_b] \cdot X$, and $V = ZbU$, where $Z := V\be{i}$, $U := V\af{i}$, and $X := U_{b}\be{j}$. Thus, setting $z := [Z]_{\equiv}$ and $x := [X]_{\equiv}$ gives the desired result.

Conversely, if $x$, $b$ and $z$ are given, then $f(v) \leq_{\R} f(z) f(b) \leq_{\R} f(z) x = w$.

(3) Follows from (2) by symmetry.
\end{proof}
We may usefully summarize Theorem~\ref{thm:substfactors} by recalling some more notation. For any $u \in \AL$, write ${\uparrow}_{\J}u := \{w \in \AL \ | \ w \geq_{\J} u\}$ for the set of factors of $u$, and similarly ${\uparrow}_{\R}u$ for the set of prefixes of $u$, and ${\uparrow}_{\L}u$ for the set of suffixes of $u$. Also, for any subset $L$ of $\AL$ and $a,b \in A$, write $a^{-1}L := \{u \in \AL \ | \ au \in L\}$ and $Lb^{-1} := \{u \in \AL \ | \ ub \in L\}$, and $a^{-1}Lb^{-1} := \{u \in \AL \ | \ aub \in L\}$. Finally, if $u \in \AL$, write $\Cont(u) := \{a \in A \ | \ u \leq_{\J} a\}$ for the \emph{content} of $u$. The following restates Theorem~\ref{thm:substfactors} in this notation.
\begin{corollary}\label{cor:substfactors}
Let $f \colon \BL \to \AL$ be a substitution. For any $v \in \BL$, we have:
\begin{align*}
{\uparrow}_{\J} f(v) &= \{\emp\} \cup \bigcup_{b \in \Cont(v)} \left({\uparrow}_{\J} f(b)\right) \cup \bigcup_{b_1,b_2 \in B} \left[ {\uparrow}_{\L} f(b_1) \cdot f\left(b_1^{-1}\left({\uparrow}_{\J}v\right)b_2^{-1}\right) \cdot {\uparrow}_{\R} f(b_2) \right], \\
{\uparrow}_{\R} f(v) &= \{\emp\}\cup \bigcup_{b \in B} f\left(\left({\uparrow}_{\R}v\right)b^{-1}\right) \cdot \left({\uparrow}_{\R}f(b)\right), \\
{\uparrow}_{\L} f(v) &= \{\emp\}\cup \bigcup_{c \in B} \left({\uparrow}_{\L}f(c)\right) \cdot f\left(c^{-1}\left({\uparrow}_{\L}v\right)\right).
\end{align*}
\end{corollary}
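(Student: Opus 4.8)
The plan is to recognize that Corollary~\ref{cor:substfactors} is a verbatim reformulation of Theorem~\ref{thm:substfactors} in the up-set notation introduced just above the statement, so the whole proof amounts to unfolding the relevant definitions. I would prove the three displayed identities one at a time, in each case verifying that an element $w \in \AL$ lies in the right-hand side precisely when it satisfies the disjunction characterizing the corresponding relation in Theorem~\ref{thm:substfactors}.

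First I would record the dictionary of elementary translations. By definition, $w \in {\uparrow}_{\J} f(v)$ means $f(v) \leq_{\J} w$, and likewise for $\R$ and $\L$; moreover $b \in \Cont(v)$ means exactly $v \leq_{\J} b$. The quotient operations convert the two-sided and one-sided divisibility constraints on $z$ into membership assertions: $z \in b_1^{-1}({\uparrow}_{\J}v)b_2^{-1}$ says $v \leq_{\J} b_1 z b_2$, while $z \in ({\uparrow}_{\R}v)b^{-1}$ says $v \leq_{\R} zb$ and $z \in c^{-1}({\uparrow}_{\L}v)$ says $v \leq_{\L} cz$. Finally, the setwise product $S \cdot T = \{st \mid s \in S,\, t \in T\}$ together with the image $f(L)$ turns the existential quantifiers of the theorem into unions of products; in particular $x \in {\uparrow}_{\L} f(b_1)$, $y \in {\uparrow}_{\R} f(b_2)$, and $f(z)$ with $z$ ranging over $b_1^{-1}({\uparrow}_{\J}v)b_2^{-1}$ are exactly the three factors appearing in the product ${\uparrow}_{\L} f(b_1) \cdot f(b_1^{-1}({\uparrow}_{\J}v)b_2^{-1}) \cdot {\uparrow}_{\R} f(b_2)$.

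With this dictionary, each part is immediate. For part (1), the three disjuncts (a), (b), (c) of Theorem~\ref{thm:substfactors} match term-by-term the three pieces $\{\emp\}$, $\bigcup_{b \in \Cont(v)} {\uparrow}_{\J} f(b)$, and the double union over $b_1, b_2 \in B$; disjunct (b) contributes exactly when $v \leq_{\J} b$, i.e.\ when $b \in \Cont(v)$, which is why that union is indexed by $\Cont(v)$. Parts (2) and (3) are the analogous rewritings of the $\R$- and $\L$-statements of the theorem, and indeed (3) follows from (2) by the same left-right symmetry already invoked in the theorem.

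I do not expect any genuine obstacle, since no new mathematical content is involved beyond Theorem~\ref{thm:substfactors}. The only point that warrants a line of care is that the third union in the $\J$-formula is taken over \emph{all} pairs $b_1, b_2 \in B$ rather than over pairs subject to a divisibility condition: this is harmless because whenever no $z$ satisfies $v \leq_{\J} b_1 z b_2$ the set $b_1^{-1}({\uparrow}_{\J}v)b_2^{-1}$ is empty and the corresponding product term is empty, so it contributes nothing to the union.
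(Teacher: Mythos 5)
Your proposal is correct and matches the paper exactly: the paper offers no separate argument for Corollary~\ref{cor:substfactors}, presenting it as a direct restatement of Theorem~\ref{thm:substfactors} in the up-set and quotient notation, which is precisely the translation you carry out. Your added remark that the union over all pairs $b_1,b_2\in B$ is harmless because the quotient set is empty when no suitable $z$ exists is a sensible point of care, though the paper does not bother to make it explicit.
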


Theorem~\ref{thm:substfactors} can be viewed as extension of~\cite[Lemma~8.2]{AV2006}, where the special case that $v$ is a finite word is handled in the context of the free profinite monoid.

\section{Structure theory of $\AL$ and $\FA(A)$}\label{sec:structure}
In this section, we illustrate how the model-theoretic methods developed in the previous sections can be used to derive known results about $\FA(A)$ in a simple way, as well as new ones. Moreover, since these results do not depend on the words involved being pseudofinite, we are actually able to show the same results hold in the larger pro-aperiodic monoid $\AL$.

The following properties of $\AL$ will be used in Section~\ref{sec:factorsomega}. They are well-known for $\FA(A)$, and indeed items (1) and (2) in Lemma~\ref{lem:stabilityetc} are known to hold in any compact monoid. Items (3) and (4) in Lemma~\ref{lem:stabilityetc} say that the $\L$- and $\R$-orders on $\AL$ and $\FA(A)$ are \emph{unambiguous}. The latter result was first proved in \cite{RS01}.
\begin{lemma}\label{lem:stabilityetc}
For any $w, x, y \in \AL$:
\begin{enumerate}
\item if $w \leq_{\J} xw$, then $w \leq_{\L} xw$,
\item if $w \leq_{\J} wy$, then $w \leq_{\R} wy$,
\item if $w \leq_{\L} x$ and $w \leq_{\L} y$, then $x \leq_{\L} y$ or $y \leq_{\L} x$,
\item if $w \leq_{\R} x$ and $w \leq_{\R} y$, then $x \leq_{\R} y$ or $y \leq_{\R} x$,
\item if $w \leq_{\L} x$ and $w \leq_{\L} y$, then $x \leq_{\J} y$ implies $x \leq_{\L} y$,
\item if $w \leq_{\R} x$ and $w \leq_{\R} y$, then $x \leq_{\J} y$ implies $x \leq_{\R} y$.
\end{enumerate}
The same holds true for $\FA(A)$.
\end{lemma}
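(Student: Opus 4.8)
The plan is to treat the six statements in three groups, exploiting $\omega$-saturated representatives together with the compact pro-aperiodic structure of $\AL$ from Theorem~\ref{thm:AL} (so $u^\omega u = u^\omega = u u^\omega$ for all $u$). For the stability statements (1) and (2) I would argue purely algebraically. For (1), from $w \leq_{\J} xw$ write $w = \alpha(xw)\beta$, so $w = (\alpha x) w \beta$ and hence $w = (\alpha x)^n w \beta^n$ for every $n \geq 1$. Passing to a subnet along which $(\alpha x)^n \to (\alpha x)^\omega =: e$ and $\beta^n$ converges to some $g$, continuity of multiplication gives $w = e w g$ with $e$ idempotent. Then $ew = e(ewg) = ewg = w$, while aperiodicity gives $e(\alpha x) = (\alpha x)^\omega(\alpha x) = e$, so $w = ew = e\alpha(xw) = (e\alpha)(xw)$, i.e.\ $w \leq_{\L} xw$. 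Statement (2) is the mirror image, using $(y\beta)(y\beta)^\omega = (y\beta)^\omega$. (Alternatively, (1) and (2) may simply be quoted as the stability of compact monoids, as the text already notes.)

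For the unambiguity statements (3) and (4) I would use saturation. After disposing of the trivial cases where $x$ or $y$ equals $\emp$ (the empty word is a suffix and a prefix of everything), fix an $\omega$-saturated word $W$ representing $w$ via Proposition~\ref{prop:satexist}. For (3), since $w \leq_{\L} x$ and $w \leq_{\L} y$ with $x,y \neq \emp$, Proposition~\ref{prop:findfact}(3) yields positions $i,j \in |W|$ with $[W\af{i}]_{\equiv} = x$ and $[W\af{j}]_{\equiv} = y$. As $|W|$ is linearly ordered we may assume $i \leq j$; then the ray $W\af{j}$ is a suffix of the ray $W\af{i}$ via the factorization $W\af{i} = W(i,j]\cdot W\af{j}$, so $x = [W(i,j]]_{\equiv}\, y$ and therefore $x \leq_{\L} y$. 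Statement (4) is identical with left rays $W\be{i}$, Proposition~\ref{prop:findfact}(2), and the prefix factorization $W\be{j} = W\be{i}\cdot W[i,j)$.

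Statements (5) and (6) then follow formally, and transfer to $\FA(A)$ is routine. For (5): by (3) either $x \leq_{\L} y$, in which case we are done, or $y \leq_{\L} x$, say $y = ux$; but then $x \leq_{\J} y = ux$, so applying (1) with $w,x$ replaced by $x,u$ gives $x \leq_{\L} ux = y$. Statement (6) is the dual, using (4) and (2). Finally, by Lemma~\ref{lem:productpseudofinite} a product of two elements of $\AL$ is pseudofinite if and only if both factors are; hence for pseudofinite $w,x,y$ the orders $\leq_{\J},\leq_{\L},\leq_{\R}$ computed in $\AL$ and in $\FA(A)$ coincide, since the required witnesses are automatically pseudofinite, so all six statements restrict to $\FA(A)$. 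I expect the only genuine work to be in (1) and (2), where the subnet/idempotent argument carries the content; once that stability input is in hand, (3)--(6) are short, with mild care needed only for the empty-word cases and the position comparison in (3)/(4).
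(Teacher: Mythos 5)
Your proof is correct and follows essentially the same route as the paper: the $(\alpha x)^n w\beta^n$ iteration with $\omega$-powers and aperiodicity for (1)--(2), saturated representatives and comparison of ray positions for (3)--(4), the formal reduction of (5)--(6) to (1) and (3), and pseudofiniteness of factors for the transfer to $\FA(A)$. The only differences are cosmetic (a subnet argument where the paper takes the limit $w=(\alpha x)^\omega w\beta^\omega$ directly, $\omega$-saturated rather than weakly saturated representatives, and your explicit treatment of the empty-word cases).
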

\begin{proof}
(1) If $w \leq_{\J} xw$, pick $\alpha, \beta \in \AL$ such that $w = \alpha x w \beta$. Repeatedly substituting $\alpha x w \beta$ for $w$, we see that $w = (\alpha x)^n w \beta^n$ for every $n$. Therefore, $w = (\alpha x)^\omega w \beta^\omega$, by continuity. Since $(\alpha x)^\omega$ is idempotent, we get from this that $w = (\alpha x)^\omega w$. Now, using aperiodicity,
\[w =  (\alpha x)^\omega w = (\alpha x)^\omega \alpha x w,\]
so that indeed $w \leq_{\L} xw$. (2) is dual to (1).

(3) Let $W$ be a weakly saturated $A$-word in the class $w$. Since $w \leq_{\L} x$ and $w \leq_{\L} y$, pick $i, j$ such that $[W[{\geq} i]]_{\equiv} = x$ and $[W[{\geq} j]]_{\equiv} = y$. If $i \leq j$, then letting $\alpha := [W[i,j)]_{\equiv}$ gives $x = \alpha y$, so $x \leq_{\L} y$, and, similarly, if $i > j$, then $y \leq_{\L} x$. (4) is dual to (3).

(5) By (3), since $w \leq_{\L} x$ and $w \leq_{\L} y$, we either have $x \leq_{\L} y$ or $y \leq_{\L} x$. If $y \leq_{\L} x$, pick $\alpha$ such that $y = \alpha x$. Then $x \leq_{\J} y = \alpha x$, so (1) gives $x \leq_{\L} \alpha x = y$, as required. (6) is dual to (5).

The result for $\FA(A)$ follows from the result for $\AL$ because the former is the complement of a prime ideal in the latter.
\end{proof}
The following result on factors of a specific form will be useful in Section~\ref{sec:factorsomega}. It could be deduced from Theorem~\ref{thm:substfactors}, but we prefer to give a direct proof.
\begin{lemma}\label{lem:bzc}
Let $z$, $z'$ be $A$-words and $b,c \in A$. If $bzc \leq_{\J} bz'c$, then one of the following four statements is true:
\begin{enumerate}
\item $z = z'$,
\item $z \leq_{\R} z'c$,
\item $z \leq_{\L} bz'$,
\item $z \leq_{\J} bz'c$.
\end{enumerate}
\end{lemma}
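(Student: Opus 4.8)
The plan is to give a direct combinatorial argument, realizing the hypothesised factor $bz'c$ inside a saturated model of $bzc$ and then reading off the four conclusions according to where this occurrence sits. This mirrors the way one reasons about factorizations in a free monoid, with $\omega$-saturation playing the role that finiteness plays there.

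First I would fix an $\omega$-saturated $A$-word $Z$ in the class $[z]_\equiv$ (which exists by Proposition~\ref{prop:satexist}) and form the concatenation $W := bZc$. By Corollary~\ref{cor:satconpower-omega}, $W$ is again $\omega$-saturated, and clearly $[W]_\equiv = [bzc]_\equiv$. Since $bzc \leq_\J bz'c$ and $[bz'c]_\equiv \neq \emp$, Proposition~\ref{prop:findfact}(1) supplies positions $p < q$ in $|W|$ with $W[p,q] \equiv bz'c$. Because elementary equivalence preserves the first and last letters of a word, we have $W(p) = b$ and $W(q) = c$. Now $W$ is the concatenation of its leading $b$ at some position $\beta$, the middle copy of $Z$, and its trailing $c$ at some position $\gamma$; hence each of $p$, $q$ either is an endpoint of $W$ or lies in the middle copy of $Z$.

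Then I would split into the four cases given by the locations of $p$ and $q$. If $p = \beta$ and $q = \gamma$, then $W[p,q] = W$, so $bz'c \equiv bzc$, and cancelling $b$ on the left and $c$ on the right via Corollary~\ref{c:cancel} gives $z \equiv z'$, i.e. conclusion (1). If $p = \beta$ and $q$ lies in $Z$, then $W[p,q] = b \cdot P$, where $P$ is the prefix of $Z$ ending at $q$; cancelling the leading $b$ gives $P \equiv z'c$, and since $P$ is a prefix of $Z$ we obtain $z \leq_\R z'c$, conclusion (2). The case $p$ in $Z$, $q = \gamma$ is symmetric and, after cancelling the trailing $c$ from a suffix of $Z$, yields $z \leq_\L bz'$, conclusion (3). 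Finally, if both $p$ and $q$ lie in $Z$, then $W[p,q]$ is an interval of $Z$, hence a factor of $z$, giving $z \leq_\J bz'c$, conclusion (4).

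I expect the only delicate points to be bookkeeping rather than substance: one must be sure that the endpoints of the realized occurrence carry the letters $b$ and $c$ (so that the four cases above are exhaustive, with the boundary positions $\beta$ and $\gamma$ unambiguously the leading $b$ and trailing $c$), and one must pass carefully between the concrete interval decompositions of $W$ and the $\J$-, $\R$-, $\L$-relations in $\AL$, which are statements about equivalence classes. Both are routine given Proposition~\ref{prop:findfact} and the cancellation Corollary~\ref{c:cancel}; there is no serious obstacle, and in particular no appeal to Theorem~\ref{thm:substfactors} is needed.
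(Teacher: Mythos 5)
Your proposal is correct and follows essentially the same route as the paper's own proof: both take an $\omega$-saturated representative $Z$ of $z$, form the $\omega$-saturated word $bZc$, realize $bz'c$ as a closed interval via Proposition~\ref{prop:findfact}, and split into the four cases according to whether the endpoints of that interval coincide with the endpoints of the whole word. The only (immaterial) difference is that you dispose of cases (2) and (3) by cancelling the outer letter with Corollary~\ref{c:cancel}, whereas the paper reads off the decomposition of the middle interval directly.
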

\begin{proof}
Let $Z$ be an $\omega$-saturated word in the class $z$ and let $Z'$ be an $A$-word in the class $z'$. Then $V := b \cdot Z \cdot c$ is an $\omega$-saturated word in the class $bzc$, by Corollary~\ref{cor:satconpower-omega}. Let $\bot$ denote the first position in $V$ and $\top$ denote the last position in $V$, so that $V(\bot) = b$, $V(\bot,\top) = Z$ and $V(\top) = c$. Since $bzc \leq_{\J} bz'c$, pick $i, j \in |V|$ such that $V[i,j] \equiv bZ'c$ (using Proposition~\ref{prop:findfact}). In particular, $V(i) = b$, $V(j) = c$, and $V(i,j) \equiv Z'$. There are four cases:

(1) $i = \bot$ and $j = \top$. In this case, $Z = V(\bot,\top) = V(i,j) \equiv Z'$, so $z = z'$.

(2) $i = \bot$ and $j \neq \top$.  Let $U$ be the subword $V(j,\top)$ of $V$. Now $Z = V(\bot,\top) = V(i,j) c U \equiv Z' c U$. So $z = z'cu$, and hence $z \leq_{\R} z'c$.

(3) $i \neq \bot$ and $j = \top$. By a proof analogous to (2), $z \leq_{\L} bz'$.

(4) $i \neq \bot$ and $j \neq \top$. Writing $U_1 := V(\bot,i)$ and $U_2 := V(j,\top)$, we have $Z = V(\bot,\top) = U_1 b V(i,j) c U_2 \equiv U_1 b Z' c U_2$. Hence, $z = u_1bz'cu_2$, where $u_k := [U_k]_{\equiv}$ ($k = 1,2$), and thus $z \leq_{\J}bz'c$.
\end{proof}

Recall that a monoid $M$ is called \emph{equidivisible} if for all $u,v,u',v' \in M$, if $uv = u'v'$ then there exists $x \in M$ such that either $ux = u'$ and $xv' = v$, or $u'x = u$ and $xv = v'$. The following result for $\FA(A)$ was implicitly proved in~\cite{HRS2010} (but not explicitly formulated), using limiting arguments, and explicitly proved in~\cite{AC2009}.
\begin{proposition}\label{prop:equidiv}
The monoids $\AL$ and $\FA(A)$ are equidivisible.
\end{proposition}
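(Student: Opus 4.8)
The plan is to reproduce the classical free-monoid proof of equidivisibility by realizing both factorizations as genuine cuts inside a single $\omega$-saturated word, and then simply comparing the two cut points. First I would dispose of the degenerate cases in which one of $u,v,u',v'$ equals $\emp$, since in each of these the required witness $x$ can be written down directly: if $u=\emp$ then $x:=u'$ gives $ux=u'$ and $xv'=u'v'=uv=v$, and the remaining three cases are symmetric. Hence I may assume that $u,v,u',v'$ are all non-empty.

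For the main case, use Proposition~\ref{prop:satexist} to choose an $\omega$-saturated $A$-word $W$ in the common class $uv=u'v'$. Fixing representatives $U,V,U',V'$ of $u,v,u',v'$ respectively, we have $W\equiv UV$ and $W\equiv U'V'$. Applying Lemma~\ref{lem:satfact} with $n=2$ to each factorization produces positions $i,i'\in|W|$ with $W\be{i}\equiv U$, $W({\geq}i)\equiv V$, $W\be{i'}\equiv U'$ and $W({\geq}i')\equiv V'$; in terms of classes, $[W\be{i}]_{\equiv}=u$, $[W({\geq}i)]_{\equiv}=v$, $[W\be{i'}]_{\equiv}=u'$ and $[W({\geq}i')]_{\equiv}=v'$. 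The crucial point is that $i$ and $i'$ lie in one and the same discrete linear order $|W|$, so they are comparable.

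If $i\leq i'$, set $x:=[W[i,i')]_{\equiv}$ (allowing the empty interval when $i=i'$). Since $W\be{i'}=W\be{i}\cdot W[i,i')$ and $W({\geq}i)=W[i,i')\cdot W({\geq}i')$ as $A$-words, passing to classes yields $u'=ux$ and $v=xv'$, which is the first alternative in the definition of equidivisibility. If instead $i'<i$, the symmetric computation with $x:=[W[i',i)]_{\equiv}$ gives $u=u'x$ and $v'=xv$, the second alternative. This establishes that $\AL$ is equidivisible. The statement for $\FA(A)$ then follows formally: if $u,v,u',v'$ are all pseudofinite and, say, $u'=ux$, then $x$ is pseudofinite by Lemma~\ref{lem:productpseudofinite} (equivalently, because $\FA(A)$ is the complement of a prime ideal of $\AL$ by Proposition~\ref{prop:FAinAL}), so the witness $x$ already lies in $\FA(A)$.

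I do not anticipate a serious obstacle here: the only genuine content is the observation that $\omega$-saturation lets us materialize both abstract factorizations as honest cuts of one concrete linear order, after which comparing the positions of the cuts reduces the argument to the trivial free-monoid case. The only mild subtleties are the bookkeeping for empty factors and the boundary possibility $i=i'$, both of which are covered above.
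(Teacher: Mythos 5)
Your proof is correct and follows essentially the same route as the paper's: realize both factorizations as cut points in a single saturated word in the class of $uv=u'v'$, compare the two positions in the common linear order, and take $x$ to be the class of the interval between them, with the pseudofinite case handled via Lemma~\ref{lem:productpseudofinite}. The only cosmetic differences are that the paper gets by with a weakly saturated $W$ and picks the positions directly rather than invoking Lemma~\ref{lem:satfact}, and that you treat the empty-factor cases explicitly (which the paper silently elides).
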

\begin{figure}[htp]
\begin{tikzpicture}[decoration=brace]
\renewcommand{\labeldist}{.1}
\drawwordn{(0,2)}{3}{$U$}
\drawwordn{(3,2)}{7}{$V$}

\drawwordnr{(0,0)}{10}{$W$}

\drawwordn{(0,-2)}{6}{$U'$}
\drawwordn{(6,-2)}{4}{$V'$}

\newcommand{\ipos}{3}
\draw[thick,dashed] (\ipos,1.75) to (\ipos,.25);
\draw[thick] (\ipos,-.15) to (\ipos,.15);
\draw[thick,->] (\ipos,.6) to (\ipos,.25);
\node[below] at (\ipos,-.1) {$i$};
\node (i) at (\ipos,0) {};

\newcommand{\jpos}{6}
\draw[thick,dashed] (\jpos,-1.75) to (\jpos,-.25);
\draw[thick] (\jpos,-.15) to (\jpos,.15);
\draw[thick,->] (\jpos,-.6) to (\jpos,-.25);
\node[above] at (\jpos,.1) {$j$};
\node (j) at (\jpos,0) {};

\drawdecor{(5.85,-.4)}{(3.15,-.4)}{$X$}{}
\end{tikzpicture}
\caption{Equidivisibility (Proof of Proposition~\ref{prop:equidiv})}
\label{fig:equidiv}
\end{figure}
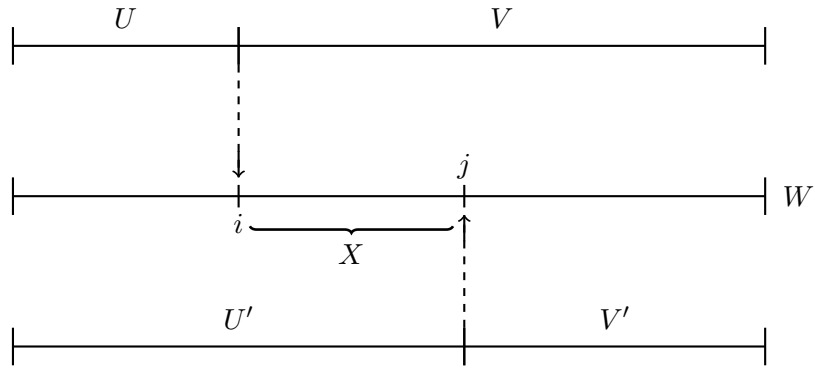
\begin{proof}
Let $U$, $V$, $U'$ and $V'$ be $A$-words such that $UV \equiv U'V'$, and choose a weakly saturated $A$-word $W$ in this class. Since $W$ is weakly saturated, pick positions $i, j \in |W|$ such that $W\be{i} \equiv U$, $W({\geq}i) \equiv V$, $W\be{j} \equiv U'$ and $W({\geq}j) \equiv V'$. Suppose without loss that $i \leq j$. Let $X$ be the $A$-word $W[i,j)$. Then $UX \equiv W\be{j} \equiv U'$ and $XV' \equiv W({\geq}i) \equiv V$. Moreover, if the $A$-words $U$ and $U'$ are pseudofinite, then $X$ is pseudofinite by Lemma~\ref{lem:productpseudofinite}.
\end{proof}

We deduce a few consequences of equidivisibility.

Recall that the \emph{right stabilizer} of an element $u$ in a monoid $M$ is the submonoid $\Stab_M(u) := \{s \in M \ | \ us = u\}$. The next result for $\FA(A)$ is from~\cite{HRS2010}. It was used there to establish the decidability of membership in a number of semidirect products of the form $\mathbf{V} \ast \mathbf{A}$. 
\begin{corollary}\label{cor:stabtotal}
For any $u \in \AL$, the quasi-order $\leq_{\L}$ on the right stabilizer $\Stab_{\AL}(u)$ is total. If $u \in \FA(A)$, the quasi-order $\leq_{\L}$ on the right stabilizer $\Stab_{\FA(A)}(u)$ is also total.
\end{corollary}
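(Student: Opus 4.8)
The plan is to reduce the claim directly to equidivisibility (Proposition~\ref{prop:equidiv}). Fix $u \in \AL$ and take any two elements $s,t$ of the right stabilizer $\Stab_{\AL}(u)$, so that $us = u$ and $ut = u$. The key observation is that these two equalities combine into the single equation $us = ut$, in which the same element $u$ appears as the left factor on both sides. This is exactly the kind of equation equidivisibility is designed to analyze.

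First I would apply equidivisibility to the equation $u\cdot s = u\cdot t$. This yields an element $x \in \AL$ such that either $ux = u$ and $xt = s$, or $ux = u$ and $xs = t$. In the first case $s = xt$, so $s \leq_{\L} t$; in the second case $t = xs$, so $t \leq_{\L} s$. In either case, two arbitrary stabilizer elements are $\leq_{\L}$-comparable, which is precisely the assertion that $\leq_{\L}$ is total on $\Stab_{\AL}(u)$. As a bonus, the witness $x$ produced satisfies $ux = u$, so it again lies in the stabilizer, though this is not needed for the totality statement.

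For the statement about $\FA(A)$, I would simply rerun the argument inside $\FA(A)$: if $u \in \FA(A)$ and $s,t \in \Stab_{\FA(A)}(u)$, then $u,s,t$ are all pseudofinite and $us = ut$ holds in $\FA(A)$. Since Proposition~\ref{prop:equidiv} asserts that $\FA(A)$ is itself equidivisible, the same application produces $x \in \FA(A)$ with $s = xt$ or $t = xs$. Alternatively, one could run the argument in $\AL$ and observe that $s = xt$ (resp.\ $t = xs$) is pseudofinite, whence $x$ is pseudofinite by Lemma~\ref{lem:productpseudofinite}, so that $x$ again lands in $\FA(A)$.

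There is essentially no serious obstacle here: the only genuine step is recognizing that the two stabilizer equations should be glued into the single equation $us = ut$ before invoking equidivisibility. Once that equation is written down, the conclusion is immediate, and no appeal to the finer structural results of Lemma~\ref{lem:stabilityetc} is required.
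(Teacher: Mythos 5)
Your proposal is correct and is essentially identical to the paper's own proof: both combine the two stabilizer equations into $us=ut$, apply equidivisibility (Proposition~\ref{prop:equidiv}) to obtain the comparing element $x$, and handle the $\FA(A)$ case either by equidivisibility of $\FA(A)$ itself or via Lemma~\ref{lem:productpseudofinite}. No gaps.
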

\begin{proof}
Suppose that $s_1, s_2 \in \Stab_{\AL}(u)$. Then $us_1 = us_2$. Proposition~\ref{prop:equidiv} yields $x \in \Stab_{\AL}(u)$ such that $xs_2 = s_1$ or $xs_1 = s_2$. If $u \in \FA(A)$, then also $x \in \FA(A)$, by Lemma~\ref{lem:productpseudofinite}.
\end{proof}
By symmetry, the dual result to Corollary~\ref{cor:stabtotal} holds for left stablizers and the quasi-order $\leq_{\R}$.

The following result was proved by the second-named author for $\FA(A)$~\cite{Steinberg2010}.
\begin{lemma}\label{lem:idpotprime}
Any idempotent ideal in $\AL$ or $\FA(A)$ is prime. In particular, if $e \in E(\AL)$, then the  ideal $\AL{e}\AL$ generated by $e$ is prime.
\end{lemma}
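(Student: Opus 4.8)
The plan is to observe that both $\AL$ and $\FA(A)$ are equidivisible (Proposition~\ref{prop:equidiv}), and then to prove the purely algebraic fact that in \emph{any} equidivisible monoid $M$ every idempotent ideal is prime; applying this to $M=\AL$ and $M=\FA(A)$ yields the first assertion in one stroke. Recall that an ideal $I$ is prime exactly when its complement is a submonoid. If $I=M$ there is nothing to prove, so I may assume $I$ is proper, whence $1\notin I$ and the complement already contains the identity. It therefore suffices to show that the complement of $I$ is closed under multiplication, i.e.\ that $uv\in I$ forces $u\in I$ or $v\in I$.

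Here is how I would run that step. Suppose $uv\in I$. Since $I$ is idempotent, $I=I^2$, so I may write $uv=xy$ with $x,y\in I$. Equidivisibility applied to the equality $uv=xy$ produces an element $z\in M$ such that either $uz=x$ and $zy=v$, or $xz=u$ and $zv=y$. In the first case $v=zy\in I$, because $y\in I$ and $I$ is a left ideal; in the second case $u=xz\in I$, because $x\in I$ and $I$ is a right ideal. Thus in either case $u\in I$ or $v\in I$, as required. This is essentially the entire content of the proof, and I do not expect a genuine obstacle: equidivisibility together with idempotency immediately forces one of the two factors into the ideal. The only point needing care is the degenerate case of an improper ideal, handled above.

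For the ``in particular'' statement I would check that the principal ideal $\AL e\AL$ generated by an idempotent $e\in E(\AL)$ is itself idempotent, so that the first part applies. The inclusion $(\AL e\AL)^2\subseteq\AL e\AL$ is automatic because $\AL e\AL$ is an ideal. For the reverse inclusion I would use $e=ee$ to rewrite a typical element $xey$ of $\AL e\AL$ (with $x,y\in\AL$) as
\[
 xey = x(ee)y = (xe)(ey),
\]
and note that $xe=xe\cdot 1$ and $ey=1\cdot ey$ both lie in $\AL e\AL$, exhibiting $xey$ as a product of two elements of $\AL e\AL$. Hence $\AL e\AL=(\AL e\AL)^2$ is idempotent and therefore prime by the first part.
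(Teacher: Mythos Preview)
Your proof is correct and follows essentially the same approach as the paper: both arguments write $uv=xy$ with $x,y\in I$ using idempotency and then invoke equidivisibility (Proposition~\ref{prop:equidiv}) to force $u\in I$ or $v\in I$. Your version is slightly more explicit (handling the improper-ideal case and spelling out why $\AL e\AL$ is idempotent), and you nicely isolate the argument as a general fact about equidivisible monoids, but the substance is identical.
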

\begin{proof}
Let $I$ be an idempotent ideal and let $ab \in I$. Pick $x, y \in I$ such that $ab = xy$. By Proposition~\ref{prop:equidiv}, we have either $a \leq_{\R} x$ or $b \leq_{\L} y$. Therefore, one of $a$ and $b$ lies in $I$. For the `in particular' statement, notice that the ideal generated by an idempotent element is idempotent.
\end{proof}
In the following two propositions, we use the well-known fact that in a finite aperiodic semigroup, and hence in a pro-aperiodic semigroup, $\mathcal{H}$-classes are trivial, where $\mathcal{H}$ is the intersection of the equivalence relations $\mathcal{L}$ and $\mathcal{R}$.  See~\cite{Almeida:book,RS2009}.  First we extend a result from~\cite{Steinberg2010} to $\AL$.

\begin{proposition}
Suppose that $u\in \AL$ has finite order, i.e., generates a finite subsemigroup.  Then $u$ is an idempotent.
\end{proposition}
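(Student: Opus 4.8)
The plan is to prove the stronger statement that $u$ equals its idempotent power $e := u^{\omega}$; this finishes the proof since then $u = e = e^{2} = u^{2}$. Because $u$ generates a finite subsemigroup, the orbit closure $\overline{\{u^{n} \mid n \geq 1\}}$ equals the finite set $\{u^{n} \mid n \geq 1\}$, so its unique idempotent $e = u^{\omega}$ is attained: $e = u^{n}$ for some $n \geq 1$. If $n = 1$ then $u = e$ and we are done, so I would assume $n \geq 2$. Writing $e = u \cdot u^{n-1} = u^{n-1}\cdot u$ exhibits $u$ as both a prefix and a suffix of $e$, giving at once the easy relations $e \leq_{\R} u$, $e \leq_{\L} u$, and $e \leq_{\J} u$.

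The crucial and least routine step is to establish the reverse divisibility $u \leq_{\J} e$, which is where the special structure of $\AL$ must enter. Here I would invoke Lemma~\ref{lem:idpotprime}: the ideal $I := \AL e \AL$ generated by the idempotent $e$ is prime, so its complement $N := \AL \setminus I$ is a submonoid. Since $e = u^{n} \in I$, we have $u^{n} \notin N$; as $N$ is closed under multiplication, $u \in N$ would force $u^{n} \in N$, a contradiction. Hence $u \in I$, that is $u \leq_{\J} e$, and combined with $e \leq_{\J} u$ we obtain $u \mathrel{\J} e$.

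Finally I would upgrade $\J$-equivalence to $\mathcal{H}$-equivalence via the stability of $\AL$ recorded in Lemma~\ref{lem:stabilityetc}. Writing $e = u\gamma$ (possible since $e \leq_{\R} u$), the relation $u \leq_{\J} e = u\gamma$ together with Lemma~\ref{lem:stabilityetc}(2) yields $u \leq_{\R} u\gamma = e$, so $u \mathrel{\R} e$; symmetrically, writing $e = \delta u$ and applying Lemma~\ref{lem:stabilityetc}(1) gives $u \leq_{\L} e$, so $u \mathrel{\L} e$. Therefore $u \mathrel{\mathcal{H}} e$, and since $\mathcal{H}$-classes in the pro-aperiodic monoid $\AL$ are trivial, $u = e$ is idempotent. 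The identical argument applies in $\FA(A)$, using that it is a closed submonoid of $\AL$.

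I expect the main obstacle to be precisely the step $u \leq_{\J} e$. Inside the abstract monogenic semigroup $\langle u \rangle$ all the divisibility relations among $u$, $u^{2}$ and $e$ that actually hold are the trivial ones (e.g. $u^{2} \leq_{\R} u$, $e \leq_{\R} u$), so no purely algebraic manipulation within $\langle u \rangle$ can produce $u \leq_{\J} e$; indeed the two-element aperiodic monoid $\{u, e\}$ with $e = u^{2} = u^{3}$ shows the proposition is \emph{false} for general aperiodic monoids, and there $\{e\}$ is not a prime ideal. The content of the argument is thus exactly the use of Lemma~\ref{lem:idpotprime}, a property genuinely special to $\AL$, to force $u$ into the ideal generated by $e$.
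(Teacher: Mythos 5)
Your proof is correct and follows essentially the same route as the paper's: obtain an idempotent power $u^n$ from finiteness/aperiodicity of the closed subsemigroup $\langle u\rangle$, use Lemma~\ref{lem:idpotprime} (primeness of the ideal generated by an idempotent) to get $u \mathrel{\J} u^n$, and then upgrade to $\mathcal{H}$-equivalence via the stability properties of Lemma~\ref{lem:stabilityetc} and conclude by triviality of $\mathcal{H}$-classes. Your version merely spells out the steps the paper leaves implicit, and your closing remark correctly identifies the primeness of $\AL e\AL$ as the non-routine ingredient.
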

\begin{proof}
Since the submonoid generated by $u$ is closed, it must be aperiodic and so $u^n=u^{n+1}$ for some $n>0$.  Therefore, $u^n$ is idempotent.  But then since $u^n$ generates a prime ideal, we must have that $u$ and $u^n$ are $\mathcal J$-equivalent.  By Lemma~\ref{lem:stabilityetc}, we have that $u$ and $u^n$ are $\mathcal H$-equivalent and hence equal.  Thus $u$ is an idempotent.
\end{proof}

\begin{proposition}
If $u \in \AL$ and $e \in E(\AL)$ are such that $ue = eu$, then $ue \in \{u,e\}$.
\end{proposition}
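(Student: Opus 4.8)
The plan is to turn the commutation relation $ue = eu$ into two explicit algebraic cases by a single application of equidivisibility (Proposition~\ref{prop:equidiv}), and then to resolve each case using idempotency of $e$ together with the triviality of $\mathcal{H}$-classes in the pro-aperiodic monoid $\AL$. Throughout I write $f := ue = eu$. From the two factorizations one reads off at once that $f \leq_{\L} u$, $f \leq_{\R} u$, $f \leq_{\L} e$ and $f \leq_{\R} e$; thus $f$ already sits below both $u$ and $e$ in every one of Green's quasi-orders, and it suffices to promote one of these inequalities into an $\mathcal{H}$-equivalence.

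First I would apply equidivisibility to the identity $ue = eu$, viewed as an equality of two products. This produces an element $x \in \AL$ for which either (a) $ux = e$ and $xu = e$, or (b) $ex = u$ and $xe = u$. Case (b) collapses immediately: since $e$ is idempotent, $f = eu = e(ex) = e^2 x = ex = u$, so $ue = u$. For case (a) the trick is to multiply $f$ by the witness $x$ and use $ux = xu = e$: one computes $fx = (eu)x = e(ux) = e^2 = e$ and $xf = x(ue) = (xu)e = e^2 = e$. Hence $e = fx$ gives $e \leq_{\R} f$ and $e = xf$ gives $e \leq_{\L} f$. Combining these with the automatic relations $f \leq_{\R} e$ and $f \leq_{\L} e$ yields $e \mathrel{\R} f$ and $e \mathrel{\L} f$, so $e \mathrel{\mathcal{H}} f$; since $\AL$ is pro-aperiodic, its $\mathcal{H}$-classes are trivial, and therefore $f = e$, i.e.\ $ue = e$. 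In either case $ue \in \{u,e\}$, as claimed.

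The essential decision is to feed $ue = eu$ into equidivisibility rather than attempting a direct word-level analysis of the commuting pair; once that is done the computations are routine. The one place that needs genuine care is case (a): the naive substitutions (e.g.\ $f = u^2 x$) do not visibly simplify to $e$, and the point is instead to \emph{recover} $e$ as the two-sided product $fx = xf$ and then invoke $\mathcal{H}$-triviality, rather than trying to establish $f = e$ by elementary manipulation. I would therefore flag this $\mathcal{H}$-promotion as the crux of the argument; it mirrors the method of the preceding proposition, where finite order was upgraded to idempotency through a prime ideal and $\mathcal{H}$-triviality.
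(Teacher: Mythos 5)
Your proof is correct, and in the crucial case it takes a genuinely different and shorter route than the paper. Both arguments begin identically: equidivisibility applied to $ue=eu$ yields $x$ with either $ex=u=xe$ (dispatched at once via $eu=e^2x=ex=u$) or $ux=e=xu$. In the second case the paper argues about $x$ itself: it shows $e$ and $x^n$ are $\leq_{\L}$-comparable for all $n$, and then splits further, using primality of the idempotent-generated ideal $\AL e\AL$ (Lemma~\ref{lem:idpotprime}) plus $\mathcal{H}$-triviality to get $x=e$ in one subcase, and topological closedness of $\leq_{\L}$ together with $x^\omega x = x$ in the other. You instead argue about $f:=ue=eu$ directly: the computations $fx=(eu)x=e(ux)=e^2=e$ and $xf=x(ue)=(xu)e=e^2=e$ give $e\leq_{\R}f$ and $e\leq_{\L}f$, which combined with the automatic $f\leq_{\R}e$ and $f\leq_{\L}e$ yield $f\mathrel{\mathcal{H}}e$, whence $f=e$ by triviality of $\mathcal{H}$-classes. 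This collapses the paper's two subcases into a single two-line calculation and needs only equidivisibility, idempotency of $e$, and $\mathcal{H}$-triviality --- it dispenses entirely with the prime-ideal lemma, the closedness of $\leq_{\L}$, and the $\omega$-power manipulation. I see no gap; the only thing the paper's longer route ``buys'' is the incidental observation that $x=e$ in one of its subcases, which is not needed for the statement.
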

\begin{proof}
Suppose that $ue = eu$. By Proposition~\ref{prop:equidiv}, there exists $x$ such that either (1) $ex = u = xe$ or (2) $ux = e = xu$.

(1) If $ex = u = xe$, then $eu = eex = ex = u$.

(2) Suppose that $ux = e = xu$.
Notice that, for every $n \geq 1$, $x^n$ and $e$ commute.
By Lemma~\ref{lem:stabilityetc}.3, for every $n \geq 1$, since $x^ne = ex^n$ is a common lower bound of $e$ and $x^n$ in $\leq_{\L}$, the elements $e$ and $x^n$ are comparable in $\leq_{\L}$. We further distinguish two cases.

(2a) There exists $n \geq 1$ such that $x^n \leq_{\L} e$. Then $x^n$ lies in the ideal generated by $e$, which is prime by Lemma~\ref{lem:idpotprime}. Hence, $x$ lies in the ideal generated by $e$, i.e., $x \leq_{\J} e$. Since $e = ux$, Lemma~\ref{lem:stabilityetc}.1 implies that $x \leq_{\L} e$, and similarly, $x \leq_{\R} e$. Since we already have $e \leq_{\R} x$ and $e \leq_{\L} x$, we conclude that $x$ is $\mathcal{H}$-equivalent to $e$. Since $\AL$ is pro-aperiodic, we obtain $x = e$, and hence $ue = ux = e$.

(2b) For every $n \geq 1$, $e <_{\L} x^n$. Since $\leq_{\L}$ is a closed relation in any compact monoid \cite[Proposition~3.1.9]{RS2009}, we get $e \leq_{\L} x^\omega$. Pick $v$ such that $e = vx^\omega$. Since $x^\omega$ is idempotent, we obtain $ex^\omega = v(x^\omega)^2 = vx^\omega = e$. Therefore, using that $x^\omega x = x$ in the pro-aperiodic monoid $\AL$, we have
\[ eu = ex^\omega u = ex^\omega x u = e x^\omega e = e^2  = e,\] as required.
\end{proof}

\section{The word problem for aperiodic $\omega$-terms}\label{sec:wpomega}
In this section, we use our techniques to give an improved proof of the decidability of the word problem for $\omega$-terms in $\FA(A)$.  The original proof of decidability was due to McCammond~\cite{Cam2001}, who introduced normal forms based on an inductively defined notion of rank.   McCammond shows, using an infinite basis of natural identities satisfied by $\omega$-terms, that each $\omega$-term can be placed into normal form.  The construction of the normal forms is quite technical and it is not clear how efficient it is to find the normal form of  a term  from the viewpoint of time complexity.  By far,  the most difficult part of McCammond's paper is the proof that distinct normal forms represent distinct elements of $\FA(A)$.  The separation of  normal forms makes use of his solution to the word problem for free Burnside semigroups of sufficiently large exponent~\cite{MAC91}, which inspired the definition of the normal forms in the first place. Of particular importance is that free Burnside semigroups of sufficiently large exponent have a system of co-finite ideals with empty intersection and so the quotients by these ideals can be used to distinguish the normal forms for $\omega$-terms.  Recently, Almeida, Costa and Zeitoun~\cite{ACZ2015} have provided a new proof that distinct McCammond normal forms represent distinct $\omega$-terms in $\FA(A)$ by introducing star-free languages associated to normal forms whose closures can be used to separate them.

Huschenbett and Kufleitner~\cite{HK2014} have developed a new algorithm to solve the word problem for $\omega$-terms in $\FA(A)$ using model-theoretic ideas.  From the point of view of our paper, they assign an $A$-word to each $\omega$-term whose elementary equivalence class represents the corresponding element of $\FA(A)$.  They then prove that two such interpretations of $\omega$-terms are elementarily equivalent to each other if and only if they are isomorphic.  They use work of Bloom and Esik~\cite{BloEsi2005} to prove that isomorphism can be decided in exponential time in the size of the expression as an $\omega$-term; in the conference presentation of this result they announced that this can be improved to polynomial time using recent work of Lohrey and Mathissen~\cite{LM2013}. The catch is that Huschenbett and Kufleitner rely on McCammond's work to prove correctness of the algorithm: the proof of \cite[Proposition~5.2]{HK2014} goes through the non-trivial direction of McCammond's normal forms~\cite{Cam2001} (see also~\cite{ACZ2015}).

Our work allows us to give a direct proof of the correctness of the Huschenbett and Kufleitner algorithm, circumventing McCammond's results entirely.  Our proof uses the same interpretation of $\omega$-terms as in \cite{HK2014}, and provides a new justification for that interpretation, in the following sense. We show (Proposition~\ref{prop:Ut}) that the interpretation of $\omega$-terms in \cite{HK2014} picks out a countably saturated $A$-word in the elementary equivalence class represented by an $\omega$-term.  That two $\omega$-terms, thus interpreted, represent the same element of $\FA(A)$ if and only if they are isomorphic is then an immediate consequence of the  standard model-theoretic fact that elementarily equivalent countably saturated models are isomorphic.

We begin with a few definitions. Let $A$ be a finite alphabet. An \emph{$\omega$-term} over $A$ is a term built up from finite words by using concatenation and $\omega$-power. If $M$ is a profinite monoid containing the alphabet $A$, then any $\omega$-term $t$ has a natural interpretation $\sem{t}_M$ in $M$. In the case $M = \FA(A)$, we will now inductively define, for any $\omega$-term $t$, a particular $A$-word $U_t$ in the class $\sem{t}_{\FA(A)}$. Let $\rho$ denote the linear order $\mathbb{N} + \mathbb{Q} \times \mathbb{Z} + \mathbb{N}^{\mathrm{op}}$, which is countably saturated (cf. Example~\ref{exa:saturated}).

\begin{itemize}
\item If $t$ is a term representing a finite word, let $U_t$ be that finite word.
\item If $t = t_1 \cdot t_2$, let $U_t$ be the $A$-word $U_{t_1} \cdot U_{t_2}$.
\item If $t = s^\omega$, let $U_t$ be the $A$-word $(U_s)^{\rho}$.
\end{itemize}

\begin{proposition}\label{prop:Ut}
For any $\omega$-term $t$, the $A$-word $U_t$ is a countably saturated $A$-word in the elementary equivalence class $\sem{t}_{\FA(A)}$.
\end{proposition}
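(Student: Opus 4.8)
The plan is to prove both assertions of the proposition—that $U_t$ is countably saturated and that $[U_t]_{\equiv} = \sem{t}_{\FA(A)}$—simultaneously, by structural induction on the $\omega$-term $t$. For the base case, when $t$ represents a finite word, $U_t$ is that finite word; finite words are $\omega$-saturated (as noted just after the definition of $\omega$-saturation) and are trivially countable, and under the standing identification of finite words with their elementary equivalence classes we have $[U_t]_{\equiv} = \sem{t}_{\FA(A)}$ immediately.

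For the inductive step I would treat the two term constructors separately. If $t = t_1 \cdot t_2$, then $U_t = U_{t_1} \cdot U_{t_2}$; by the induction hypothesis both factors are countably saturated, so $U_t$ is $\omega$-saturated by Corollary~\ref{cor:satconpower-omega} and countable as a union of two countable sets, while $[U_t]_{\equiv} = [U_{t_1}]_{\equiv}[U_{t_2}]_{\equiv} = \sem{t_1}_{\FA(A)}\sem{t_2}_{\FA(A)} = \sem{t}_{\FA(A)}$ by definition of the monoid multiplication on $\AL$. If $t = s^\omega$, then $U_t = (U_s)^{\rho}$; here the key inputs are that $U_s$ is countably saturated by induction and that $\rho = \mathbb{N} + \mathbb{Q}\times\mathbb{Z} + \mathbb{N}^{\op}$ is a countably saturated linear order (Example~\ref{exa:saturated}, via $W_3$), so that $(U_s)^{\rho}$ is $\omega$-saturated by Corollary~\ref{cor:satconpower-omega} and countable as a countable union of countable sets.

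The only point requiring a little care is the class identification in the $\omega$-power case, and this is where I would locate the (minor) obstacle: the order $\rho$ must play a double role. It must be $\omega$-saturated, in order to invoke Corollary~\ref{cor:satconpower-omega}, and simultaneously it must be an infinite discrete linear order with endpoints, in order to invoke the $\omega$-power formula of Theorem~\ref{thm:AL}. Both properties are immediate from the shape $\mathbb{N} + \mathbb{Q}\times\mathbb{Z} + \mathbb{N}^{\op}$, since the outer summands $\mathbb{N}$ and $\mathbb{N}^{\op}$ supply the first and last elements and discreteness is inherited from $\mathbb{N}$, $\mathbb{N}^{\op}$ and from the copies of $\mathbb{Z}$. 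With this in hand, Theorem~\ref{thm:AL} gives $[(U_s)^{\rho}]_{\equiv} = ([U_s]_{\equiv})^{\omega} = (\sem{s}_{\FA(A)})^{\omega} = \sem{t}_{\FA(A)}$, using the induction hypothesis $[U_s]_{\equiv} = \sem{s}_{\FA(A)}$. Finally, I would observe that pseudofiniteness of $U_t$ need not be verified separately: since $\sem{t}_{\FA(A)} \in \FA(A)$ is by definition a class of pseudofinite words, the equality $[U_t]_{\equiv} = \sem{t}_{\FA(A)}$ already forces $U_t$ to be pseudofinite. Thus the whole argument reduces to bookkeeping over three earlier facts—the $\omega$-saturation of finite words, Corollary~\ref{cor:satconpower-omega}, and Theorem~\ref{thm:AL}—with no genuinely hard step remaining.
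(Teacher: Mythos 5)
Your proof is correct and takes essentially the same approach as the paper: a structural induction on the $\omega$-term, using the $\omega$-saturation of finite words, Corollary~\ref{cor:satconpower-omega} for the saturation of concatenations and $\rho$-powers, and Theorem~\ref{thm:AL} to identify the class in the $\omega$-power step. The paper's proof is simply a terser statement of the same argument, and your added observations (that $\rho$ is both countably saturated and an infinite discrete linear order with endpoints, and that pseudofiniteness follows for free from the class identification) are accurate.
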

\begin{proof}
Finite words are countably saturated. Concatenations and $\rho$-powers of countably saturated $A$-words are clearly countable, and, by Corollary~\ref{cor:satconpower-omega}, $\omega$-saturated. An easy induction, using Theorem~\ref{thm:AL} for the step involving $\omega$-power, shows that $U_t$ lies in $\sem{t}_{\FA(A)}$.
\end{proof}

\begin{theorem}
For any $\omega$-terms $t_1$, $t_2$, the following are equivalent:
\begin{enumerate}
\item $\sem{t_1}_{\FA(A)} = \sem{t_2}_{\FA(A)}$,
\item $U_{t_1}$ is isomorphic to $U_{t_2}$.
\end{enumerate}
\end{theorem}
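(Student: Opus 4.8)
The plan is to read off the theorem from Proposition~\ref{prop:Ut}. By that proposition, each $U_{t_i}$ is a countably saturated $A$-word lying in the elementary equivalence class $\sem{t_i}_{\FA(A)}$. Under the identification of $\FA(A)$ with elementary equivalence classes of pseudofinite words (Theorem~\ref{thm:start} together with the Convention following Proposition~\ref{prop:FAinAL}), condition (1) says precisely that $\sem{t_1}_{\FA(A)}$ and $\sem{t_2}_{\FA(A)}$ are the \emph{same} class, that is, that $U_{t_1} \equiv U_{t_2}$. Thus the entire theorem reduces to the assertion that, for countably saturated words, elementary equivalence coincides with isomorphism.

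The implication (2)$\Rightarrow$(1) is immediate and purely formal: isomorphic relational structures satisfy the same first-order sentences, so $U_{t_1} \equiv U_{t_2}$, whence they determine the same element of $\FA(A)$.

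For (1)$\Rightarrow$(2) I would invoke the classical uniqueness theorem of model theory: two elementarily equivalent $\omega$-saturated models of the same cardinality over a fixed signature are isomorphic \cite[Ch.~10]{Hod1993}. Concretely this is a back-and-forth argument. One builds an increasing chain of finite partial elementary maps from $U_{t_1}$ to $U_{t_2}$, starting from the empty map (which is elementary because $U_{t_1}\equiv U_{t_2}$) and enumerating the countably many positions of both words; at each stage one adjoins the next unmatched position, using $\omega$-saturation of the target to realize, over the finitely many already-matched parameters, the complete type that the new element realizes in the source. Elementary equivalence of the two words guarantees that this type is consistent with the target, and $\omega$-saturation guarantees it is realized there. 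The union of the chain is a bijection of underlying sets preserving $<$ and every $P_a$, i.e.\ an isomorphism $U_{t_1}\cong U_{t_2}$. Since the signature consists only of $<$ and the finitely many unary predicates $P_a$ (so in particular it is countable), the standard theorem applies verbatim.

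The main obstacle is thus not the back-and-forth construction itself but ensuring that our combinatorial notion of countable saturation---every closed interval is weakly saturated, with countable underlying set---coincides exactly with the model-theoretic notion of a countably infinite $\omega$-saturated model, so that the classical theorem can be applied off the shelf. This identification is precisely the content of Proposition~\ref{prop:omegasatequiv}, and once it is in place the argument is complete.
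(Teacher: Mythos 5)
Your proposal is correct and follows essentially the same route as the paper: both reduce (1)$\Rightarrow$(2) to the standard uniqueness of countably saturated models in an elementary equivalence class (the paper cites Chang--Keisler, you sketch the back-and-forth), with Proposition~\ref{prop:Ut} supplying that each $U_{t_i}$ is such a model and Proposition~\ref{prop:omegasatequiv} justifying that the paper's combinatorial notion of saturation matches the model-theoretic one. The converse direction is handled identically in both.
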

\begin{proof}
(2) $\Rightarrow$ (1) is clear, since isomorphic $A$-words are elementarily equivalent. (1) $\Rightarrow$ (2). By Proposition~\ref{prop:Ut}, both $U_{t_1}$ and $U_{t_2}$ are countably saturated models in the same elementary equivalence class. By the uniqueness of countably saturated models (cf., e.g., \cite[Thm.~2.3.9]{CK}), $U_{t_1}$ and $U_{t_2}$ are isomorphic.
\end{proof}
In order to decide the word problem for $\omega$-terms in $\FA(A)$, one can now proceed as in \cite{HK2014} and use a decidability procedure for isomorphism of regular words (cf.~\cite{BloEsi2005}  or \cite{LM2013}) to decide isomorphism of the countably saturated $A$-words interpreting the $\omega$-terms.

\section{Factors}\label{sec:factorsomega}
In this section, we exploit our model-theoretic view of $\FA(A)$ (and $\AL$) to analyze the $\J$-orderings on sets of \emph{factors} of elements in $\AL$. We recover several of the structural results on factors of $\omega$-terms that were obtained in \cite{ACZ09a,ACZ14} using McCammond's normal forms as special cases of more general results. Again, our proofs avoid such normal forms altogether; instead, we use Theorem~\ref{thm:substfactors} on factors of a substitution. The following result was first proved by Almeida, Costa, and Zeitoun \cite[Theorem~7.4]{ACZ14}.

\begin{theorem}\label{thm:omegafact}
Prefixes, suffixes and factors of elements in $\FA(A)$ that are interpretations of $\omega$-terms are again interpretations of $\omega$-terms.
\end{theorem}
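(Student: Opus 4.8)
The plan is to prove, by structural induction on the $\omega$-term $t$, the stronger statement that \emph{all three} of the sets ${\uparrow}_{\J}\sem{t}_{\FA(A)}$, ${\uparrow}_{\R}\sem{t}_{\FA(A)}$ and ${\uparrow}_{\L}\sem{t}_{\FA(A)}$ consist entirely of interpretations of $\omega$-terms. Write $\Omega$ for the set of elements of $\FA(A)$ that are interpretations of $\omega$-terms; by definition $\Omega$ contains all finite words (in particular $\emp$) and is closed under concatenation and under $\omega$-power. Treating factors, prefixes and suffixes simultaneously is essential, since the recursion of Theorem~\ref{thm:substfactors} expresses factors of a substitution in terms of prefixes and suffixes of the pieces, and vice versa. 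The base case, where $t$ is a finite word, is immediate, because factors, prefixes and suffixes of a finite word are again finite words.

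For the concatenation step $t = t_1 t_2$, I would write $\sem{t}_{\FA(A)} = f(01)$, where $f \colon \Lambda(\{0,1\}) \to \AL$ is the substitution with $f(0) = \sem{t_1}_{\FA(A)}$ and $f(1) = \sem{t_2}_{\FA(A)}$ (Proposition~\ref{prop:conthom}), and apply Theorem~\ref{thm:substfactors}. The key observation is that the factors of the two-letter word $01$ of the shape $b_1 z b_2$ reduce to the single possibility $b_1 = 0$, $b_2 = 1$, $z = \emp$, and its prefixes and suffixes are equally constrained. Consequently the theorem exhibits each factor, prefix or suffix of $f(01)$ as one of: a factor/prefix/suffix of some $\sem{t_i}_{\FA(A)}$, or such an element concatenated with $\sem{t_1}_{\FA(A)}$ or $\sem{t_2}_{\FA(A)}$. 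By the induction hypothesis these lie in $\Omega$, which is closed under concatenation, so all three sets are contained in $\Omega$.

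For the $\omega$-power step $t = s^{\omega}$, I would write $\sem{t}_{\FA(A)} = f(c^{\omega})$ for the substitution $f \colon \Lambda(\{c\}) \to \AL$ with $f(c) = \sem{s}_{\FA(A)}$, where $c^{\omega}$ is the $\omega$-power of the generator of the one-letter free pro-aperiodic monoid (Theorem~\ref{thm:AL}). Here I would invoke the one-letter analysis of Example~\ref{exa:saturated} together with Lemma~\ref{lem:oneletter}: every element of $\Lambda(\{c\})$ is $c^{n}$ for some $n$ or $c^{\omega}$, so $f$ maps \emph{all} of $\Lambda(\{c\})$ into $\Omega$, sending $c^{n}$ to $\sem{s}_{\FA(A)}^{n}$ and $c^{\omega}$ to $\sem{s}_{\FA(A)}^{\omega}$. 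Theorem~\ref{thm:substfactors} then expresses each factor of $f(c^{\omega})$ either as a factor of $\sem{s}_{\FA(A)}$, or as a product $x\, f(z)\, y$ with $x$ a suffix and $y$ a prefix of $\sem{s}_{\FA(A)}$ and $z \in \Lambda(\{c\})$; by the induction hypothesis $x,y \in \Omega$, and $f(z) \in \Omega$ by the preceding remark, so the product lies in $\Omega$. The prefix and suffix cases are handled identically, again using that $f(z) \in \Omega$ for every $z \in \Lambda(\{c\})$.

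The only genuine content beyond bookkeeping is the two computations of the factor/prefix/suffix sets of $01$ and of $c^{\omega}$ inside the small free pro-aperiodic monoids $\Lambda(\{0,1\})$ and $\Lambda(\{c\})$: these control the middle pieces $f(z)$ produced by Theorem~\ref{thm:substfactors}, and it is precisely the one-letter identification $\Lambda(\{c\}) = \{c^{n} \mid n \geq 0\} \cup \{c^{\omega}\}$ that guarantees, in the $\omega$-power case, that no factor arises whose image under $f$ fails to be an $\omega$-term. I expect this one-letter identification to be the main point to pin down; everything else follows formally from closure of $\Omega$ under concatenation and $\omega$-power together with the induction hypothesis. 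Finally, that all these factors genuinely lie in $\FA(A)$, rather than merely in $\AL$, is automatic from Lemma~\ref{lem:productpseudofinite}, so the result could equally be stated and proved for $\AL$.
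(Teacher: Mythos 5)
Your proof is correct and follows essentially the same route as the paper: structural induction on the $\omega$-term, with Theorem~\ref{thm:substfactors} supplying the case analysis for the concatenation and $\omega$-power steps (the latter via the identification $\FA(1)=\mathbb{N}\cup\{\omega\}$, which is exactly how the paper produces the prefix $\sem{r}^z x$). The only difference is organizational: you run a simultaneous induction on factors, prefixes and suffixes, whereas the paper inducts only on prefixes and then obtains suffixes by symmetry and factors as suffixes of prefixes, which is slightly lighter bookkeeping but identical in content.
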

\begin{proof}
We prove the statement for prefixes. The statement for suffixes then follows by symmetry, and the statement for factors, in turn, then follows because any factor is a suffix of a prefix. 
We write $\sem{t}$ as shorthand for $\sem{t}_{\FA(A)}$. We will prove by induction on the complexity of an $\omega$-term $t$ that, for any $w \in \FA(A)\setminus \{\emp\}$ such that $\sem{t} \leq_{\R} w$, we have $w = \sem{s}$ for some $\omega$-term $s$.
%
Prefixes of a finite word are finite words.
If $t = t_0 \cdot t_1$ for some $\omega$-terms $t_0$ and $t_1$, and $\sem{t} \leq_{\R} w$, we apply Theorem~\ref{thm:substfactors} in the special case of a concatenation. If $\sem{t_0} \leq_{\R} w$, then we are done immediately by induction. Otherwise, we have $w = \sem{t_0} \cdot v$ for some $v$ with $\sem{t_1} \leq_{\R} v$. By induction, pick an $\omega$-term $s'$ such that $\sem{s'} = v$. Then for the $\omega$-term $s := t_0s'$, we have $\sem{s} = w$. If $t = r^\omega$ for some $\omega$-term $r$, and $\sem{t} \leq_{\R} w$, we apply Theorem~\ref{thm:substfactors} in the special case of an $\omega$-power. There exist $z \in \FA(1) = \mathbb{N} \cup \{\omega\}$ and $x \in \FA(A)$ with $\sem{r} \leq_{\R} x$ such that $w = \sem{r}^zx$. By the induction hypothesis, pick an $\omega$-term $s'$ such that $\sem{s'} = x$. The $\omega$-term $s := r^z s'$ gives $\sem{s} = w$.
\end{proof}
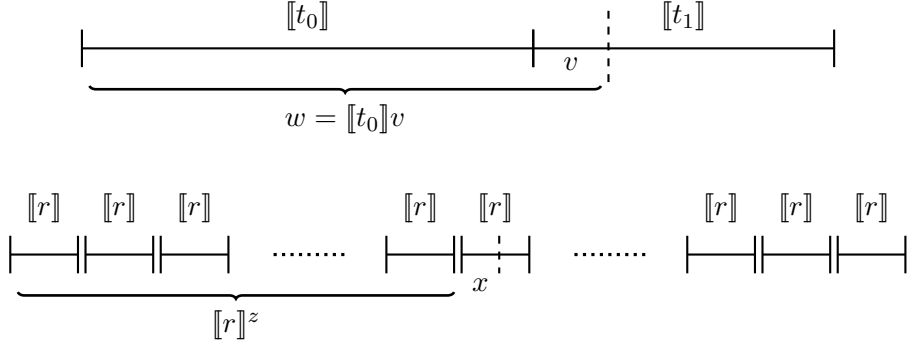
\begin{figure}[htp]
\begin{tikzpicture}[decoration=brace]
\renewcommand{\labeldist}{.1}
\drawwordn{(0,0)}{6}{$\sem{t_0}$}
\drawwordn{(6,0)}{4}{$\sem{t_1}$}

\draw[thick, dashed] (7,.5) to (7,-.5);

\node at (6.5,-.2) {$v$};

\drawdecor{(6.9,-.5)}{(.1,-.5)}{$w = \sem{t_0}v$}{}
\end{tikzpicture}

\vspace{5mm}

\begin{tikzpicture}[decoration=brace]
\renewcommand{\labeldist}{.3}

\drawwordn{(0,0)}{.9}{$\sem{r}$}
\drawwordn{(1,0)}{.9}{$\sem{r}$}
\drawwordn{(2,0)}{.9}{$\sem{r}$}
\draw[very thick,dotted] (3.5,0) -- (4.5,0);
\drawwordn{(5,0)}{.9}{$\sem{r}$}
\drawwordn{(6,0)}{.9}{$\sem{r}$}
\draw[very thick,dotted] (7.5,0) -- (8.5,0);
\drawwordn{(9,0)}{.9}{$\sem{r}$}
\drawwordn{(10,0)}{.9}{$\sem{r}$}
\drawwordn{(11,0)}{.9}{$\sem{r}$}

\draw[thick, dashed] (6.5,.25) to (6.5,-.25);
\node at (6.25,-.4) {$x$};
\drawdecor{(5.9,-.5)}{(.1,-.5)}{$\sem{r}^z$}{}
\end{tikzpicture}
\caption{Factors of interpretations of $\omega$-terms (Theorem~\ref{thm:omegafact})}
\end{figure}

We now focus on special properties of the $\J$-, $\R$- and $\L$-orders on the sets of factors, prefixes and suffixes of elements of $\AL$.

It is shown in~\cite[Theorem~5.1]{ACZ14} that an interpretation of an $\omega$-term has only finitely many regular $\J$-classes above it. Here, recall that a $\J$-class is \emph{regular} if it contains an idempotent element. For an element $w$ in a monoid $M$, we write \[\Reg(w) := \{ J_e \ | \ w \leq_{\J} e, e\ \text{idempotent}\}\] for the set of regular $\J$-classes above $w$ where, as usual, $J_e$ denotes the $\J$-class of $e$.  The original proof relies on McCammond normal forms, whereas our proof is fairly elementary from Theorem~\ref{thm:substfactors}.  Our proof also provides a simple algorithm to compute the regular $\J$-classes above the interpretation of an $\omega$-term from the $\omega$-term.

\begin{theorem}\label{thm:finJ}
Let $u, w_1, w_2 \in \AL$. Then
\begin{enumerate}
\item $\Reg(w_1 w_2) = \Reg(w_1) \cup \Reg(w_2)$.
\item $\Reg(u^\omega) = \Reg(u) \cup \{J_{u^\omega}\}$.
\end{enumerate}
In particular, the interpretation of any $\omega$-term has finitely many regular $\J$-classes above it.
\end{theorem}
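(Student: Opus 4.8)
The plan is to establish the two structural identities (1) and (2) and then deduce the finiteness claim by a structural induction on $\omega$-terms; identity (1) will also be used in the proof of (2). Throughout I recall that $w \leq_{\J} e$ means $e$ is a factor of $w$, so that $\Reg(w)$ collects the $\J$-classes of idempotents lying $\J$-above $w$, and that $\emp$ is a factor of every element.

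For (1), the inclusion $\Reg(w_1)\cup\Reg(w_2)\subseteq \Reg(w_1w_2)$ is immediate: since $w_1$ and $w_2$ are both factors of $w_1w_2$ we have $w_1w_2\leq_{\J}w_1$ and $w_1w_2\leq_{\J}w_2$, and by transitivity any idempotent factor of $w_1$ or of $w_2$ is a factor of $w_1w_2$. For the reverse inclusion, suppose $e$ is idempotent with $w_1w_2\leq_{\J}e$. Then $w_1w_2$ lies in the ideal $\AL e\AL$, which is prime by Lemma~\ref{lem:idpotprime}; hence its complement is a submonoid, so $w_1\in\AL e\AL$ or $w_2\in\AL e\AL$, that is $w_1\leq_{\J}e$ or $w_2\leq_{\J}e$, giving $J_e\in\Reg(w_1)\cup\Reg(w_2)$.

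For (2), the inclusion $\supseteq$ is easy: $u^\omega$ is idempotent (so $J_{u^\omega}\in\Reg(u^\omega)$), and the pro-aperiodic identity $u^\omega=u^\omega u$ shows $u$ is a factor of $u^\omega$, whence $u^\omega\leq_{\J}u$ and $\Reg(u)\subseteq\Reg(u^\omega)$. For $\subseteq$, take an idempotent $e$ with $u^\omega\leq_{\J}e$ and $J_e\neq J_{u^\omega}$; I must show $u\leq_{\J}e$. I would write $u^\omega=f(c^\omega)$, where $f\colon\Lambda(\{c\})\to\AL$ is the substitution sending the single letter $c$ to $u$, and apply Theorem~\ref{thm:substfactors}(1) to $f(c^\omega)\leq_{\J}e$. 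Over the one-letter alphabet $\{c\}$ the three alternatives collapse: alternative (a) gives $e=\emp$, and $u\leq_{\J}\emp$ holds trivially; alternative (b) gives $u=f(c)\leq_{\J}e$ directly; alternative (c) produces $x,y\in\AL$ with $x$ a suffix and $y$ a prefix of $u$, together with some $z\in\Lambda(\{c\})$, such that $e=x\,f(z)\,y$. Since $\Lambda(\{c\})\cong\mathbb{N}\cup\{\omega\}$, either $z=c^m$ or $z=c^\omega$. If $z=c^\omega$ then $e=x\,u^\omega\,y$ exhibits $u^\omega$ as a factor of $e$, so $e\leq_{\J}u^\omega$ and hence $J_e=J_{u^\omega}$, contradicting our assumption, so this subcase does not arise. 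If $z=c^m$, writing $u=sx$ and $u=yt$ from the suffix/prefix relations yields $u^{m+2}=(sx)\,u^m\,(yt)=s\,(x\,u^m\,y)\,t=s\,e\,t$, so $e$ is a factor of $u^{m+2}$; iterating (1) gives $\Reg(u^{m+2})=\Reg(u)$, whence $u\leq_{\J}e$, as required.

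The finiteness statement then follows by induction on the structure of an $\omega$-term $t$. A finite word has only finitely many factors (a factor of a finite word is again a finite word, and a finite word has finitely many contiguous subwords), so $\Reg$ of a finite word is finite; the inductive steps are immediate from (1) and (2), since $|\Reg(w_1w_2)|\leq|\Reg(w_1)|+|\Reg(w_2)|$ and $|\Reg(u^\omega)|\leq|\Reg(u)|+1$. I expect the main obstacle to be the case analysis in alternative (c) of Theorem~\ref{thm:substfactors}: one must track the directions of $\leq_{\L}$ and $\leq_{\R}$ carefully, recognize that the finite-power subcase is exactly what (1) is designed to resolve via the sandwiching $u^{m+2}=s\,e\,t$, and verify that the $\omega$-power contributes precisely the single new regular class $J_{u^\omega}$.
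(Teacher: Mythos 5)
Your proof is correct and follows essentially the same route as the paper: part (1) via primality of idempotent ideals (Lemma~\ref{lem:idpotprime}), part (2) via Theorem~\ref{thm:substfactors} applied to the one-letter substitution $c \mapsto u$ with the case split on $z$ finite versus $\omega$, and the finiteness claim by structural induction on the $\omega$-term. If anything, your write-up is slightly more explicit than the paper's, which omits the degenerate alternatives (a) and (b) of Theorem~\ref{thm:substfactors} and compresses the sandwiching step $u^{m+2}=s\,e\,t$ before invoking primality.
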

\begin{proof}
(1) The right-to-left inclusion is clear since $w_i \geq_{\J} w_1w_2$ for $i = \{1,2\}$. Conversely, if $w_1w_2 \leq_{\J} e$ for some idempotent $e$, then $w_1 \leq_{\J} e$ or $w_2 \leq_{\J} e$ by Lemma~\ref{lem:idpotprime}.

(2) The right-to-left inclusion is clear since $u \geq_{\J} u^\omega$. Conversely, suppose that $e$ is an idempotent such that $u^\omega \leq_{\J} e$. By Theorem~\ref{thm:substfactors}, there exist $z \in \mathbb{N} \cup \{\omega\}$, $x \geq_{\L}u$ and $y \geq_{\R} u$ such that $e = xu^zy$. If $z \in \mathbb{N}$, then $e \geq_{\J} u^z$ implies that $e \geq_{\J} u$, by Lemma~\ref{lem:idpotprime}, so $J_e \in \Reg(u)$. If $z = \omega$, then $e\leq_{\J} u^{\omega}\leq_{\J} e$ and so $J_e=J_{u^{\omega}}$.

The final statement follows from (1) and (2) by induction on the complexity of the $\omega$-term defining $w$, noticing that, for a finite word $w$, $\Reg(w) = \{J_{\emp}\}$.
\end{proof}

We recall the notion of \emph{well-quasi-order} (wqo), which is a quasi-order satisfying the equivalent properties in the following proposition.
\begin{proposition}\label{prop:wqo}
Let $(Q,\preceq)$ be  a quasi-order. The following are equivalent:
\begin{enumerate}
\item there are no infinite antichains or infinite descending chains in $Q$;
\item for every infinite sequence $(q_i)_{i \in \omega}$ in $Q$, there exist $i < j$ such that $q_i \preceq q_j$;
\item every infinite sequence in $Q$ contains a non-decreasing subsequence.
\end{enumerate}
\end{proposition}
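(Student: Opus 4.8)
The plan is to prove the cycle of implications (3) $\Rightarrow$ (2) $\Rightarrow$ (1) $\Rightarrow$ (3), with the first two implications being routine unwinding of definitions and essentially all of the content concentrated in the last one. Throughout I read ``infinite descending chain'' as an infinite \emph{strictly} descending chain $q_0 \succ q_1 \succ \cdots$, i.e.\ $q_{i+1} \preceq q_i$ and $q_i \not\preceq q_{i+1}$ for each $i$; with the non-strict reading the statement would be false, as a constant sequence would qualify.

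First I would dispose of the easy directions. For (3) $\Rightarrow$ (2), given an infinite sequence $(q_i)_{i\in\omega}$, a non-decreasing subsequence $q_{i_0} \preceq q_{i_1} \preceq \cdots$ immediately exhibits indices $i_0 < i_1$ with $q_{i_0} \preceq q_{i_1}$. For (2) $\Rightarrow$ (1) I would argue by contraposition: if $Q$ contains an infinite antichain, enumerate it as $(q_i)_{i\in\omega}$, so that $q_i$ and $q_j$ are incomparable for all $i<j$ and no pair witnesses (2); if $Q$ contains an infinite strictly descending chain $q_0 \succ q_1 \succ \cdots$, then for $i<j$ transitivity gives $q_j \preceq q_i$, while $q_i \preceq q_j$ would force $q_i \preceq q_{i+1}$, contradicting strictness, so again no pair witnesses (2).

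The heart of the argument is (1) $\Rightarrow$ (3), which I would establish with the infinite Ramsey theorem. Given an infinite sequence $(q_i)_{i\in\omega}$, color each pair $\{i,j\}$ with $i<j$ by one of three colors according to whether (a) $q_i \preceq q_j$, (b) $q_j \prec q_i$ (that is, $q_j \preceq q_i$ but $q_i \not\preceq q_j$), or (c) $q_i$ and $q_j$ are incomparable. These three cases are mutually exclusive and exhaustive, so Ramsey's theorem yields an infinite index set $I = \{i_0 < i_1 < \cdots\}$ all of whose pairs receive a single color. Color (a) gives exactly a non-decreasing subsequence $q_{i_0} \preceq q_{i_1} \preceq \cdots$, as desired. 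Color (b) gives an infinite strictly descending chain $q_{i_0} \succ q_{i_1} \succ \cdots$, and color (c) gives an infinite antichain; both are forbidden by hypothesis (1). Hence the monochromatic set must have color (a), which proves (3).

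The main obstacle is precisely this last implication: the other two directions only require rephrasing definitions, whereas passing from the \emph{local} absence of bad configurations (antichains and descending chains) to the \emph{global} existence of a non-decreasing subsequence genuinely requires an infinitary combinatorial principle and a use of (dependent) choice. I would use Ramsey's theorem as the cleanest route, since it makes the three-way case split transparent; alternatively one can argue directly by first extracting a subsequence along which every term is $\preceq$-related to some later term, but the Ramsey formulation keeps the bookkeeping minimal.
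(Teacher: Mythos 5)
Your proof is correct and complete. Note that the paper does not actually prove this proposition; it simply cites Rosenstein's book (\S 10.3 of \emph{Linear orderings}), so there is no in-paper argument to compare against. Your argument --- the two routine implications $(3)\Rightarrow(2)\Rightarrow(1)$ plus the Ramsey-theoretic three-colouring for $(1)\Rightarrow(3)$ --- is exactly the standard textbook proof found in such references, and all the details check out: the three colours are indeed mutually exclusive and exhaustive, a monochromatic set of colour (b) yields a strictly descending chain, and a monochromatic set of colour (c) yields an infinite antichain (pairwise incomparable elements are automatically distinct by reflexivity). Your opening remark that ``infinite descending chain'' must be read as \emph{strictly} descending is also the right reading and worth making explicit.
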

\begin{proof}
See, e.g., \cite[{\S}10.3]{Ros1982}.
\end{proof}

We will call an element $u \in \AL$ \emph{well-factor-ordered} (wfo) if the reverse $\J$-order, $\geq_{\J}$, is a well-quasi-order on the set ${\uparrow}_{\J}u$ of factors of $u$. Similarly, we call $u \in \AL$ \emph{well-prefix-ordered} (wpo) if $\geq_{\R}$ is a wqo on ${\uparrow}_{\R}u$ and \emph{well-suffix-ordered} (wso) if $\geq_{\L}$ is a wqo on ${\uparrow}_{\L}u$.
Recall from Lemma~\ref{lem:productpseudofinite} that $\FA(A)$ is upward closed in $\AL$ with respect to the $\J$-, $\R$- and $\L$-orders, so that the following results apply immediately to $\FA(A)$, as well.

Almeida showed~\cite[Theorem~2.6]{Almeida05} that $\FA(A)\setminus A^*$ contains maximal elements with respect to the $\J$-ordering and that they correspond in a sense that can be made precise to uniformly recurrent words. Moreover, it is shown in~\cite[Proposition~3.2]{ACZ14} that every element of $\FA(A)\setminus A^*$ is $\J$-below a maximal element. Notice that $\omega$ is the unique maximal element of $\FA(1)\setminus \mathbb N$.

\begin{proposition}
Let $w$ be a maximal element of $\AL\setminus A^*$ in the $\J$-order (e.g., a $\J$-maximal element of $\FA(A)\setminus A^*$).  Then $w$ is well-factor-ordered.
\end{proposition}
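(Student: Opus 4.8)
The plan is to verify, via Proposition~\ref{prop:wqo}(2), that every infinite sequence $(v_i)_{i\in\omega}$ of factors of $w$ contains indices $i<j$ with $v_i\geq_{\J}v_j$ (equivalently, with $v_i$ a factor of $v_j$). The first observation is that $\J$-maximality pins down the shape of the factor set: if $v\geq_{\J}w$ and $v\in\AL\setminus A^*$, then maximality gives $v\mathrel{\J}w$, so the factors of $w$ are partitioned into the finite words among them and a single infinite $\J$-class, namely that of $w$. Consequently, if infinitely many $v_i$ are infinite, then picking any two such indices $i<j$ gives $v_i\mathrel{\J}w\mathrel{\J}v_j$, whence $v_i\geq_{\J}v_j$ and we are done. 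It therefore remains to treat the case in which all but finitely many $v_i$ are finite factors of $w$, i.e.\ to show that the finite factors of $w$ are well-quasi-ordered by the factor order.

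The crux is a recurrence lemma: \emph{every finite factor $u$ of $w$ is a factor of every sufficiently long finite factor of $w$}. I would prove this by contradiction using compactness of $\AL$. Three topological facts feed in: first, the set ${\uparrow}_{\J}w=\{v\in\AL : v\geq_{\J}w\}$ of factors of $w$ is closed, being a section of the closed relation $\leq_{\J}$ (closedness of $\leq_{\J}$ follows as for $\leq_{\L}$ in \cite[Proposition~3.1.9]{RS2009}, since multiplication is continuous and $\AL$ is compact); second, for a fixed finite word $u$ the set of words containing $u$ as a contiguous factor is defined by a first-order (indeed star-free) sentence $\phi_u$, so $\widehat{\phi_u}$ is clopen and one checks $\widehat{\phi_u}=\{v : v\leq_{\J}u\}$; third, for each $m$ the property of having at least $m$ positions is expressed by a depth-$m$ sentence $\lambda_m$, so each $\widehat{\lambda_m}$ is clopen and $\bigcap_m\widehat{\lambda_m}=\AL\setminus A^*$. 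Now suppose, for contradiction, that arbitrarily long finite factors of $w$ avoid $u$. Then the closed set $K:={\uparrow}_{\J}w\setminus\widehat{\phi_u}$ meets $\widehat{\lambda_m}$ for every $m$; since the $\widehat{\lambda_m}$ are nested, the family $\{K\cap\widehat{\lambda_m}\}_m$ has the finite intersection property, so by compactness there is $v^\ast\in K\cap\bigcap_m\widehat{\lambda_m}$. Such a $v^\ast$ is an infinite factor of $w$ that avoids $u$. But by maximality $v^\ast\mathrel{\J}w$, so $v^\ast\leq_{\J}w\leq_{\J}u$, i.e.\ $u$ is a factor of $v^\ast$, contradicting $v^\ast\in K$.

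With the recurrence lemma in hand, the finite case becomes routine. Given an infinite sequence of finite factors $(v_i)$, if the lengths $|v_i|$ are bounded then there are only finitely many such words, so two of them coincide and Proposition~\ref{prop:wqo}(2) holds trivially; if the lengths are unbounded, I apply the recurrence lemma to $u:=v_1$ to obtain a length bound $L$, choose $j>1$ with $|v_j|>L$, and conclude that $v_1$ is a factor of $v_j$, so that $v_1\geq_{\J}v_j$. Combining this with the infinite case from the first paragraph establishes Proposition~\ref{prop:wqo}(2) for the full sequence, hence that $\geq_{\J}$ is a well-quasi-order on ${\uparrow}_{\J}w$; that is, $w$ is well-factor-ordered. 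Since $\FA(A)$ is upward closed in $\AL$ for $\leq_{\J}$ by Lemma~\ref{lem:productpseudofinite}, the statement for a $\J$-maximal element of $\FA(A)\setminus A^*$ follows as a special case.

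I expect the main obstacle to be the recurrence lemma, and specifically the compactness construction of an infinite factor avoiding $u$: the key is to recognize that ``contains $u$ as a factor'' and ``has at least $m$ positions'' are first-order, hence clopen, conditions, so that the closedness of ${\uparrow}_{\J}w$ can be exploited through the finite intersection property. Everything else is bookkeeping with the factor order and with the dichotomy between finitely many and infinitely many infinite terms. Note that this argument is purely topological and algebraic and does not invoke the saturated-word machinery; the recurrence lemma is the algebraic counterpart of minimality (uniform recurrence) of the subshift associated with $w$, which is the source of Almeida's correspondence, but the compactness proof keeps the development self-contained.
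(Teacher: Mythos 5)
Your proof is correct and runs on the same engine as the paper's: both arguments use compactness of $\AL$ to extract an infinite factor of $w$ from arbitrarily long finite factors, invoke $\J$-maximality to force that limit into the $\J$-class of $w$, and then transfer containment of a fixed finite word $u$ back to the finite approximants via the clopen truth set of the first-order sentence ``$u$ occurs as a contiguous factor''. The only differences are presentational: you package the key step as a uniform recurrence lemma proved via the finite intersection property, where the paper instead takes a convergent subsequence of the given sequence of factors, and you verify $\widehat{\phi_u}=\{v\in\AL \ | \ v\leq_{\J}u\}$ directly, whereas the paper appeals to Proposition~\ref{prop:findfact} and saturation for the same point.
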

\begin{proof}
Let $w_1,w_2,\ldots$ be an infinite sequence of factors of $w$.  We must show that $w_i\geq_{\J} w_j$ for some $i<j$.  If  $w_i \mathrel{\J} w_j$ for some $i<j$, then we are done so we may assume that the elements of the sequence are in distinct $\J$-classes.  By maximality of $w$, at most one element of the sequence does not belong to $A^*$ and so, by passing to a subsequence, we may assume that $w_1,w_2,\ldots$ consists of finite words, necessarily distinct since they are in distinct $\J$-classes.  By passing to a further subsequence, we may assume that $(w_n)$ converges in $\AL$ to an element $v$ (by compactness of $\AL$). Moreover, since $A^*$ is discrete in $\AL$, and the sequence $w_1,w_2, \ldots$ is a sequence of distinct finite words, we must have that $v\notin A^*$.  As $\geq_{\J}$ is a closed relation on any compact monoid \cite[Proposition~3.1.9]{RS2009}, $v\geq_{\J} w$ and hence $v\mathrel{\J} w$ by maximality. Therefore, $w_1$ is a factor of $v$.  If $w_1=\emp$, then $w_1\geq_{\J} w_2$ and so we may assume that $w_1\neq \emp$.

Let $\phi$ be the formula stating that the finite word $w_1$ is a factor of an $A$-word.  Then since $\phi$ is true for all models of $v$ (because $w_1$ is a factor of $V$ for any $\omega$-saturated model $V$ in the class $v$ by Proposition~\ref{prop:findfact}) and $w_n\to v$, we must have that, for $n$ large enough, $w_n\models \phi$.  Thus we can find $N>1$ with $w_1\geq_{\J} w_N$.  This completes the proof that $w$ is well-factor-ordered.
\end{proof}

Our next aim is to prove (Theorem~\ref{thm:wfosubst}) that a substitution of well-factor-ordered elements into a well-factor-ordered element gives a well-factor-ordered element. The following preliminary proposition will be used in the proof.
\begin{proposition}\label{prop:wfoimplieswsowpo}
Any well-factor-ordered element of $\AL$ is well-suffix-ordered and well-prefix-ordered.
\end{proposition}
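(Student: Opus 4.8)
The plan is to reduce both assertions to a single observation: on the set of prefixes (respectively suffixes) of a fixed element, the quasi-order $\geq_{\J}$ coincides with $\geq_{\R}$ (respectively $\geq_{\L}$). Combined with the elementary fact that the restriction of a well-quasi-order to a subset is again a well-quasi-order, this immediately yields the proposition.

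First I would record the inclusions ${\uparrow}_{\R}u \subseteq {\uparrow}_{\J}u$ and ${\uparrow}_{\L}u \subseteq {\uparrow}_{\J}u$: a prefix $x$ of $u$, with $u = xy$, is also a factor since $u = 1 \cdot x \cdot y$, and dually for suffixes. Moreover $\leq_{\R} \subseteq \leq_{\J}$ as relations on $\AL$, so $\geq_{\J}$ refines $\geq_{\R}$ everywhere. The real content is the reverse refinement on prefixes: if $p, q \in {\uparrow}_{\R}u$ satisfy $q \leq_{\J} p$, then Lemma~\ref{lem:stabilityetc}(6), applied with $w := u$, $x := q$, $y := p$ (so that $u \leq_{\R} q$ and $u \leq_{\R} p$), gives $q \leq_{\R} p$. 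Hence $\geq_{\J}$ and $\geq_{\R}$ are literally the same quasi-order when restricted to ${\uparrow}_{\R}u$; symmetrically, Lemma~\ref{lem:stabilityetc}(5) shows $\geq_{\J}$ and $\geq_{\L}$ agree on ${\uparrow}_{\L}u$.

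To finish, suppose $u$ is well-factor-ordered, so that $\geq_{\J}$ is a wqo on ${\uparrow}_{\J}u$. By condition (2) of Proposition~\ref{prop:wqo}, which is manifestly inherited by any subset (every sequence in the subset is a sequence in the whole set), $\geq_{\J}$ is a wqo on ${\uparrow}_{\R}u$. Since $\geq_{\J}$ and $\geq_{\R}$ are the same quasi-order on ${\uparrow}_{\R}u$, the order $\geq_{\R}$ is a wqo there, i.e. $u$ is well-prefix-ordered. Replacing (6) by (5) and prefixes by suffixes throughout gives that $u$ is well-suffix-ordered.

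I do not anticipate a genuine obstacle; the entire argument rests on the unambiguity of the $\L$- and $\R$-orders recorded in Lemma~\ref{lem:stabilityetc}. The only point requiring care is the direction of the inequalities: since $\leq_{\J}$, $\leq_{\R}$, $\leq_{\L}$ all point toward factors, prefixes, and suffixes, the wqo property is phrased for the reverse orders $\geq_{\J}$, $\geq_{\R}$, $\geq_{\L}$, and one must match $x := q$, $y := p$ correctly when invoking the lemma so that the conclusion lands in the right direction.
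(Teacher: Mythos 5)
Your proof is correct and follows essentially the same route as the paper: both arguments use Lemma~\ref{lem:stabilityetc}(5) and (6) to show that $\geq_{\J}$ coincides with $\geq_{\L}$ on ${\uparrow}_{\L}u$ and with $\geq_{\R}$ on ${\uparrow}_{\R}u$, and then conclude via the fact that a sub-quasi-order of a wqo is a wqo. The instantiation of the lemma and the handling of the order directions are both handled correctly.
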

\begin{proof}
Let $w \in \AL$. By Lemma~\ref{lem:stabilityetc}.5, the quasi-orders $\leq_{\L}$ and $\leq_{\J}$ coincide on ${\uparrow}_{\L}{w}$. Therefore, $({\uparrow}_{\L}{w},\geq_{\L})$ is a sub-quasi-order of $({\uparrow}_{\J}{w},\geq_{\J})$, and so is $({\uparrow}_{\R}{w},\geq_{\R})$. Sub-quasi-orders of wqo's are wqo's.
\end{proof}

\begin{theorem}\label{thm:wfosubst}
Let $f \colon \BL \to \AL$ be a substitution. If $f(b)$ is well-factor-ordered for each $b \in B$, and $v \in \BL$ is well-factor-ordered, then $f(v)$ is well-factor-ordered.
\end{theorem}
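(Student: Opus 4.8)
The plan is to reduce everything to the explicit description of the factor set of a substitution provided by Corollary~\ref{cor:substfactors}. That corollary writes ${\uparrow}_{\J} f(v)$ as a finite union: the singleton $\{\emp\}$; the sets ${\uparrow}_{\J} f(b)$ for $b \in \Cont(v)$; and, for each pair $b_1, b_2 \in B$, the product set $P_{b_1,b_2} := {\uparrow}_{\L} f(b_1) \cdot f\bigl(b_1^{-1}({\uparrow}_{\J}v)b_2^{-1}\bigr) \cdot {\uparrow}_{\R} f(b_2)$. Since a finite union of subsets, each well-quasi-ordered by the restriction of one common quasi-order, is again well-quasi-ordered (immediate from Proposition~\ref{prop:wqo}(2) together with the pigeonhole principle), it suffices to show that each piece is well-quasi-ordered by $\geq_{\J}$. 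The singleton is trivial, and for $b \in \Cont(v)$ the set ${\uparrow}_{\J} f(b)$ is well-quasi-ordered because $f(b)$ is assumed to be well-factor-ordered (note $f(b) \leq_{\J} f(v)$, so ${\uparrow}_{\J} f(b) \subseteq {\uparrow}_{\J} f(v)$ with the same order). Thus the entire content of the theorem lies in the product pieces $P_{b_1,b_2}$.

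To handle $P_{b_1,b_2}$, fix an infinite sequence $w_n = x_n\, f(z_n)\, y_n$ in it, where $x_n \in {\uparrow}_{\L} f(b_1)$, $y_n \in {\uparrow}_{\R} f(b_2)$, and $z_n \in b_1^{-1}({\uparrow}_{\J}v)b_2^{-1}$, so that $b_1 z_n b_2 \in {\uparrow}_{\J}v$. By Proposition~\ref{prop:wfoimplieswsowpo}, $f(b_1)$ is well-suffix-ordered and $f(b_2)$ is well-prefix-ordered, so $({\uparrow}_{\L} f(b_1), \geq_{\L})$ and $({\uparrow}_{\R} f(b_2), \geq_{\R})$ are well-quasi-orders, and $({\uparrow}_{\J}v, \geq_{\J})$ is a well-quasi-order since $v$ is well-factor-ordered. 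Using the subsequence form of the well-quasi-order property (Proposition~\ref{prop:wqo}(3)) three times, I would pass to a single subsequence, reindexed as $1 < 2 < \cdots$, that is simultaneously non-decreasing in all three coordinates; concretely, for every $i < j$ this yields that $x_i$ is a suffix of $x_j$, that $y_i$ is a prefix of $y_j$, and that $b_1 z_j b_2 \leq_{\J} b_1 z_i b_2$. It then remains to produce $i < j$ with $w_i \geq_{\J} w_j$, i.e. with $w_i$ a factor of $w_j$.

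This last step is the main obstacle, and it is exactly where Lemma~\ref{lem:bzc} is needed. The naive strategy of viewing multiplication $(x,c,y) \mapsto xcy$ as a monotone map out of a product of well-quasi-orders fails, because a $\geq_{\J}$-comparison in the middle coordinate inserts material on \emph{both} sides of $f(z_i)$ that cannot in general be absorbed. Instead, I would apply Lemma~\ref{lem:bzc} (over the alphabet $B$) to $b_1 z_j b_2 \leq_{\J} b_1 z_i b_2$, obtaining four cases: $z_j = z_i$; $z_j \leq_{\R} z_i b_2$; $z_j \leq_{\L} b_1 z_i$; or $z_j \leq_{\J} b_1 z_i b_2$. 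In each case one substitutes the corresponding factorization of $z_j$ into $w_j = x_j f(z_j) y_j$, and then absorbs the stray letters $b_1, b_2$ using that $x_i$ is a suffix of $f(b_1)$ and $y_i$ is a prefix of $f(b_2)$, together with $x_i$ being a suffix of $x_j$ and $y_i$ a prefix of $y_j$. For instance, when $z_j \leq_{\R} z_i b_2$ one writes $z_j = z_i b_2 s$, so $f(z_j) = f(z_i) f(b_2) f(s)$; replacing $f(b_2) = y_i \delta'$ and $x_j = \alpha x_i$ gives $w_j = \alpha\,(x_i f(z_i) y_i)\, \delta' f(s) y_j = \alpha\, w_i\, \delta' f(s) y_j$, exhibiting $w_i$ as a factor of $w_j$. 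The other three cases are entirely analogous, with $b_1$ absorbed symmetrically via the suffix relation on $f(b_1)$. This establishes $w_i \geq_{\J} w_j$, hence that $P_{b_1,b_2}$ is well-quasi-ordered, and, by the reduction of the first paragraph, that $f(v)$ is well-factor-ordered.
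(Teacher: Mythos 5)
Your proposal is correct and follows essentially the same route as the paper's proof: after a pigeonhole/subsequence reduction to fixed letters $b_1,b_2$ and simultaneously $\geq_{\L}$-, $\geq_{\R}$-, and $\geq_{\J}$-non-decreasing coordinates (via Proposition~\ref{prop:wfoimplieswsowpo}), the heart of the argument is the same four-case application of Lemma~\ref{lem:bzc} to $b_1 z_j b_2 \leq_{\J} b_1 z_i b_2$ followed by absorbing the stray letters into $x_i$ and $y_i$. The only cosmetic difference is that you organize the decomposition through Corollary~\ref{cor:substfactors} and an explicit ``finite union of wqo's is wqo'' step, whereas the paper works directly from Theorem~\ref{thm:substfactors}.1 and builds the same case split into the subsequence extraction.
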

\begin{proof}
Suppose that $f(b)$ is wfo for each $b \in B$ and that $v \in \BL$ is wfo. Let $(w_i)_{i \in \mathbb N}$ be a sequence in ${\uparrow}_{\J}f(v)$. If $w_i = \emp$ for some $i$ then we are done because $\emp \geq_{\J} w$ for any $w$. If there exists $b \in B$ such that $f(b) \leq_{\J} w_i$ infinitely often, then we are done immediately because $f(b)$ is wfo. Now assume that $\emp$ does not occur in $(w_i)_{i \in \omega}$ and that, for each $b \in B$, $f(b) \leq_{\J} w_i$ only finitely often. Passing to a subsequence, we may assume that $f(b) \nleq_{\J} w_i$ for every $i$ and $b \in B$. For each $i$, by Theorem~\ref{thm:substfactors}.1, pick $b_i, c_i \in B$, $x_i, y_i \in \AL$ and $z_i \in \BL$ such that $f(b_i) \leq_{\L} x_i$, $f(c_i) \leq_{\R} y_i$, $v \leq_{\J} b_iz_ic_i$, and $w_i = x_i f(z_i) y_i$. Since the alphabet $B$ is finite, passing to a further subsequence, we may assume that all the $b_i$ are equal to one and the same letter $b$, and that all the $c_i$ are equal to one and the same letter $c$. By Proposition~\ref{prop:wfoimplieswsowpo}, $({\uparrow}_{\L}f(b),\geq_{\L})$ and $({\uparrow}_{\R}f(c),\geq_{\R})$ are wqo. 
In particular, passing to a further subsequence, we get that the sequence $(x_i)_{i \in \mathbb N}$ is non-decreasing in ${({\uparrow}_{\L}f(b),\geq_{\L})}$, $(y_i)_{i \in \omega}$ is non-decreasing in ${({\uparrow}_{\R}f(c),\geq_{\R})}$, and $(bz_ic)_{i \in \omega}$ is non-decreasing in ${({\uparrow}_{\J}v,\geq_{\J})}$.
Thus, in this subsequence, we have in particular that $x_1 \geq_{\L}  x_2 \geq_{\L} f(b)$, ${bz_1c \geq_{\J} bz_2c \geq_{\J} v}$ and $y_1 \geq_{\R} y_2 \geq_{\R} f(c)$. We show that $w_1 \geq_{\J} w_2$. Since $bz_2c \leq_{\J} bz_1c$, by Lemma~\ref{lem:bzc}, there are four cases:

(1) $z_1 = z_2$. In this case,
\[ w_2 = x_2 f(z_2) y_2 = x_2 f(z_1) y_2 \leq_{\J} x_1 f(z_1) y_1 = w_1,\]
using that $x_2 \leq_{\L} x_1$ and $y_2 \leq_{\R} y_1$.

(2) $z_2 \leq_{\R} z_1c$. Then also $f(z_2) \leq_{\R} f(z_1c) = f(z_1)f(c)$, and hence
\[ w_2 \leq_{\R} x_2 f(z_2) \leq_{\R} x_2 f(z_1)f(c) \leq_{\L} x_1 f(z_1)f(c) \leq_{\R} x_1 f(z_1) y_1 = w_1,\]
using that $x_2 \leq_{\L} x_1$ and $f(c) \leq_{\R} y_1$. Thus, $w_1 \leq_{\J} w_2$.

(3) $z_2 \leq_{\L} bz_1$. Analogous to case (2).

(4) $z_2 \leq_{\J} bz_1c$. Then also $f(z_2) \leq_{\J} f(bz_1c) = f(b)f(z_1)f(c)$, and hence
\[ w_2 \leq_{\J} f(z_2) \leq_{\J} f(b) f(z_1) f(c) \leq_{\J} x_1 f(z_1) y_1 = w_1, \]
using that $f(b) \leq_{\L} x_1$ and $f(c) \leq_{\R} y_1$.
\end{proof}

\begin{theorem}\label{thm:wposubst}
Let $f \colon \BL \to \AL$ be a substitution. If $f(b)$ is well-prefix-ordered for each $b \in B$, and $v \in \BL$ is well-prefix-ordered, then $f(v)$ is well-prefix-ordered.
\end{theorem}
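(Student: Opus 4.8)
The plan is to mirror the proof of Theorem~\ref{thm:wfosubst}, but to use the one-sided decomposition of Theorem~\ref{thm:substfactors}(2) and to exploit that the $\R$-order is unambiguous. Write $Q:={\uparrow}_{\R}f(v)$. Every element of $Q$ is $\geq_{\R}f(v)$, so Lemma~\ref{lem:stabilityetc}(4) shows that $Q$ is totally ordered by $\leq_{\R}$; hence, by Proposition~\ref{prop:wqo}, $(Q,\geq_{\R})$ is a wqo as soon as it contains no infinite strictly $\geq_{\R}$-descending chain. This is the step where the prefix case genuinely diverges from the factor case: because $\AL$ is not cancellative, the strategy used for $\leq_{\J}$ in Theorem~\ref{thm:wfosubst} — pass to a non-decreasing subsequence of the pieces and compare the first two terms — does not by itself produce a good pair for $\leq_{\R}$. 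So instead of exhibiting a good pair directly, I would rule out descending chains.

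Suppose for contradiction that $u_1,u_2,\dots$ is an infinite sequence of prefixes of $f(v)$, each a proper prefix of the previous and pairwise $\R$-inequivalent. Fix $\omega$-saturated words $V$ in the class $v$ and $U_b$ in the class $f(b)$ for each $b\in B$; by Theorem~\ref{thm:omegasubst} the word $W:=V[b/U_b]$ is $\omega$-saturated in the class $f(v)$. Using Proposition~\ref{prop:findfact}(2) repeatedly, realize the $u_n$ as nested prefixes, $[W\be{P_n}]_{\equiv}=u_n$, with strictly decreasing positions $P_1>P_2>\cdots$ in $|W|$. Write $P_n=(k_n,l_n)$, with $k_n\in|V|$ the block index and $l_n$ the position inside block $U_{V(k_n)}$; then $(k_n)$ is non-increasing. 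The prefix of $W$ consisting of all blocks strictly before block $k_n$ has class $f(\zeta_n)$, where $\zeta_n:=[V\be{k_n}]_{\equiv}\in{\uparrow}_{\R}v$. Since $(k_n)$ is non-increasing, $(\zeta_n)$ is $\geq_{\R}$-non-increasing, so because $v$ is well-prefix-ordered I may pass to a subsequence along which all $\zeta_n$ are $\R$-equivalent.

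There are then two cases. If $(k_n)$ is eventually constant, equal to some $k$, then eventually $u_n=f(\zeta)\,\xi_n$ with $\zeta:=[V\be{k}]_{\equiv}$ fixed and $\xi_n:=[(U_{V(k)})\be{l_n}]_{\equiv}\in{\uparrow}_{\R}f(V(k))$, while $(l_n)$ strictly decreases. As left multiplication preserves $\R$ and the $u_n$ are pairwise $\R$-inequivalent, the $\xi_n$ are pairwise $\R$-inequivalent and form an infinite strictly $\geq_{\R}$-descending chain in ${\uparrow}_{\R}f(V(k))$, contradicting that $f(V(k))$ is well-prefix-ordered. Otherwise $(k_n)$ is not eventually constant, and after passing to a subsequence $k_1>k_2>\cdots$ I would use a sandwich argument. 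Since block $k_n$ lies entirely before block $k_{n-1}$, the prefix $W\be{P_n}$ contains every block before $k_n$ and is contained in the prefix consisting of all blocks before $k_{n-1}$; hence $f(\zeta_n)\geq_{\R}u_n\geq_{\R}f(\zeta_{n-1})$. As all $\zeta_n$ are $\R$-equivalent, so are all $f(\zeta_n)$ (homomorphisms preserve $\R$), and combining $u_n\leq_{\R}f(\zeta_n)\leq_{\R}f(\zeta_{n-1})\leq_{\R}u_n$ forces $u_n\mathrel{\R}f(\zeta_n)$ for every $n$. Thus all $u_n$ are $\R$-equivalent, contradicting that their $\R$-classes are distinct.

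The hard part is exactly this last case. Non-cancellativity of $\AL$ means that knowing the block data $\zeta_n$ and the within-block data $\xi_n$ are each monotone does not allow a term-by-term comparison of $u_n$ and $u_{n+1}$, which is why the term-by-term method of Theorem~\ref{thm:wfosubst} breaks down. The idea that resolves it is to pass from comparing the $u_n$ to comparing the block-boundary prefixes $f(\zeta_n)$: well-prefix-orderedness of $v$ makes these $\R$-equivalent, and each $u_n$ is then squeezed $\leq_{\R}$-between two $\R$-equivalent boundaries, forcing it into a single $\R$-class. Finally I would record that Theorem~\ref{thm:wposubst} and its evident $\L$-dual, together with Theorem~\ref{thm:wfosubst}, give the stability under substitution of all three of the well-factor-, well-prefix-, and well-suffix-ordering properties.
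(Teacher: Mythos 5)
Your proof is correct, but it takes a genuinely different route from the paper's. The paper applies Theorem~\ref{thm:substfactors}(2) to write each prefix $w_i$ of $f(v)$ as $f(z_i)x_i$ with $f(b_i)\leq_{\R}x_i$ and $v\leq_{\R}z_ib_i$, extracts a subsequence in which $b_i$ is a constant letter $b$ and the sequences $(x_i)$ and $(z_ib)$ are non-decreasing, and then---precisely at the point where you note that non-cancellativity bites---uses equidivisibility of $\BL$ (Proposition~\ref{prop:equidiv}) together with cancellation of the final letter (Corollary~\ref{c:cancel}) to conclude $w_2\leq_{\R}w_1$ after all. So the term-by-term strategy of Theorem~\ref{thm:wfosubst} does go through for $\leq_{\R}$, at the cost of that extra equidivisibility-plus-cancellation step; your claim that it ``does not by itself'' produce a good pair is fair, but the paper patches it rather than abandoning it. Your argument replaces that patch by a structural reduction: unambiguity of the $\R$-order (Lemma~\ref{lem:stabilityetc}(4)) makes ${\uparrow}_{\R}f(v)$ a total quasi-order, so by Proposition~\ref{prop:wqo} it suffices to exclude infinite strictly descending chains; you then realize such a chain as a strictly decreasing sequence of positions in the $\omega$-saturated word $V[b/U_b]$ and either localize inside a single block (where wpo of $f(b)$ applies) or sandwich each $u_n$ between the two $\R$-equivalent block boundaries $f(\zeta_n)$ and $f(\zeta_{n-1})$ (where wpo of $v$ applies). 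All the individual steps check out: the positions $P_n$ must be strictly decreasing since the $u_n$ are pairwise $\R$-inequivalent, the block indices are non-increasing, a non-increasing sequence in a wqo is eventually constant up to equivalence, and the sandwich forces all $u_n$ into one $\R$-class. Your route avoids equidivisibility and letter cancellation entirely, at the price of re-deriving by hand, inside the saturated model, the prefix decomposition that Theorem~\ref{thm:substfactors}(2) packages abstractly; the two proofs are of comparable length, yours more combinatorial, the paper's more algebraic.
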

\begin{proof}
Let $(w_i)_{i \in \omega}$ be a sequence in ${\uparrow}_{\R}f(v)$. Again, we may assume that $w_i\neq \emp$ for all $i\in \omega$.  By Theorem~\ref{thm:substfactors}.2, for each $i \in \omega$, pick $b_i \in B$, $x_i \in \AL$, and $z_i \in \BL$ such that $f(b_i) \leq_{\R} x_i$, $v \leq_{\R} z_i b_i$, and $w_i = f(z_i)x_i$. As in the proof of Theorem~\ref{thm:wfosubst}, we may pass to a subsequence where all of the $b_i$ are equal to one and the same $b$, $(x_i)_{i \in \omega}$ is non-decreasing in $({\uparrow}f(b),\geq_{\R})$, and $(z_ib)_{i \in \omega}$ is non-decreasing in $({\uparrow}v,\geq_{\R})$. In particular, $x_1 \geq_{\R} x_2 \geq_{\R} f(b)$ and $z_1b \geq_{\R} z_2b \geq_{\R} v$. Pick $\alpha \in \BL$ such that $z_1b\alpha = z_2b$. By Proposition~\ref{prop:equidiv}, pick $\beta \in \BL$ such that either (1) $z_1b\beta = z_2$ and $\beta b = \alpha$, or (2) $z_2\beta = z_1b$ and $\beta \alpha = b$.

(1) We have $f(z_2) \leq_{\R} f(z_1)f(b) \leq_{\R} w_1$, and hence $w_2 = f(z_2)x_2 \leq_{\R} w_1$.

(2) Since $\beta\alpha = b$, we either have $\beta = \emp$ or $\beta = b$. If $\beta = \emp$, then we proceed as in (1). If $\beta = b$, then $z_2b = z_1b$, which implies $z_2 = z_1$ by Corollary~\ref{c:cancel}. Hence,
\[ w_2 = f(z_2)x_2 \leq_{\R} f(z_2)x_1 = f(z_1)x_1 = w_1.\qedhere\]
\end{proof}

By symmetry, we get the following corollary.
\begin{corollary}\label{cor:wsosubst}
Let $f \colon \BL \to \AL$ be a substitution. If $f(b)$ is well-suffix-ordered for each $b \in B$, and $v \in \BL$ is well-suffix-ordered, then $f(v)$ is well-suffix-ordered.\qed
\end{corollary}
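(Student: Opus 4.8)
The plan is to deduce this as the order-reversed dual of Theorem~\ref{thm:wposubst}, using the involution on $A$-words that reverses the underlying linear order; this involution interchanges prefixes with suffixes, hence interchanges the well-prefix-ordered and well-suffix-ordered properties.

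First I would make the duality precise. For an $A$-word $W=(|W|,<^W,(P^W_a)_{a\in A})$, let $W^{\op}$ be the $A$-word on the same underlying set, with the same letter predicates, but with the order reversed. Reversing a discrete linear order with endpoints again gives one, so $W^{\op}$ is an $A$-word, and $W\mapsto W^{\op}$ is an involution. Mapping each first-order formula $\phi$ to its \emph{dual} $\phi^{\op}$, obtained by swapping the two arguments of every atomic subformula $x<y$, gives a quantifier-depth-preserving bijection on formulas satisfying $W\models\phi \iff W^{\op}\models\phi^{\op}$. Hence $U\equiv V \iff U^{\op}\equiv V^{\op}$, so reversal descends to a well-defined involution $(-)^{\op}$ on $\AL$ (and likewise on $\BL$). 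From the definition of concatenation one has $(UV)^{\op}\cong V^{\op}U^{\op}$, so $(-)^{\op}$ is an anti-isomorphism of monoids; consequently $u\leq_{\R}v$ if and only if $u^{\op}\leq_{\L}v^{\op}$, and dually. It follows that $w\mapsto w^{\op}$ restricts to an isomorphism of quasi-orders $({\uparrow}_{\L}u,\geq_{\L})\cong({\uparrow}_{\R}u^{\op},\geq_{\R})$, so that $u$ is well-suffix-ordered if and only if $u^{\op}$ is well-prefix-ordered.

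Next I would transport the substitution across this duality. If $f$ is the substitution determined by the $A$-words $(U_b)_{b\in B}$, let $f^{\op}$ be the substitution determined by $(U_b^{\op})_{b\in B}$. Reversing the outer order and each inserted block simultaneously yields $(V[b/U_b])^{\op}\cong V^{\op}[b/U_b^{\op}]$, so $f(v)^{\op}=f^{\op}(v^{\op})$ for all $v\in\BL$; in particular $f^{\op}(b)=f(b)^{\op}$. Now suppose each $f(b)$ is well-suffix-ordered and $v$ is well-suffix-ordered. By the previous paragraph, each $f^{\op}(b)=f(b)^{\op}$ is well-prefix-ordered and $v^{\op}$ is well-prefix-ordered, so Theorem~\ref{thm:wposubst} applied to $f^{\op}$ gives that $f^{\op}(v^{\op})=f(v)^{\op}$ is well-prefix-ordered; therefore $f(v)$ is well-suffix-ordered.

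I do not expect a genuine obstacle here: the argument is a clean transport of Theorem~\ref{thm:wposubst} along the reversal duality, and the only point deserving care is the well-definedness of $(-)^{\op}$ on elementary equivalence classes, which the syntactic dual $\phi\mapsto\phi^{\op}$ supplies.
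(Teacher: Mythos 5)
Your proposal is correct and is exactly the argument the paper intends: the paper derives this corollary from Theorem~\ref{thm:wposubst} ``by symmetry,'' and your order-reversal involution $W\mapsto W^{\op}$ (with its compatibility with $\equiv$, with substitution, and with the interchange of $\leq_{\L}$ and $\leq_{\R}$) is precisely that symmetry made explicit. No gaps; you have simply written out the details the paper leaves implicit.
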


The following special case recovers~\cite[Corollary~5.6]{ACZ14} and~\cite[Theorem~7.3]{ACZ14}.

\begin{corollary}\label{cor:omegawqo}
The sets of well-factor-ordered, well-prefix-ordered, and well-suffix-ordered elements in $\AL$ are closed under concatenation and $\omega$-power. In particular, interpretations of $\omega$-terms in $\FA(A)$ are well-factor-ordered.
\end{corollary}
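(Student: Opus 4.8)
The plan is to deduce this corollary directly from the substitution closure results, Theorem~\ref{thm:wfosubst}, Theorem~\ref{thm:wposubst} and Corollary~\ref{cor:wsosubst}, by realizing concatenation and $\omega$-power as the two distinguished special cases of substitution singled out in Section~\ref{sec:subst}. The one structural observation I would use twice is that a finite word has only finitely many factors (and finitely many prefixes and suffixes), so that every finite word is automatically wfo, wpo and wso, a finite quasi-order having no infinite antichains or descending chains. First, to treat the concatenation $u_1 u_2$ of two wfo (resp.\ wpo, wso) elements $u_1, u_2 \in \AL$, I would take $B = \{0,1\}$ and the substitution $f \colon \Lambda(\{0,1\}) \to \AL$ with $f(0) = u_1$ and $f(1) = u_2$, so that $u_1 u_2 = f([01]_{\equiv})$. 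Since $[01]_{\equiv}$ is finite, and hence wfo (resp.\ wpo, wso), Theorem~\ref{thm:wfosubst} (resp.\ Theorem~\ref{thm:wposubst}, Corollary~\ref{cor:wsosubst}) immediately gives that $u_1 u_2$ is wfo (resp.\ wpo, wso).

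For the $\omega$-power of a wfo element $u$, I would take $B = \{b\}$ and the substitution $f \colon \Lambda(\{b\}) \to \AL$ with $f(b) = u$; as $f$ is a continuous homomorphism it preserves $\omega$-powers (Theorem~\ref{thm:AL}), so $u^\omega = f(b^\omega)$. The only point requiring a direct verification is that the one-letter element $b^\omega \in \Lambda(\{b\})$ is itself wfo, wpo and wso. Here I would invoke Example~\ref{exa:saturated}: $\Lambda(\{b\}) = \FA(\{b\})$ is isomorphic to $\mathbb{N} \cup \{\omega\}$, which is commutative, so the $\J$-, $\R$- and $\L$-orders all coincide. The factors of $b^\omega$ are all of $\mathbb{N} \cup \{\omega\}$, and the factor relation linearly orders them as $\emp, b, b^2, \dots, b^\omega$, that is, in order type $\omega+1$. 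As $\omega+1$ is a well-order it is in particular a well-quasi-order, so $b^\omega$ is wfo, and by commutativity also wpo and wso. Applying Theorem~\ref{thm:wfosubst} (resp.\ Theorem~\ref{thm:wposubst}, Corollary~\ref{cor:wsosubst}) with $v = b^\omega$ then yields that $u^\omega = f(b^\omega)$ is wfo (resp.\ wpo, wso).

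Finally, the ``in particular'' clause follows by a straightforward induction on the structure of an $\omega$-term: the base case is a single letter, which is a finite word and hence wfo, while the inductive steps, concatenation and $\omega$-power, preserve being wfo by the closure just established; since interpretations of $\omega$-terms lie in $\FA(A) \subseteq \AL$, the statement transfers to $\FA(A)$. I do not anticipate any serious obstacle: all the conceptual content has already been extracted in the substitution theorems, and the sole genuine computation is the one-letter base case $b^\omega$, which the explicit description of $\FA(\{b\})$ in Example~\ref{exa:saturated} renders immediate.
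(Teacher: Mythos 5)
Your proposal is correct and follows essentially the same route as the paper: concatenation and $\omega$-power are realized as substitutions over a finite word and over $b^{\omega}\in\Lambda(\{b\})$ respectively, the substitution closure theorems are applied, and the $\omega$-term statement follows by induction. The only difference is that you spell out the verification that $b^{\omega}$ is well-factor-ordered (the factor order being the well-order $\omega+1$), which the paper simply asserts in one clause.
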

\begin{proof}
The statement about concatenation is immediate from Theorems~\ref{thm:wfosubst}, \ref{thm:wposubst} and Corollary~\ref{cor:wsosubst}.
For the statement about $\omega$-power, notice that $\omega \in \FA(1)$ is well-factor-ordered (and hence also well-suffix-ordered and well-prefix-ordered). The statement about interpretations of $\omega$-terms follows by induction, because finite words are obviously well-factor-ordered.
\end{proof}

It follows that the smallest submonoid of $\FA(A)$ containing all finite words and $\J$-maximal elements that is closed under the $\omega$-power consists of well-factor-ordered elements.  Also, substituting $\omega$-terms into $\J$-maximal elements will provide new example of well-factor-ordered elements.

We call an element $u \in \AL$ \emph{factor-regular} if the set $F(u) := {\uparrow}_{\J}u \cap A^*$ of finite factors of $u$ is a regular language. We call $u$ \emph{prefix-regular} if $P(u) := {\uparrow}_{\R}u \cap A^*$ is a regular language, and \emph{suffix-regular} if $S(u) := {\uparrow}_{\L}u \cap A^*$ is a regular language.  If $u\in \AL\setminus A^*$ is prefix-regular, then so is any element of $u\AL$, and dually for suffix-regular elements. We call a substitution $f \colon \BL \to \AL$ \emph{non-erasing} if $f(b) \neq \emp$ for every $b \in B$. Notice that a non-erasing substitution $f$ sends the ideal $\BL\setminus B^*$ to the ideal $\AL\setminus A^*$.

\begin{theorem}\label{thm:regularity}
Let $f \colon \BL \to \AL$ be a non-erasing substitution and $v \in \BL$.
\begin{enumerate}
\item If $f(b)$ is prefix-regular, for each $b \in B$, and $v$ is prefix-regular, then $f(v)$ is prefix-regular.
\item If $f(b)$ is suffix-regular, for each $b \in B$, and $v$ is suffix-regular, then $f(v)$ is suffix-regular.
\item If $f(b)$ is factor-regular, prefix-regular and suffix-regular, for each $b \in B$, and $v$ is factor-regular, then $f(v)$ is factor-regular.
\end{enumerate}
\end{theorem}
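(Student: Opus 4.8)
The plan is to prove all three clauses by intersecting the descriptions of the order-filters ${\uparrow}_{\R}f(v)$, ${\uparrow}_{\L}f(v)$, ${\uparrow}_{\J}f(v)$ given in Corollary~\ref{cor:substfactors} with the finite words $A^*$, and then recognizing each resulting piece as built from regular languages by regularity-preserving operations. I will carry out clause~(1) in detail; clause~(2) is entirely symmetric, and clause~(3) combines the two.

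First I would isolate a key observation: \emph{every prefix, suffix, and factor of a finite word is again a finite word}. Indeed, if $w \in A^*$ and $u$ is a factor (resp.\ prefix, suffix) of $w$, then applying Proposition~\ref{prop:findfact} to the finite---hence $\omega$-saturated---word $w$ realizes $u$ as $[w[i,j]]_{\equiv}$ (resp.\ $[w\be{i}]_{\equiv}$, $[w\af{i}]_{\equiv}$) for positions of the finite word $w$, so $u\in A^*$. Combined with the fact, noted just before the statement, that a non-erasing substitution maps $\BL\setminus B^*$ into $\AL\setminus A^*$, this pins down exactly which terms of Corollary~\ref{cor:substfactors} contribute finite words: setting $B_{\mathrm{fin}} := \{b \in B \mid f(b) \in A^*\}$, a value $f(z)$ with $z \in \BL$ lies in $A^*$ if and only if $z \in B_{\mathrm{fin}}^*$. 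For $f(z)\in A^*$ forces $z \in B^*$ by the ideal-to-ideal property, and then $f(z)=\prod_i f(z_i)$ is finite exactly when every letter of $z$ lies in $B_{\mathrm{fin}}$.

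For clause~(1), intersecting ${\uparrow}_{\R}f(v) = \{\emp\}\cup\bigcup_{b} f(({\uparrow}_{\R}v)\,b^{-1})\cdot {\uparrow}_{\R}f(b)$ with $A^*$ and using the key observation, a non-empty prefix $f(z)x$ of $f(v)$ is finite precisely when both $f(z)$ and $x$ are finite, since $f(z)$ is a prefix and $x$ a suffix of the finite word $f(z)x$. Reading off that a finite $z$ with $zb$ a prefix of $v$ is a finite word with $zb \in P(v)$, this gives
\[ P(f(v)) = \{\emp\} \cup \bigcup_{b \in B} f\big((P(v)\,b^{-1}) \cap B_{\mathrm{fin}}^*\big)\cdot P(f(b)). \]
Now $P(v)$ is regular since $v$ is prefix-regular; right-quotient by a letter and intersection with the regular language $B_{\mathrm{fin}}^*$ preserve regularity; the restriction $f\colon B_{\mathrm{fin}}^* \to A^*$ is a monoid homomorphism, so it carries regular languages to regular languages; each $P(f(b))$ is regular since $f(b)$ is prefix-regular; and concatenation and finite union preserve regularity. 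Hence $P(f(v))$ is regular. Clause~(2) follows by the left--right symmetry of all ingredients. For clause~(3) I intersect the $\J$-formula with $A^*$ the same way: the term $\bigcup_{b \in \Cont(v)}{\uparrow}_{\J}f(b)$ contributes the finite union $\bigcup_{b\in\Cont(v)}F(f(b))$ of regular languages, and in the double union a finite factor has the shape $\alpha\,f(z)\,\beta$ with $\alpha$ a suffix of $f(b_1)$, $\beta$ a prefix of $f(b_2)$, and $b_1zb_2$ a factor of $v$; it is finite exactly when $\alpha,f(z),\beta$ all are, yielding $\alpha\in S(f(b_1))$, $\beta\in P(f(b_2))$, and $z \in (b_1^{-1}F(v)b_2^{-1}) \cap B_{\mathrm{fin}}^*$, all regular by the factor-, prefix-, and suffix-regularity hypotheses. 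Thus $F(f(v))$ is a finite union of products $S(f(b_1))\cdot f(\,\cdot\,)\cdot P(f(b_2))$ of regular languages, hence regular.

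The bookkeeping with closure properties of regular languages is routine. The one point requiring care---and the step I expect to be the main obstacle if stated carelessly---is the passage from the order-filters in $\AL$ to their finite parts: one must verify both that prefixes, suffixes, and factors of a finite word are themselves finite (so a product lands in $A^*$ only when each factor does) and that $f(z)$ is finite exactly for $z \in B_{\mathrm{fin}}^*$, for which the non-erasing hypothesis is essential.
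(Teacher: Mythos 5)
Your proposal is correct and follows essentially the same route as the paper: both intersect the formulas of Corollary~\ref{cor:substfactors} with $A^*$, introduce the subalphabet $\{b \in B \mid f(b) \in A^*\}$ (the paper's $C$, your $B_{\mathrm{fin}}$), and conclude via closure of regular languages under union, product, quotients, intersection, and homomorphic image. The only difference is that you make explicit the bookkeeping the paper calls ``straightforward to verify,'' namely that factors of finite words are finite and that $f(z) \in A^*$ iff $z \in B_{\mathrm{fin}}^*$, which is where the non-erasing hypothesis enters.
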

\begin{proof}
Put $C=\{b\in B^*\mid f(b)\in A^*\}$.

(1) Suppose $f(b)$ is prefix-regular for each $b \in B$, and $v$ is prefix-regular. Using Corollary~\ref{cor:substfactors} it is straightforward to verify that
\[P(f(v)) = \bigcup_{b\in B\cup \{\varepsilon\}} f(C^*\cap P(v)b^{-1})P(f(b)).\]
As the regular languages are closed under product, intersection, right quotients and images, we conclude that $P(f(v))$ is regular.

(2) Follows from (1) by symmetry.

(3) Using Corollary~\ref{cor:substfactors} it is straightforward to verify that
\[F(f(v)) = \bigcup_{b\in B\cap F(v)} F(f(b))\cup \bigcup_{b_1,b_2\in B\cup \{\varepsilon\}}S(f(b_1))f(C^*\cap b_1^{-1}F(w)b_2^{-1})P(f(b_1))\] and so the desired result follows from closure of regular languages under boolean operations, product, left and right quotients and homomorphic image.
\end{proof}

For a finite alphabet $A$ and $b \not\in A$, let $f \colon \Lambda(A \cup \{b\}) \to \AL$ be the substitution erasing $b$ and fixing $A$. Note that if $v\in \AL$ is not prefix-regular, then $b^{\omega}v$ is prefix-regular, but $f(b^{\omega}v)=v$ is not.  Thus the assumption in Theorem~\ref{thm:regularity} that $f$ is non-erasing is necessary.  The next corollary recovers~\cite[Corollary~7.6]{ACZ14}.

\begin{corollary}
Interpretations of $\omega$-terms are prefix-, suffix- and factor-regular.
\end{corollary}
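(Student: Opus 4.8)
The plan is to induct on the structure of an $\omega$-term $t$, proving simultaneously that $\sem{t}_{\FA(A)}$ is prefix-regular, \emph{and} suffix-regular, \emph{and} factor-regular. Carrying all three properties at once is essential, since the factor-regularity clause of Theorem~\ref{thm:regularity} requires each substituted value to be regular in all three senses. For the base case, if $t$ represents a finite word $w$, then $P(w)$, $S(w)$ and $F(w)$ are finite sets of finite words, hence regular, so $w$ is prefix-, suffix- and factor-regular.

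For the concatenation step $t = t_1 \cdot t_2$, I would first dispose of degenerate cases: if $\sem{t_1}_{\FA(A)} = \emp$ then $\sem{t}_{\FA(A)} = \sem{t_2}_{\FA(A)}$ and the claim is the induction hypothesis for $t_2$ (symmetrically if $\sem{t_2}_{\FA(A)} = \emp$), while if both are empty then $\sem{t}_{\FA(A)} = \emp$ is finite. Otherwise I regard concatenation, as in Section~\ref{sec:subst}, as the substitution $f \colon \Lambda(\{0,1\}) \to \AL$ with $f(0) = \sem{t_1}_{\FA(A)}$ and $f(1) = \sem{t_2}_{\FA(A)}$, so that $\sem{t}_{\FA(A)} = f([01]_{\equiv})$. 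Since both values are non-empty, $f$ is non-erasing; the finite word $01$ is trivially prefix-, suffix- and factor-regular, and each $f(i)$ is regular in all three senses by induction. Hence all three parts of Theorem~\ref{thm:regularity} apply and give the conclusion for $\sem{t}_{\FA(A)}$.

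For the $\omega$-power step $t = s^\omega$, if $\sem{s}_{\FA(A)} = \emp$ then $\sem{t}_{\FA(A)} = \emp$ is finite; otherwise I regard $\omega$-power as the substitution $f \colon \Lambda(\{b\}) \to \AL$ with $f(b) = \sem{s}_{\FA(A)}$, so that $\sem{t}_{\FA(A)} = f(b^\omega)$, where $b^\omega$ denotes the $\omega$-power in $\Lambda(\{b\}) = \mathbb{N} \cup \{\omega\}$. This $f$ is non-erasing, and the key elementary observation is that $b^\omega$ is prefix-, suffix- and factor-regular, since $P(b^\omega) = S(b^\omega) = F(b^\omega) = b^*$. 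Together with the induction hypothesis for $\sem{s}_{\FA(A)}$, all three parts of Theorem~\ref{thm:regularity} again apply.

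The only real subtlety, and the step I would be most careful about, is matching the hypotheses of Theorem~\ref{thm:regularity} exactly: that theorem is stated for \emph{non-erasing} substitutions, so the empty intermediate values must be handled by hand (reducing to the other factor, or to a finite word) rather than routed through the theorem. This is not merely a cosmetic restriction — the remark following Theorem~\ref{thm:regularity} exhibits $v$ with $f(b^\omega v) = v$ where an erasing substitution destroys prefix-regularity — so the case analysis on empty values is genuinely doing work, not padding.
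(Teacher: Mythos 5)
Your proof is correct and is essentially the argument the paper intends (the corollary is stated without proof, but the parallel Corollary~\ref{cor:omegawqo} makes the intended structure clear): induction on the $\omega$-term, viewing concatenation and $\omega$-power as non-erasing substitutions into the finite word $01$ and into $b^\omega$ respectively, with $P(b^\omega)=S(b^\omega)=F(b^\omega)=b^*$, and carrying all three regularity properties simultaneously so that part (3) of Theorem~\ref{thm:regularity} applies. Your explicit handling of the degenerate empty-word cases is a small extra care the paper glosses over, not a different route.
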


\begin{remark}
The minimal ideal $I$ of $\FA(A)$ consists of those elements containing every finite word as a factor. In particular, every element of $I$ is factor-regular.  Thus if we substitute $\omega$-terms over $B$ into an element of the minimal ideal of $\FA(A)$, then we obtain factor-regular elements of $\FA(B)$.  It is not the case that every element of the minimal ideal of $\FA(A)$ is prefix-regular or suffix-regular.  In fact, it is easy to see using the pumping lemma that $w\in \AL\setminus A^*$ is prefix-regular if and only if $w=uv^{\omega}z$ with $u,v$ finite and $v\neq \emp$.  A dual description holds for suffix-regular elements.
\end{remark}

\section*{Conclusion and further work}
In this paper we gave a new approach to the free pro-aperiodic monoid by viewing its elements as elementary equivalence classes of pseudofinite words. This view led us to consider $\FA(A)$ as a topologically closed submonoid of the larger monoid $\AL$ consisting of elementary equivalence classes of arbitrary $A$-words. 
The model-theoretic fact that each such class contains an $\omega$-saturated model enabled us to analyze factors in $\AL$ combinatorially. Thus, we substantiate the claim made in \cite{HRS2010} that one may ``transfer arguments from Combinatorics on Words to the profinite context'': our approach using saturated models makes this idea precise.

The newly identified pro-aperiodic monoid $\AL$ poses several interesting questions for future work. In particular, it would be interesting to study the algebraic structure of $\AL$ in more detail. Here, connections with the work of Carton, Colcombet and Puppis on algebras for words over countable linear orderings \cite{CCP2011} are to be expected.

We also plan to explore in future work how this method might be extended to other profinite monoids. In particular, one could try to analyze the absolutely free profinite monoid in this way, but this would require replacing first-order model theory by monadic second-order model theory. In this direction, we foresee connections to Shelah's seminal work \cite{Shelah1975}.

In a different direction, we hope that our approach could be useful for more easily analyzing aperiodic pointlike sets and related notions, in particular because a logical approach has recently proved useful for deciding problems in the first-order quantifier alternation hierarchy \cite{PlaZei2014}.

\subsection*{Historical remark}
The second-named author first began to think about using model theory to study free pro-aperiodic monoids in late 2008 and suggested to his PhD student, David Gains, at Carleton University to work on axiomatizing the theory of finite words, describing the multiplication on pseudofinite words and proving equidivisibility from this viewpoint.  Unfortunately, the project did not have a chance to go far before the second-named author left Carleton.  The authors recommenced the project in January 2016 and proved the results presented here.  
As we began to give talks on this work, we were informed by several researchers that they had also been thinking for some time about applying model theory to studying free pro-aperiodic monoids,  and had some unpublished results that could be related to ours. In particular, the preprint \cite{ACCZ2017} was published on arXiv in February 2017, after we published our first paper on the topic on arXiv in September 2016, which later appeared in the conference STACS 2017 \cite{GS2016STACS}. The work in \cite{ACCZ2017} is related to, but different from, our work here, one important difference being our use of saturated models.

\subsection*{Acknowledgements} The first-named author was supported by the European Union's Horizon 2020 research and innovation programme under the Marie Sklodowska-Curie grant \#655941; the second-named author was supported by Simons Foundation \#245268, United States - Israel Binational Science Foundation \#2012080 and NSA MSP \#H98230-16-1-0047.

\appendix

\section{Model theory}\label{sec:modeltheory}
In this section, we provide justification for the model-theoretic terminology and results that were used throughout the paper.

\subsection{Relativizing and parameters}
A key fact that is specific to words is that formulas can be relativized to intervals and rays, in the following sense (cf., e.g., \cite[Lem.~VI.1.3]{Str1994} and \cite[Def.~13.27]{Ros1982}.)
\begin{lemma}\label{lem:relativize}
For any formula $\phi(\x)$, there exist formulas $\phi^{(y,z)}(\x,y,z)$, $\phi^{{<}y}(\x,y)$ and $\phi^{{>}y}(\x,y)$ such that, for any $A$-word $W$ and $i, j \in |W|$,
\begin{itemize}[leftmargin=*]
\item $W, \x^W i j \models \phi^{(y,z)}$ if and only if  $W(i,j), \x^W \models \phi$, for any $\x^W$ in $|W(i,j)|$.
\item $W, \x^W i \models \phi^{{<}y}$ if and only if  $W\be{i},\x^W \models \phi$, for any $\x^W$ in $|W\be{i}|$.
\item $W, \x^W i \models \phi^{{>}y}$ if and only if  $W\af{i},\x^W \models \phi$, for any $\x^W$ in $|W\af{i}|$.
\end{itemize}
\end{lemma}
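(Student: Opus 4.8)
The plan is to define each of the three relativized formulas by syntactic induction on the structure of $\phi$, guided by the principle that relativizing to an interval or ray means restricting the range of every quantifier to that interval or ray while leaving atomic formulas untouched. I would treat the open-interval case $\phi^{(y,z)}$ in detail, since the two ray cases are strictly simpler (a single endpoint rather than two) and follow by the same recipe.

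First I would give the recursive definition. For atomic $\phi$ (that is, $x_p < x_q$ or $P_a(x_p)$), set $\phi^{(y,z)} := \phi$; relativization commutes with the Boolean connectives, e.g. $(\psi \wedge \chi)^{(y,z)} := \psi^{(y,z)} \wedge \chi^{(y,z)}$ and $(\neg\psi)^{(y,z)} := \neg(\psi^{(y,z)})$; and the quantifier clauses are where the actual relativization happens:
\[(\exists w\,\psi)^{(y,z)} := \exists w\,(y < w \wedge w < z \wedge \psi^{(y,z)}), \qquad (\forall w\,\psi)^{(y,z)} := \forall w\,((y < w \wedge w < z) \to \psi^{(y,z)}),\]
taking $w$ distinct from $y$ and $z$.

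Then I would prove the claimed equivalence by induction on $\phi$, quantifying over all $A$-words $W$, positions $i<j$, and assignments $\overline{x}^W$ taking values in $|W(i,j)|$. The base case is the conceptual heart: because the order $<^{W(i,j)}$ and the letter predicates of $W(i,j)$ are by definition the restrictions of those of $W$ to the set $|W(i,j)|$, an atomic formula has the same truth value in $W$ and in $W(i,j)$ under an assignment already landing inside the interval. The Boolean cases are immediate, and the quantifier cases reduce to the observation that the guard $y < w \wedge w < z$ forces the witness $w^W$ to range over exactly $|W(i,j)|$, so the induction hypothesis applies to the subformula $\psi$.

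The main obstacle here is not mathematical depth but bookkeeping: one must check that the guards use the endpoint variables $y,z$ correctly and, crucially, that the standing hypothesis that the free variables already lie in the interval is precisely what makes the atomic step go through — it is needed both to evaluate the atoms in $W(i,j)$ at all and to match their truth values with $W$. For the rays $\phi^{{<}y}$ and $\phi^{{>}y}$ I would simply drop one guard, relativizing quantifiers to $\{w : w < y\}$ and to $\{w : w > y\}$ respectively, and the same induction then goes through verbatim.
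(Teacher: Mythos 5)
Your proposal is correct and follows essentially the same route as the paper: define the relativized formulas by induction on the structure of $\phi$, leaving atomic formulas unchanged, commuting with Boolean connectives, and guarding quantifiers by $y < w \wedge w < z$ (or one inequality for the ray cases). In fact you spell out the verification by induction in more detail than the paper, which only gives the recursive definition and leaves the check (and the ray cases) to the reader.
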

\begin{proof}
The formulas $\phi^{(y,z)}$, $\phi^{{<}y}$ and $\phi^{{>}y}$ are defined by induction on the complexity of $\phi$. We only give the definitions for $\phi^{(y,z)}$, leaving the rest of the proof to the reader. For any atomic formula $\phi$, define $\phi^{(y,z)}$ to be $\phi$. For any Boolean combination $\phi$ of formulas $\phi_1$, $\phi_2$, define $\phi^{(y,z)}$ to be the same Boolean combination of $\phi_1^{(y,z)}$ and $\phi_2^{(y,z)}$. Finally, if $\phi$ is of the form $\exists x_i \psi$ or $\forall x_i \psi$, define $\phi^{(y,z)}$ by, respectively,
\begin{align*}
&\exists x_i (y < x_i \wedge x_i < z \wedge \psi^{(y,z)}), \text{ or }\\
&\forall x_i ( (y < x_i \wedge x_i < z) \to \psi^{(y,z)}).\qedhere
\end{align*}
\end{proof}

In model theory, one considers a more general version of the equivalence relations $\equiv_k$ introduced in Section~\ref{sec:logic}. For every $n,k \geq 0$, there is an equivalence relation $\equiv_{n,k}$ between models with assignments of $n$ first-order variables: two such models with assignments, $(U,\overline{i})$ and $(V,\overline{j})$, are \emph{$\equiv_{n,k}$-equivalent} if they satisfy the same formulas $\phi(x_1,\dots,x_n)$ of quantifier depth $\leq k$; notation $U,\overline{i} \equiv_{n,k} V,\overline{j}$. In this context, the tuples $\overline{i}$ and $\overline{j}$ are called \emph{parameters}. We call the models with parameters $(U,\overline{i})$ and $(V,\overline{j})$ \emph{elementarily equivalent} if $U,\overline{i} \equiv_{n,k} V,\overline{j}$ for every $k \geq 0$; notation $U,\overline{i} \equiv V,\overline{j}$.

The \emph{$k$-round EF-game} on $(U,\overline{i})$ and $(V,\overline{j})$, where $\overline{i} \in U^n$ and $\overline{j} \in V^n$, is played exactly as the one without parameters, but the winning condition now requires that the map defined by $u_p \mapsto v_p$ and $i_q \mapsto j_q$ is an isomorphism between the substructure $\{u_p \ | \ 1 \leq p \leq k\} \cup \{i_q \ | \ 1 \leq q \leq n\}$ of $U$ and the substructure $\{v_p \ | \ 1 \leq p \leq k\} \cup \{j_q \ | \ 1 \leq q \leq n\}$ of $V$.
In the case of $A$-words, the equivalence relation $\equiv_{n,k}$ can be reduced to the relation $\equiv_k$ holding between intervals of the models, as follows. 
For an $n$-tuple $\overline{i}$ in an $A$-word $U$, we say an open ray or interval with endpoints in $\overline{i}$ is \emph{$\overline{i}$-minimal} if it does not contain any elements from $\overline{i}$.
\begin{proposition}\label{prop:EFgamesparam}
Let $U$, $V$ be $A$-words, $\overline{i} \in |U|^n$ and $\overline{j} \in |V|^n$. The following are equivalent:
\begin{enumerate}
\item $U, \overline{i} \equiv_{n,k} V, \overline{j}$;
\item $U(i_p) = V(j_p)$ for all $1 \leq p \leq n$, the order $<^U$ on $\overline{i}$ coincides with the order $<^V$ on $\overline{j}$, and every $\overline{i}$-minimal open interval or ray in $U$ with endpoints in $\overline{i}$ is $k$-equivalent to the open interval or ray in $V$ with corresponding endpoints in $\overline{j}$.
\end{enumerate}
\end{proposition}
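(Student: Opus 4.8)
The plan is to prove the equivalence through the parametrized Ehrenfeucht--Fraïssé theorem recalled just above: statement (1) is by definition equivalent to player $\exists$ having a winning strategy in the $k$-round EF game on $(U,\overline i)$ and $(V,\overline j)$, so it suffices to characterize when such a strategy exists. Write $p_1 <^U \cdots <^U p_m$ for the distinct values among $i_1,\dots,i_n$ listed in increasing order, and $q_1 <^V \cdots <^V q_{m'}$ for those among $\overline j$. The positions in $\overline i$ cut $U$ into the left ray $U\be{p_1}$, the right ray $U\af{p_m}$, and the $\overline i$-minimal open intervals $U(p_s,p_{s+1})$ for $1 \leq s \leq m-1$; the positions in $\overline j$ cut $V$ analogously. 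Every position of $U$ lies in exactly one of these pieces or equals some $p_s$, and likewise for $V$.

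For the implication (2) $\To$ (1), note first that (2) forces $m = m'$ and that $p_s \mapsto q_s$ is a letter- and order-preserving bijection of the distinct parameter values, so that the $s$-th piece of $U$ corresponds to the $s$-th piece of $V$, and by hypothesis corresponding pieces are $k$-equivalent. For each piece I would fix a winning strategy for $\exists$ in the $k$-round (parameter-free) EF game on that pair of pieces, which exists by the Ehrenfeucht--Fraïssé theorem since the pieces are $\equiv_k$. The global strategy of $\exists$ is then: if $\forall$ plays a parameter $p_s$ (or $q_s$), reply with $q_s$ (or $p_s$); if $\forall$ plays inside a piece, reply inside the corresponding piece according to the fixed strategy for that piece. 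Because the pieces are linearly ordered in the same way in $U$ and $V$ and each per-piece map is order- and letter-preserving, the map induced after $k$ rounds, together with $i_q \mapsto j_q$, is an isomorphism of the chosen substructures, so $\exists$ wins.

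For the implication (1) $\To$ (2), fix a winning strategy for $\exists$. Since the winning condition already constrains the images of the parameters, playing zero further rounds shows that $i_q \mapsto j_q$ preserves letters and order, giving the first two clauses of (2) (in particular $m = m'$). For the third clause, fix one piece of $U$, say $U(p_s,p_{s+1})$, and consider plays of the full game in which $\forall$ only ever chooses positions inside $U(p_s,p_{s+1})$ or inside $V(q_s,q_{s+1})$. The key observation is that any winning play keeps the current partial map a partial isomorphism at every stage, since a restriction of an isomorphism is itself an isomorphism onto its image; hence whenever $\forall$ plays $u$ with $p_s <^U u <^U p_{s+1}$, the reply $v$ dictated by the winning strategy must satisfy $q_s <^V v <^V q_{s+1}$, i.e.\ it lands in the corresponding piece. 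Thus $\exists$'s strategy restricts to a strategy on the pair $(U(p_s,p_{s+1}),V(q_s,q_{s+1}))$, and after $k$ rounds the induced map is a partial isomorphism, so $\exists$ wins the $k$-round game on these pieces and they are $k$-equivalent. The same argument applies verbatim to the two rays.

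The main obstacle, and essentially the only point requiring care, is the bookkeeping in the decomposition of the game. In the forward direction one must check that $\exists$'s reply genuinely stays in the matching piece: this is exactly where the order-preservation-relative-to-the-parameters property of partial isomorphisms is invoked. In the backward direction one must verify that assembling the per-piece strategies yields a global partial isomorphism with no cross-piece conflicts; this rests on the fact that the relative order of two positions lying in different pieces is already determined by which pieces they occupy together with the common order of the parameters, so the per-piece order-isomorphisms automatically glue into a global one. Everything else is routine EF-game manipulation, and the degenerate case $n = 0$ reduces (2) to $U \equiv_k V$, recovering the parameter-free statement.
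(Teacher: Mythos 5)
Your proof is correct, and the direction (2)~$\Rightarrow$~(1) coincides with the paper's argument: compose the per-piece winning strategies, answering parameters with parameters, and observe that the common ordering of the parameter tuples makes the final map glue into a global partial isomorphism. Where you genuinely diverge is in (1)~$\Rightarrow$~(2). The paper argues syntactically: it invokes the relativization construction of Lemma~\ref{lem:relativize}, noting that if a sentence $\phi$ of depth $\leq k$ holds in $U(i_p,i_q)$ then the relativized formula $\phi^{(x_p,x_q)}$ --- which has the \emph{same} quantifier depth --- holds of $(U,\overline{i})$, hence of $(V,\overline{j})$, hence $\phi$ holds in $V(j_p,j_q)$; the letter and order conditions come from the depth-$0$ formulas $P_a(x_p)$ and $x_p<x_q$. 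You instead stay entirely inside the game: you restrict $\exists$'s winning strategy to plays confined to one piece, and justify that her replies cannot escape the corresponding piece because every intermediate configuration of a winning play is already a partial isomorphism (a restriction of the final isomorphism), so order-preservation relative to the parameters pins the reply down. That observation is the one delicate point in your argument and you state and justify it correctly. The trade-off: the paper's route is shorter because Lemma~\ref{lem:relativize} is already in place (and is reused elsewhere, e.g.\ in Proposition~\ref{prop:onetypes}), and it silently relies on relativization preserving quantifier depth; your route needs no syntactic machinery at all but does the extra bookkeeping of checking that the restricted strategy is well defined on both sides of the board. Both are complete proofs.
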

\begin{proof}
For (1) implies (2), we have the same letters at $i_p$ and $j_p$ because each $P_a(x)$ is a formula of quantifier depth $0$, and the orders coincide because $x_p < x_q$ is of quantifier depth $0$. If $(i_p,i_q)$ and $(j_p,j_q)$ are corresponding minimal open intervals, and $U(i_p,i_q) \models \phi$ for some formula $\phi$ of quantifier depth $k$, then $U, \overline{i} \models \phi^{(x_p,x_q)}$, so, by assumption, $V, \overline{j} \models \phi^{(x_p,x_q)}$, and hence $V(j_p,j_q) \models \phi$. So $U(i_p,i_q) \equiv_k V(j_p,j_q)$. The proof for minimal open rays is analogous.

For (2) implies (1), we describe a winning strategy for player $\exists$ in the $k$-round game on $(U,\overline{i})$ and $(V,\overline{j})$. If $\forall$ plays one of the $i_p$ and $j_p$, $\exists$ responds with the corresponding parameter in the other model. If $\forall$ plays in $U$, say, at a position $i'$ not in $\overline{i}$, then there is a unique $\overline{i}$-minimal open interval or ray with endpoints in $\overline{i}$ that contains $i'$. The winning strategy for the $k$-round game on this $\overline{i}$-minimal open interval or ray then gives an element for $\exists$ to play in the corresponding open interval or ray in $V$. Since the strategies for $\exists$ on the $\overline{i}$-minimal open intervals and rays are winning by assumption, and the orders on the tuples $\overline{i}$ and $\overline{j}$ coincide, the resulting submodels of $U$ and $V$ are isomorphic.
\end{proof}
Note that a special case of Proposition~\ref{prop:EFgamesparam} is that $i$ $k$-corresponds to $j$ if and only if  $U, i \equiv_{1,k} V, j$. Also note that Proposition~\ref{prop:EFgamesparam} implies in particular that for any $A$-words $U_1, \dots, U_m$ and $V_1, \dots, V_m$ such that $U_i \equiv V_i$ for $i = 1, \dots, m$,  we have $U_1 \cdot \cdots \cdot U_m \equiv_k V_1 \cdot \cdots \cdot V_m$, which is a special case of Proposition~\ref{prop:substinv-k}.

\subsection{Comparison with notions from model theory}
We defined the \emph{type} $t^U(i)$ of a position $i$ in an $A$-word $U$ to be $([U\be{i}]_{\equiv},U(i),[U\af{i}]_{\equiv})$. This corresponds to what is called a \emph{complete $1$-type with respect to an empty set of parameters} in model theory, which is usually defined as the set of those formulas $\theta(x)$ such that $U,i \models \theta(x)$. Indeed, $t^U(i)$ contains precisely the same information as this set, as the following proposition shows. 
\begin{proposition}\label{prop:onetypes}
For any $A$-words $U, V$, $i \in |U|$ and $j \in |V|$, the following are equivalent:
\begin{enumerate}
\item $t^U(i) = t^V(j)$;
\item for every $k \geq 0$, $i$ $k$-corresponds to $j$;
\item $U,i \equiv V,j$.
\end{enumerate}
\end{proposition}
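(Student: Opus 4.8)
The plan is to establish the chain of equivalences by proving (1) $\iff$ (2) and (2) $\iff$ (3) separately; each reduces to unwinding a definition and invoking a result already in hand.

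First I would handle (1) $\iff$ (2), which is essentially definitional. Expanding the equality of triples in $\AL \times A \times \AL$, the statement $t^U(i) = t^V(j)$ says exactly that $U\be{i} \equiv V\be{j}$, that $U(i) = V(j)$, and that $U\af{i} \equiv V\af{j}$. Since $X \equiv Y$ means $X \equiv_k Y$ for every $k$, I would pull the two ``for all $k$'' quantifiers to the front and regroup: the conjunction becomes ``for every $k$, $U\be{i} \equiv_k V\be{j}$, $U(i) = V(j)$, and $U\af{i} \equiv_k V\af{j}$'', which is precisely the assertion that $i$ $k$-corresponds to $j$ for every $k$, i.e.\ condition (2).

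Next I would treat (2) $\iff$ (3) by specializing Proposition~\ref{prop:EFgamesparam} to $n = 1$, with the singleton parameter tuples $(i)$ and $(j)$. The key observation is that with only one parameter there are no open intervals lying strictly between two parameters, so the only $\overline{i}$-minimal open rays or intervals with endpoints in $(i)$ are the left ray $U\be{i}$ and the right ray $U\af{i}$, and the order condition on a one-element tuple is vacuous. Consequently clause (2) of Proposition~\ref{prop:EFgamesparam} collapses to exactly $U(i) = V(j)$ together with $U\be{i} \equiv_k V\be{j}$ and $U\af{i} \equiv_k V\af{j}$ --- that is, to ``$i$ $k$-corresponds to $j$''. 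Thus that proposition yields $U, i \equiv_{1,k} V, j$ if and only if $i$ $k$-corresponds to $j$ (the special case already noted after Proposition~\ref{prop:EFgamesparam}); taking the conjunction over all $k \geq 0$ turns the left side into $U, i \equiv V, j$ and the right side into condition (2).

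I do not expect a genuine obstacle here, since all the substance lives in Proposition~\ref{prop:EFgamesparam} and in the definition of $\equiv$ as the common refinement of the relations $\equiv_k$. The only point demanding a little care is the quantifier bookkeeping: checking that the three conditions are literally the same conjunction of per-$k$ requirements, merely grouped in three different ways. Once that is made explicit, the proof is complete.
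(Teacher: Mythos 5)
Your proof is correct and follows essentially the same route as the paper's: both rest on the $n=1$ case of Proposition~\ref{prop:EFgamesparam} together with the definitional unwinding of $t^U(i)=t^V(j)$. The only cosmetic difference is that the paper closes a cycle $(1)\Rightarrow(2)\Rightarrow(3)\Rightarrow(1)$, re-running the relativization argument of Lemma~\ref{lem:relativize} for the step $(3)\Rightarrow(1)$, whereas you invoke the full biconditional of Proposition~\ref{prop:EFgamesparam} (inside whose proof that relativization argument already lives) to obtain $(2)\Leftrightarrow(3)$ directly.
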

\begin{proof}
(1) implies (2) is clear from the definitions. By the remark following Proposition~\ref{prop:EFgamesparam}, (2) implies that $U, i \equiv_{1,k} V, j$ for all $k$, which gives (3). For (3) implies (1), note first that we have $U(i) = V(j)$ using the formulas $P_a(x)$. Also, if $U\be{i} \models \phi$ for some sentence $\phi$, then $U,i \models \phi^{<x}(x)$ by Lemma~\ref{lem:relativize}, so $V, j \models \phi^{<x}(x)$ by (3), and hence $V\be{j} \models \phi$ by Lemma~\ref{lem:relativize}. Thus, $U\be{i} \equiv V\be{j}$. Similarly, $U\af{i} \equiv V\af{j}$. Thus, $t^U(i) = t^V(j)$.
\end{proof}

By Proposition~\ref{prop:onetypes}, a model is \emph{weakly saturated} in our sense if and only if  the model realizes all the complete $1$-types without parameters that are consistent with it. We now prove that what we called \emph{$\omega$-saturated} above is equivalent to the usual model-theoretic notion by the same name; indeed, in Proposition~\ref{prop:omegasatequiv} below, (1) is our definition of $\omega$-saturated, and (2) is well-known to be equivalent to the standard model-theoretic definition of $\omega$-saturated, cf.~\cite[Proposition~4.3.2]{Mar2002}.

\begin{proposition}\label{prop:omegasatequiv}
Let $U$ be an $A$-word. The following are equivalent:
\begin{enumerate}
\item Every closed interval in $U$ is weakly saturated.
\item For any $n$-tuple $\overline{i}$ in $U$ and for any model with $n$ parameters $(V,\overline{j})$ such that $U,\overline{i} \equiv V,\overline{j}$, if $j' \in |V|$, then there exists $i' \in |U|$ such that $U,\overline{i}i' \equiv V,\overline{j}j'$.
\end{enumerate}
\end{proposition}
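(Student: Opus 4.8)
The plan is to reduce both implications to Proposition~\ref{prop:EFgamesparam}, which translates equivalence-with-parameters $\equiv_{n,k}$ into the $\equiv_k$-equivalence of the corresponding $\overline i$-minimal open intervals and rays. Throughout I will use the observation, recorded after the definition of intervals in Section~\ref{sec:logic}, that since the order is discrete with endpoints, every minimal open interval $U(i_p,i_q)$ between $<$-consecutive parameters and every ray $U\be{i_1}$ or $U\af{i_n}$ from an extreme parameter out to an endpoint of $U$ is \emph{already a closed interval} of $U$ (e.g.\ $U\be{i_1}=U[\bot,\,\mathrm{pred}(i_1)]$), and is therefore weakly saturated whenever (1) holds.

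For $(1)\Rightarrow(2)$: by Proposition~\ref{prop:EFgamesparam}, the hypothesis $U,\overline i\equiv V,\overline j$ already gives matching letters and order on the parameters, together with $U(P)\equiv V(P)$ for each pair of corresponding $\overline i$-minimal pieces. The new point $j'$ lies in a unique $\overline j$-minimal piece: either $j'=j_p$ for some $p$, or $j'$ lies in a minimal open interval, or $j'$ lies in the left or right ray. If $j'=j_p$, take $i'=i_p$. Otherwise let $P_V$ be that $\overline j$-minimal piece and $P_U$ the corresponding $\overline i$-minimal piece in $U$; then $P_U$ is a closed interval, hence weakly saturated by (1), and $P_U\equiv P_V$. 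Thus $t^{P_V}(j')$ is consistent with $P_U$, so I can pick $i'\in P_U$ with $t^{P_U}(i')=t^{P_V}(j')$. This single equation records $U(i')=V(j')$ and splits $P_U$ (resp.\ $P_V$) exactly into the two new $\overline i i'$-minimal (resp.\ $\overline j j'$-minimal) pieces adjacent to $i'$ (resp.\ $j'$), showing them pairwise $\equiv$; all other minimal pieces are untouched, so Proposition~\ref{prop:EFgamesparam} yields $U,\overline i i'\equiv V,\overline j j'$.

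For $(2)\Rightarrow(1)$: let $W=U[a,b]$ be a closed interval (the case $a=b$ is trivial) and let $t=t^{W'}(p)$ be consistent with $W$, so $W'\equiv W$ and $p\in|W'|$. I will build a model with parameters realizing $t$ and transport the witness back into $U$ via (2). Set $V:=U\be{a}\cdot W'\cdot U\af{b}$, and let $a',b'$ be the first and last positions of the $W'$-part. Since $W'$ has the same first and last letters as $W$ (both first-order properties), cancelling them through Corollary~\ref{c:cancel} shows that the interior of $W'$ is $\equiv U(a,b)$; combined with $V\be{a'}=U\be{a}$ and $V\af{b'}=U\af{b}$, Proposition~\ref{prop:EFgamesparam} gives $U,(a,b)\equiv V,(a',b')$. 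Applying (2) with $j':=p$ (which satisfies $a'\le p\le b'$) produces $i'\in|U|$ with $U,(a,b,i')\equiv V,(a',b',p)$; the order clause forces $i'\in[a,b]=|W|$. A final use of Proposition~\ref{prop:EFgamesparam}, splitting at $i'$ and at $p$ and reattaching the boundary letters $U(a),U(b)$ by concatenation-invariance (Corollary~\ref{cor:substinv}), yields $t^W(i')=t^{W'}(p)=t$. Hence $t$ is realized in $W$, so $W$ is weakly saturated.

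The main obstacle is the $(2)\Rightarrow(1)$ direction, specifically isolating the correct witnessing model $V$: one must faithfully reproduce the entire left and right context of the interval inside $U$ while swapping its interior for $W'$, and the verification that the two interiors are elementarily equivalent is precisely where the first/last-letter cancellation of Corollary~\ref{c:cancel} is essential. By contrast, $(1)\Rightarrow(2)$ is comparatively routine once one notices that all the relevant minimal pieces—the open intervals between consecutive parameters and the two end rays—are closed intervals, and hence weakly saturated under (1).
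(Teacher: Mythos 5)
Your proof is correct and follows essentially the same route as the paper's: both directions reduce to Proposition~\ref{prop:EFgamesparam}, and for $(2)\Rightarrow(1)$ you construct the very same auxiliary word $V = U\be{a}\cdot W'\cdot U\af{b}$ with the boundary positions as parameters. The only difference is that you spell out, via Corollary~\ref{c:cancel}, why the interiors of $W$ and $W'$ are elementarily equivalent --- a detail the paper's proof leaves implicit when it asserts $V,j_1j_2\equiv U,i_1i_2$.
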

\begin{proof}
Suppose that (1) holds. Let $(U,\overline{i}) \equiv (V,\overline{j})$ and let $j' \in |V|$. If $j'$ is in $\overline{j}$, then pick the corresponding $i'$ in $\overline{i}$. Otherwise, pick the $\overline{j}$-minimal open interval $(j_p,j_q)$ that contains $j'$ (the proof is the same if $j'$ lies in a $\overline{j}$-minimal open ray $V\be{j_p}$ or $V\af{j_q}$). By Proposition~\ref{prop:EFgamesparam}, $U(i_p,i_q) \equiv V(j_p,j_q)$. By (1), $U(i_p,i_q)$ is weakly saturated (using that every open interval occurs as a closed interval since the order is discrete), so there exists $i' \in (i_p,i_q)$ that realizes the same type as $j'$. By Proposition~\ref{prop:EFgamesparam}, we obtain $U,\overline{i}i' \equiv V,\overline{j}j'$.

Suppose that (2) holds. Let $[i_1,i_2]$ be a closed interval in $U$. We show that $U[i_1,i_2]$ is weakly saturated. Let $V' \equiv U[i_1,i_2]$ and $j' \in |V'|$. Let $V$ be the $A$-word $U\be{i_1} \cdot V' \cdot U\af{i_2}$, and denote by $j_1$ and $j_2$ the first and last position of $V'$ in $V$. By Proposition~\ref{prop:EFgamesparam}, $V,j_1j_2 \equiv U, i_1i_2$. By (2), pick $i'$ in $U$ such that $U, i_1i_2i' \equiv V,j_1j_2j'$. Since $j' \in [j_1,j_2]$ in $V$, we must have $i' \in [i_1,i_2]$ in $U$. Using Proposition~\ref{prop:EFgamesparam}, we get $U[i_1,i_2],i' \equiv V[j_1,j_2],j'$, and $V[j_1,j_2] = V'$ by definition. Thus, $t^{U[i_1,i_2]}(i') = ([V'\be{j'}]_{\equiv},V'(j'),[V'\af{j'}]_{\equiv}) = t^{V'}(j')$, so the type of $j'$ in $V'$ is realized in $U[i_1,i_2]$, as required.
\end{proof}

The next proposition now follows from \cite[Cor.~10.2.2]{Hod1993} or \cite[Thm.~4.3.12]{Mar2002}.
\begin{proposition}\label{prop:satexist}
For any $u \in \AL$, there exists an $\omega$-saturated $A$-word $U$ in the elementary equivalence class $u$.
\end{proposition}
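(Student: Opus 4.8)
The plan is to reduce the statement to the classical model-theoretic fact that every complete first-order theory in a countable signature has an $\omega$-saturated model, and then to translate between the external notion of $\omega$-saturation and our intrinsic interval notion by means of Proposition~\ref{prop:omegasatequiv}.

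First I would fix an $A$-word $W$ in the class $u$, which is possible because, as recalled in the discussion of $\AL$, elements of $\AL$ are precisely the elementary equivalence classes of $A$-words. Let $T := \mathrm{Th}(W)$ be the complete first-order theory of $W$ in the signature consisting of $<$ together with the predicates $(P_a)_{a \in A}$. Since $A$ is finite this signature is finite, hence countable, so the classical existence theorem for saturated models (\cite[Cor.~10.2.2]{Hod1993} or \cite[Thm.~4.3.12]{Mar2002}) applies and produces an $\omega$-saturated model $U$ of $T$.

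Next I would verify that $U$ is again an $A$-word lying in the class $u$. The property of being an $A$-word is exactly the conjunction of the finitely many first-order axioms of $\TA$ (the linear-order axioms, discreteness with endpoints, and the partition axioms for $(P_a)_{a\in A}$). Since $W$ is an $A$-word we have $W \models \TA$, so $\TA \subseteq T$; as $U \models T$, the model $U$ satisfies $\TA$ and is therefore an $A$-word. Moreover $U \models T = \mathrm{Th}(W)$ gives $U \equiv W$, so $U$ belongs to the class $u$. Finally I would invoke Proposition~\ref{prop:omegasatequiv}: the $\omega$-saturation delivered by the classical theorem is exactly condition (2) of that proposition (this being the standard definition, cf.~\cite[Prop.~4.3.2]{Mar2002}), so condition (1) holds as well, i.e.\ every closed interval of $U$ is weakly saturated, which is precisely our notion of $\omega$-saturation. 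Thus $U$ is an $\omega$-saturated $A$-word in the class $u$.

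I expect the only points needing care to be the two pieces of bookkeeping just indicated: the observation that $\TA$ is a finite first-order theory, so that the abstractly produced saturated model is automatically an $A$-word in the prescribed elementary equivalence class, and the correct reading of Proposition~\ref{prop:omegasatequiv} as the bridge between the model-theoretic definition of $\omega$-saturation and the interval condition used throughout the paper. Beyond this, there is no genuine obstacle, since the substance of the result is entirely contained in the cited existence theorem.
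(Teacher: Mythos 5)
Your proposal is correct and follows essentially the same route as the paper, which derives the proposition directly from the classical existence theorem for $\omega$-saturated models (\cite[Cor.~10.2.2]{Hod1993} or \cite[Thm.~4.3.12]{Mar2002}) immediately after establishing Proposition~\ref{prop:omegasatequiv} as the bridge between the standard model-theoretic notion and the interval-based one. The bookkeeping you spell out (that $\TA$ is finitely axiomatized, so the abstractly produced model is again an $A$-word in the same class) is left implicit in the paper but is exactly the right justification.
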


\bibliographystyle{amsplain}

\def\cprime{$'$} \def\cprime{$'$}
\providecommand{\bysame}{\leavevmode\hbox to3em{\hrulefill}\thinspace}
\providecommand{\MR}{\relax\ifhmode\unskip\space\fi MR }
\providecommand{\MRhref}[2]{%
  \href{http://www.ams.org/mathscinet-getitem?mr=#1}{#2}
}
\providecommand{\href}[2]{#2}
\bibliography{goolsteinberg}

\def\cprime{$'$} \def\cprime{$'$}
\providecommand{\bysame}{\leavevmode\hbox to3em{\hrulefill}\thinspace}
\providecommand{\MR}{\relax\ifhmode\unskip\space\fi MR }
\providecommand{\MRhref}[2]{%
  \href{http://www.ams.org/mathscinet-getitem?mr=#1}{#2}
}
\providecommand{\href}[2]{#2}
\begin{thebibliography}{10}

\bibitem{Almeida:book}
J.~Almeida, \emph{Finite semigroups and universal algebra}, Series in Algebra,
  vol.~3, World Scientific Publishing Co. Inc., River Edge, NJ, 1994,
  Translated from the 1992 Portuguese original and revised by the author.

\bibitem{Almeida2000}
J.~Almeida, \emph{On hyperdecidable pseudovarieties of simple semigroups},
  Internat. J. Algebra Comput. \textbf{10} (2000), no.~2, 261--284.

\bibitem{Almeida05}
\bysame, \emph{Profinite groups associated with weakly primitive
  substitutions}, Fundam. Prikl. Mat. \textbf{11} (2005), no.~3, 13--48.
  \MR{2176678}

\bibitem{AC2009}
J.~Almeida and A.~Costa, \emph{Infinite-vertex free profinite semigroupoids and
  symbolic dynamics}, J. Pure Appl. Algebra \textbf{213} (2009), no.~5,
  605--631. \MR{2494356}

\bibitem{ACCZ2017}
J.~Almeida, A.~Costa, J.~C. Costa, and M.~Zeitoun, \emph{The linear nature of
  pseudowords}, arXiv:1702.08083, February 2017.

\bibitem{ACZ09a}
J.~Almeida, J.~C. Costa, and M.~Zeitoun, \emph{Some structural properties of
  the free profinite aperiodic semigroup}, 2nd AutoMathA Conference, 2009.

\bibitem{ACZ14}
\bysame, \emph{Iterated periodicity over finite aperiodic semigroups}, European
  J. Combin. \textbf{37} (2014), 115--149. \MR{3138595}

\bibitem{ACZ2015}
\bysame, \emph{Mc{C}ammond's normal forms for free aperiodic semigroups
  revisited}, LMS J. Comput. Math. \textbf{18} (2015), no.~1, 130--147.

\bibitem{AKK2016}
J.~Almeida, O.~Klima, and M.~Kunc, \emph{The omega-inequality problem for
  concatenation hierarchies of star-free languages}, arXiv:1601.08237, January
  2016.

\bibitem{AV2006}
J.~Almeida and M.~V. Volkov, \emph{Subword complexity of profinite words and
  subgroups of free profinite semigroups}, Internat. J. Algebra Comput.
  \textbf{16} (2006), no.~2, 221--258.

\bibitem{BloEsi2005}
S.~L. Bloom and Z.~{\'E}sik, \emph{The equational theory of regular words},
  Information and Computation \textbf{197} (2005), 55--89.

\bibitem{Boj2015}
M.~Boja\'{n}czyk, \emph{{Recognisable Languages over Monads}},
  arXiv:1502.04898v1, 2015.

\bibitem{Brz1976}
J.~A. Brzozowski, \emph{Hierarchies of aperiodic languages}, Theoret.
  Informatics Appl. \textbf{10} (1976), no.~2, 33--49.

\bibitem{Buc1960}
J.~R. B\"uchi, \emph{Weak second-order arithmetic and finite automata}, Z.
  Math. Logik Grundlag. Math. \textbf{6} (1960), no.~1-6, 66--92.

\bibitem{CCP2011}
O.~Carton, T.~Colcombet, and G.~Puppis, \emph{Regular languages of words over
  countable linear orderings}, Proceedings of the 38th International Colloquium
  on Automata, Languages and Programming (ICALP), LNCS, vol. 6756, Springer,
  2011, pp.~125--136.

\bibitem{CK}
C.-C. Chang and J.~H. Keisler, \emph{Model theory}, third ed., North-Holland,
  Amsterdam-London, 1990.

\bibitem{Doe1987}
H.~C. Doets, \emph{Completeness {And} {Definability}: {Applications} {Of} {The}
  {Ehrenfeucht} {Game} {In} {Second-{Order}} {And} {Intensional} {Logic}},
  Ph.D. thesis, Universiteit van Amsterdam, May 1987.

\bibitem{Eilenberg76}
S.~Eilenberg, \emph{Automata, languages, and machines. {V}ol. {B}}, Academic
  Press, New York, 1976, With two chapters (``Depth decomposition theorem'' and
  ``Complexity of semigroups and morphisms'') by Bret Tilson, Pure and Applied
  Mathematics, Vol. 59.

\bibitem{Geh2016}
M.~Gehrke, \emph{Stone duality, topological algebra, and recognition}, J. Pure
  Appl. Algebra \textbf{220} (2016), no.~7, 2711--2747.

\bibitem{GGP2008}
M.~Gehrke, S.~Grigorieff, and J.-{\'E}. Pin, \emph{Duality and equational
  theory of regular languages}, ICALP 2008, Part II (L.~Aceto et~al., ed.),
  Lect. Notes Comput. Sci., no. 5126, Springer, 2008, pp.~246--257.

\bibitem{GKP2016}
M.~Gehrke, A.~Krebs, and J.-\'{E}. Pin, \emph{Ultrafilters on words for a
  fragment of logic}, Theoretical Computer Science \textbf{610 A} (2016),
  37--58.

\bibitem{GPR2016}
M.~Gehrke, D.~Petrisan, and L.~Reggio, \emph{{The Sch\"utzenberger product for
  syntactic spaces}}, 43rd International Colloquium on Automata, Languages and
  Programming (ICALP 2016), 2016.

\bibitem{GS2016STACS}
S.~J.~v. Gool and B.~Steinberg, \emph{Pro-aperiodic monoids via saturated
  models}, 34th Symposium on Theoretical Aspects of Computer Science (STACS
  2017) (H.~Vollmer and B.~Vall{\'e}e, eds.), Leibniz International Proceedings
  in Informatics, vol.~66, Dagstuhl, 2017, pp.~39:1--39:14.

\bibitem{Henckell04}
K.~Henckell, \emph{Idempotent pointlike sets}, Internat. J. Algebra Comput.
  \textbf{14} (2004), no.~5-6, 703--717, International Conference on Semigroups
  and Groups in honor of the 65th birthday of Prof. John Rhodes.

\bibitem{Henckell10}
\bysame, \emph{Stable pairs}, Internat. J. Algebra Comput. \textbf{20} (2010),
  no.~2, 241--267. \MR{2646750}

\bibitem{HRS2010}
K.~Henckell, J.~Rhodes, and B.~Steinberg, \emph{A profinite approach to stable
  pairs}, Internat. J. Algebra Comput. \textbf{20} (2010), no.~2, 269--285.

\bibitem{HRS2012}
\bysame, \emph{An effective lower bound for group complexity of finite
  semigroups and automata}, Trans. Amer. Math. Soc. \textbf{364} (2012), no.~4,
  1815--1857. \MR{2869193}

\bibitem{Hod1993}
W.~Hodges, \emph{Model theory}, Encyclopedia of mathematics and its
  applications, vol.~42, Cambridge University Press, 1993.

\bibitem{HK2014}
M.~Huschenbett and M.~Kufleitner, \emph{{Ehrenfeucht-Fraisse Games on
  Omega-Terms}}, STACS, 2014, pp.~374--385.

\bibitem{KR65}
K.~Krohn and J.~Rhodes, \emph{Algebraic theory of machines. {I}. {P}rime
  decomposition theorem for finite semigroups and machines}, Trans. Amer. Math.
  Soc. \textbf{116} (1965), 450--464.

\bibitem{KR68}
\bysame, \emph{Complexity of finite semigroups}, Ann. of Math. (2) \textbf{88}
  (1968), 128--160. \MR{MR0236294 (38 \#4591)}

\bibitem{LM2013}
M.~Lohrey and C.~Mathissen, \emph{Isomorphism of regular trees and words},
  Inform. and Comput. \textbf{224} (2013), 71--105. \MR{3016459}

\bibitem{Mar2002}
D.~Marker, \emph{{Model Theory: An Introduction}}, Graduate texts in
  mathematics, vol. 217, Springer-Verlag New York, 2002.

\bibitem{MAC91}
J.~P. McCammond, \emph{The solution to the word problem for the relatively free
  semigroups satisfying {$T^a=T^{a+b}$} with {$a\geq 6$}}, Internat. J. Algebra
  Comput. \textbf{1} (1991), no.~1, 1--32. \MR{1112297}

\bibitem{Cam2001}
\bysame, \emph{Normal forms for free aperiodic semigroups}, Int. J. Algebra
  Comput. \textbf{11} (2001), no.~5, 581--625.

\bibitem{McNPap1971}
R.~McNaughton and S.~Papert, \emph{Counter-free automata}, M.I.T. Research
  Monograph, no.~65, M.I.T. Press, 1971.

\bibitem{Pin2016}
J.-\'{E}. Pin, \emph{The dot-depth hierarchy, 45 years later}, Preprint
  available from the author's homepage.

\bibitem{PlaZei2014}
T.~Place and M.~Zeitoun, \emph{{Going Higher in the First-Order Quantifier
  Alternation Hierarchy on Words}}, Proceedings ICALP, 2014.

\bibitem{RS01}
J.~Rhodes and B.~Steinberg, \emph{Profinite semigroups, varieties, expansions
  and the structure of relatively free profinite semigroups}, Internat. J.
  Algebra Comput. \textbf{11} (2001), no.~6, 627--672. \MR{MR1880372
  (2002j:20114)}

\bibitem{RS2009}
\bysame, \emph{{The q-theory of Finite Semigroups}}, Springer, 2009.

\bibitem{Ros1982}
J.~G. Rosenstein, \emph{Linear orderings}, Academic Press Inc. [Harcourt Brace
  Jovanovich Publishers], 1982.

\bibitem{Sch1965}
M.-P. Sch\"utzenberger, \emph{On finite monoids having only trivial subgroups},
  Information and Control \textbf{8} (1965), 190--194.

\bibitem{Shelah1975}
S.~Shelah, \emph{The monadic theory of order}, Ann. of Math. (2) \textbf{102}
  (1975), no.~3, 379--419.

\bibitem{Steinberg2010}
B.~Steinberg, \emph{A combinatorial property of ideals in free profinite
  monoids}, J. Pure Appl. Algebra \textbf{214} (2010), no.~9, 1693--1695.

\bibitem{Sto1936}
M.~H. Stone, \emph{{The Theory of Representation for Boolean Algebras}}, Trans.
  Amer. Math. Soc. \textbf{74} (1936), no.~1, 37--111.

\bibitem{Str1994}
H.~Straubing, \emph{Finite automata, formal logic, and circuit complexity},
  Progress in Theoretical Computer Science, Birkh\"auser Boston Inc., Boston,
  1994.

\bibitem{Vaa2003}
J.~V\"a\"an\"anen, \emph{Pseudo-finite model theory}, Mat. Contemp \textbf{24}
  (2003), 169--183.

\end{thebibliography}

\end{document}